\date{}
\newtheorem{theorem}{Theorem}[section]
\newtheorem{lemma}[theorem]{Lemma}
\newtheorem{corollary}[theorem]{Corollary}
\newtheorem{condition}[theorem]{Condition}
\newtheorem{definition}[theorem]{Definition}
\newtheorem{remark}[theorem]{Remark}
\title{\bf Bayesian Variable Selection in Multivariate Nonlinear Regression with Graph Structures}
\author[1]{Yabo Niu \thanks{ybniu@stat.tamu.edu}}
\author[2]{Nilabja Guha \thanks{nilabja\_guha@uml.edu}}
\author[1]{Debkumar De}
\author[3]{Anindya Bhadra}
\author[4]{Veerabhadran Baladandayuthapani \thanks{veera@mdanderson.org}}
\author[1]{Bani K. Mallick \thanks{bmallick@stat.tamu.edu}}
\affil[1]{\normalsize{Department of Statistics, Texas A\&M University}}
\affil[2]{\normalsize{Department of Mathematical Sciences, University of Massachusetts Lowell}}
\affil[3]{\normalsize{Department of Statistics, Purdue University}}
\affil[4]{\normalsize{Department of Biostatistics, The University of Texas MD Anderson Cancer Center}}
\providecommand{\keywords}[1]{\textbf{\textsc{Keywords:}} #1}
\begin{document}

\maketitle

\begin{abstract}
\baselineskip=28pt
	Gaussian graphical models (GGMs) are well-established tools for probabilistic exploration of dependence structures using precision matrices. We develop a Bayesian method to incorporate covariate information in this GGMs setup in a nonlinear seemingly unrelated regression framework.  We propose a joint predictor and graph selection model and develop an efficient collapsed Gibbs sampler algorithm to search the joint model space. Furthermore, we investigate its theoretical variable selection properties. We demonstrate our method on a variety of simulated data, concluding with a real data set from the TCPA project. 
\end{abstract}

\baselineskip=16pt
\vfill\keywords{Bayesian variable selection; Decomposable graph; Gaussian graphical model; Hyper-inverse Wishart prior; Zellner's g-prior.}\vskip 20pt

\newpage
\baselineskip=28pt 
\section{Introduction}

Probabilistic graphical models provide a helpful tool to describe and visualize the  dependence  structures among random variables. Describing  the dependence structures, such as conditional dependence, graphical models can  provide insights into the model properties and interdependence between random variables. A graph comprises of vertices (nodes) connected by edges (links or arcs). In a probabilistic graphical model, the  random variables (single or vector) are represented by the vertices and  probabilistic relationships between these variables are expressed by the edges. An edge may or may not carry directional information. In this paper we concentrate on undirected Gaussian graphical models (GGMs) where the edges do not carry any directional information. Furthermore in this model, the variables follow a multivariate normal distribution with a particular structure on the inverse of the covariance matrix, called the precision or the concentration matrix.

More specifically, we will consider the Bayesian approaches to develop Gaussian graphical models. In a Bayesian setting, modeling  is based on hierarchical specifications for the covariance matrix  (or precision matrix) using  global priors on the space of positive-definite matrices, such as inverse Wishart prior or its equivalence like hyper-inverse Wishart (HIW) distribution (\cite{dawid1993hyper}).   In this paper, we only consider HIW distribution for decomposable graphs as this construction enjoys many advantages, such as computational efficiency due to its conjugate formulation and exact calculation of marginal likelihoods (\cite{scott2008feature}). Furthermore, this prior can impose sparsity on the graph (\cite{giudici1996learning}), which is perfect for high dimensional graphs. The use of HIW prior within Bayesian framework for GGMs has been well studied for the past decade, see \cite{giudici1999decomposable}, \cite{carvalho2007simulation}, \cite{carvalho2009objective}. Indeed, the HIW distributions for non-decomposable graphs exist and many studies have been done in that scenario (\cite{roverato2002hyper}, \cite{atay2005monte}, \cite{wang2010simulation}). Stochastic search for graph space is the crucial part of Bayesian computation in these models. For detailed description and comparison of various Bayesian computation methods in this scenario, see \cite{jones2005experiments}.

Most of the existing GGMs (as described above) are used to infer the conditional dependency structure of stochastic variables ignoring any covariate effects on the variables, when multiple sets of variables are assessed simultaneously. Hence, in this paper we consider that after conditioning on covariate effects $\mathrm{X}$ the responses $\mathrm{Y}$ is a GGM. An example of such a data structure include various types of genomic, epigenomic, transcriptomic and proteomic data have become available using array and sequencing based techniques. The variables in these biological systems contain enormous numbers of genetic markers have been collected at various levels such as mRNA, DNA, microRNA and protein expressions from a common set of samples. The interrelations within and among these markers provide key insights into the disease etiology. One of the crucial questions is to integrate these diverse data-types to obtain more informative and interpretable graphs representing the interdependencies between the variables. For example, in our particular case study used in this paper we consider protein and mRNA expression levels from the same patient samples have been collected extensively under The Cancer Genome Atlas (TCGA) project. As the protein expression levels are correlated due to presence of complex biological pathways and interactions, hence we are interested to develop conditional dependence model for them. However, in addition to other proteins, it is well-established that transcriptomic-level mRNA expressions modify the downstream proteomic expressions. This integrating the mRNA expressions as covariates or predictors in the model will produce more precise and refined estimates of the protein-level graphical structure.


From a modeling standpoint, to incorporate covariates in this graphical modeling framework, we adopt seemingly unrelated regression (SUR) (\cite{zellner1962efficient}, \cite{holmes2002accounting}) models where multiple predictors affect multiple responses, just as in the case of a multivariate regression, with the additional complication that the response variables exhibit an unspecified correlation structure. Similar SUR model has been proposed in \cite{wang2010sparse} which allows different responses to have different predictors. Our model can be seen as its special case which assumes that all responses have the same predictors. The model we propose has both theoretical and computational advantages over the SUR model in \cite{wang2010sparse}. In stead of using approximation for marginal likelihood, regression parameters and error covariance matrices can be marginalized. Combining with MCMC sampling technique, we have the exact posterior sampling other than approximation. Meanwhile, the closed form marginal likelihood enables us to explore the theoretical result for variable selection. By forcing all responses to have the same set of covariates, we strengthen signals in the likelihood functions which in return give us faster convergence time in searching of the true covariates. Second, our model is more suitable to the problem of interest which is to identify the influential gene expression based drivers for the entire network. Furthermore, we propose a joint sparse modeling approach for the responses (e.g. protein expressions) as well as covariates (e.g. mRNA expressions). This joint model simultaneously performs a Bayesian selection of significant covariates (\cite{george1993variable}, \cite{kuo1998variable}) as well as the significant entries in the adjacency matrix of the correlated responses (\cite{carvalho2009objective}).  In the frequentist setting, similar joint modeling has been recently attempted by \cite{yin2011sparse}, \cite{lee2012simultaneous} and \cite{cai2012covariate} for linear models.

To our best knowledge, the literature on Bayesian estimation of joint covariate-dependent graphical models is sparse, with the possible exception of \cite{bhadra2013joint}. Our proposed method differs from \cite{bhadra2013joint} in many aspects. In their paper, they used a linear model (for the covariate effects) with  independent priors on the regression coefficients.  On the other hand, we develop a nonlinear spline based model and propose a multivariate version of the well-known Zellner's $g$-prior (\cite{zellner1986assessing}) for the regression parameters. This is a natural extension of the original  $g$-prior from multiple linear regression models to have a matrix normal structure. In fact, it is  also a conjugate prior in this multivariate setup, hence drastically reduces the computational burden. Moreover, we investigate the Bayesian variable selection consistency of this multivariate regression model with graphical structures. Indeed, there are a few papers which have considered Bayesian variable selection consistency in a multiple linear regression framework with univariate responses (\cite{fernandez2001benchmark}, \cite{liang2008mixtures}). However, to our best knowledge, none of the existing papers investigated these results for multivariate regression with or without graphical structures.

To demonstrate our joint model for both variable and graph selection, we conduct a simulation study on a synthetic data set. The spline based regression captures the nonlinear structure precisely and the graph estimation has identified the underlying true graph. We also illustrate the necessity of incorporating the correct covariate structure by comparing the graph selection with respect to the null (no covariate) model and the linear regression model. As a result, we discover that the graph estimator is highly dependent on specifying the correct covariate structure. At the end, we analyze a data set from the TCPA project to identify the mRNA driver based protein networks. 

The rest of this article is organized as follows. We introduce our model and prior specification in the next section. In section 3, we present the stochastic search algorithm for the joint variable and graph selection. The variable selection consistency results have been presented  in section 4. Some simulation experiments are conducted in section 5. Finally, we apply our method on the TCPA data in section 6 and section 7 concludes.

\section{The Model}
\subsection{Undirected graph and hyper-inverse Wishart distribution}
An undirected graph G can be represented by the pair $(V, E)$, where $V$ is a finite set of vertices and the set of edges $E$ is a subset of the set $V \times V$ of unordered pairs of vertices. A graph is complete if all vertices are joint by edges, which means the graph is fully connected. Every node in a complete graph is a neighbor of every other such node. A complete vertex set $C$ in $G$ that is maximal (with respect to $\subseteq$) is a clique, maximally complete subgraph. That is, $C$ is complete and we cannot add a further node that shares an edge with each node in $C$. A subset $C \subseteq V$ is said to be an uv-separator if all paths from $u$ to $v$ intersect $C$. The subset $C \subseteq V$ is called a separator of $A$ and $B$ if it is an uv-separator for every $u \in A, v \in B$ (\cite{lauritzen1996graphical}, \cite{dobra2000bounds}). A graph is decomposable if and only if it can be split into a set of cliques and separators. By ordering them properly, it forms the junction tree representation which has the running intersection property. Let $J_{G}=(C_1,S_2,C_2,S_3, \dots, C_{k-1},S_k,C_k)$ be the junction tree representation of a decomposable graph $G$, where $C_i$ and $S_j$ are cliques and separators respectively, $i = 1, \dots, k, j=2, \dots,k$. Then $S_j = H_{j-1} \cap C_j$, where $H_{j-1} = C_1 \cup \ldots \cup C_{j-1}$, $j = 2,\dots, k$. Or in other words, $S_i$ is the intersection of $C_i$ with all the previous components $( C_1, C_2, \dots, C_{i-1} )$, so that $S_i$ separates the next component from the previous set. This running intersection property is quite useful when decomposing a graph into cliques and separators. For more details of decomposable graphs, see \cite{dobra2000bounds}. In this paper we only consider decomposable graphs.

Let $\mathrm{y}=(y_1, y_2, \dots, y_q)\sim N_q(0,\Sigma)$ and $\Sigma^{-1} = (\sigma^{ij})_{q\times q}$. The conditional dependencies lie in the precision matrix which is the inverse of the covariance matrix. Therefore, $y_i$ and $y_j$ are conditionally independent given the rest of the variables if and only if $\sigma^{ij}=0$, where $i\neq j$. This property induces a unique undirected graph corresponding to each multivariate Gaussian distribution. Thus, $q$ random variables represent $q$ nodes and if $G$ is the adjacency graph pairing to the precision matrix, then the presence of an off-diagonal edge between two nodes implies non-zero partial correlation (i.e., conditional dependence) and the absence of an edge implies conditional independence.

The inverse Wishart distribution which is a class of conjugate priors for positive definite matrices does not have the conditional independencies using to impose graphs. By imposing the conditional independencies on the inverse Wishart distribution, \cite{giudici1996learning} derived two classes of distributions -- ``local'' and ``global''. But only the ``local'' one induces sparse graphs. This is known as hyper-inverse Wishart distribution, proposed by \cite{dawid1993hyper}. It is the general set of conjugate priors for positive definite matrices which satisfies the hyper Markov law. Its definition is based on the junction tree representation. Let $J_{G}=(C_1,S_2,C_2,S_3, \dots, C_{k-1},S_k,C_k)$ be the junction tree representation of a decomposable graph $G$, then the HIW prior for the corresponding covariance matrix $\Sigma_G$ can be written as a ratio of products of cliques over products of separators (\cite{carvalho2009objective})
\begin{equation}
p(\Sigma | G) = \frac{\prod_{C\in\mathscr{C}} p(\Sigma_C | b, D_C)}{\prod_{S\in\mathscr{S}} p(\Sigma_S | b, D_S)},
\end{equation}
where $\mathscr{C}$ and $\mathscr{S}$ are the sets of all cliques and all separators respectively. For each clique $C$ (and separator $S$), $\Sigma_C \sim \mathrm{IW}(b, D_C)$ with density
\begin{equation}
p(\Sigma_C | b ,D_C) = \frac{ |D_C|^{\frac{b+|C|-1}{2}} }{ 2^{\frac{(b+|C|-1)|C|}{2}} \Gamma_{|C|}\big(\frac{b+|C|-1}{2}\big)} |\Sigma_C|^{-\frac{b+2|C|}{2}} exp\bigg\{  -\frac{1}{2} tr\big(\Sigma_C^{-1}D_C\big)   \bigg \},
\end{equation}
where $\Gamma_p(\cdot)$ is the multivariate gamma function.

For a given graph $G$, let $\mathrm{y}_i \sim N_q (0, \Sigma_G), i = 1, \dots, n$ and $\mathrm{Y} = (\mathrm{y}_1, \mathrm{y}_2, \dots, \mathrm{y}_n)^T$. If $\Sigma_G | G \sim \mathrm{HIW}_G(b, D)$, for some positive integer $b>3$ and positive definite matrix $D$, we have $\Sigma_G | \mathrm{Y}, G \sim \mathrm{HIW}_G (b+n, D + \mathrm{Y}^T\mathrm{Y})$. Therefore, the posterior of $\Sigma_G$ is still a HIW distribution. In the next section, we will incorporate covariate information in this model in a nonlinear regression framework.

\subsection{Covariate adjusted GGMs}
We consider the following covariate adjusted Gaussian distribution $\mathrm{y} \sim N_q( f(\mathrm{x}), \Sigma_G  )$, where $\mathrm{y} = (y_1, y_2, \dots, y_q)^T$, $\mathrm{x} = (x_1, x_2, \dots, x_p)^T$ and the function $f: \mathbb{R}^p \rightarrow \mathbb{R}^q$ performs a smooth, nonlinear mapping from the $p$-dimensional predictor space to the $q$-dimensional response space. $\Sigma_G$ is the covariance structure of $\mathrm{y}$ corresponding to the graph $G$. Linear model developed by \cite{bhadra2013joint} is a particular case of this where $f(\mathrm{x})=\mathrm{x}^T\bm{\beta}$. In the nonlinear setup, we choose to use spline to approximate the nonlinear function $f(\cdot)$. Without loss of generality, we assume all components of $\mathrm{x}$ share the same range, which means we can use the same knot points for all variables which simplifies the notations. And we also assume all covariates are centered so that the intercept terms are zero here. Given $k$ knot points, $\mathrm{w}= (w_1, w_2, \dots, w_k)^T$, the spline basis for $x_i$ is $\{ (x_i-w_1)_+, (x_i-w_2)_+, \dots, (x_i-w_q)_+  \}$. So $f(\mathrm{x})$ can be approximated by the linear form $\mathrm{uB}$, where $\mathrm{u}_{1 \times p(k+1)} = \{\mathrm{x}^T, (\mathrm{x}-w_1)_+^T, (\mathrm{x}-w_2)_+^T, \dots,  (\mathrm{x}-w_k)_+^T\}$ and $(\mathrm{x}-w_i)_+^T = \{(x_1-w_i)_+, (x_2-w_i)_+, \dots, (x_p-w_i)_+\}$ and $\mathrm{B}$ is the coefficient matrix, which has the structure below,
\begin{eqnarray*}
\mathrm{B}_{p(k+1) \times q} &=&
\begin{bmatrix}
\beta_{110} & \beta_{210} &\ldots &\beta_{q10}\\
 \vdots & \vdots &\ddots & \vdots\\
\beta_{1p0} & \beta_{2p0} &\ldots &\beta_{qp0}\\
\beta_{111} & \beta_{211} &\ldots &\beta_{q11}\\
 \vdots & \vdots &\ddots & \vdots\\
\beta_{1p1} & \beta_{2p1} &\ldots &\beta_{qp1}\\
\beta_{1pk} & \beta_{2pk} &\ldots &\beta_{qpk}\\
 \vdots & \vdots &\ddots & \vdots\\
\beta_{1pk} & \beta_{2pk} &\ldots &\beta_{qpk}\\
\end{bmatrix}.\\
\end{eqnarray*}
We assume the knot points $w$'s to be known and prespecified. That way, we have spline-adjusted model $\mathrm{y} \sim N_q( \mathrm{uB}, \Sigma_G )$ which has a linear model structure. Therefore, any variable selection method for linear regression can be used for the mean structure.

\subsection{The Bayesian Hierarchical Model}
Assuming we have a set of $n$ independent samples $\mathrm{Y}=(\mathrm{y}_1, \mathrm{y}_2, \dots, \mathrm{y}_n)^T$, where $\mathrm{y}_i \sim N_q(f(\mathrm{x}_i), \Sigma_G)$ and let $f(\mathrm{X}) = (f(\mathrm{x}_1), f(\mathrm{x}_2), \dots, f(\mathrm{x}_n))^T$. We have
\begin{equation}
\mathrm{Y} \sim \mathrm{MN}_{n\times q} (f(\mathrm{X}), I_n, \Sigma_G),
\end{equation}
where $\mathrm{MN}_{n\times q}(f(\mathrm{X}), I_n, \Sigma_G)$ is the matrix normal distribution with mean $f(\mathrm{X})$, and $I_n$ as the covariance matrix between $n$ rows and $\Sigma_G$ as the covariance matrix between $q$ columns. We approximate $f(\cdot)$ by $f(\mathrm{X}) = \mathrm{UB}$, where $\mathrm{U}$ is the spline basis matrix which has the structure below. And it is equivalent to write out the model as multivariate linear regression, $\mathrm{Y} = \mathrm{UB} + \mathrm{E}$, where $\mathrm{E} \sim \mathrm{MN}_{n\times q} (\mathrm{0}, I_n, \Sigma_G)$.
\begin{eqnarray*}
\mathrm{U}_{n\times p(k+1)} &=&
\begin{bmatrix}
\mathrm{x}_1 & \mathrm{x}_2 & \ldots &\mathrm{x}_n\\
(\mathrm{x}_1 -w_1)_+ & (\mathrm{x}_2 -w_1)_+ & \ldots & (\mathrm{x}_n-w_k)_+\\
\vdots & \vdots & \vdots & \ddots \\
(\mathrm{x}_1 -w_k)_+ & (\mathrm{x}_2 -w_k)_+ & \ldots & (\mathrm{x}_n-w_k)_+\\
\end{bmatrix}^T\\
\end{eqnarray*}


To introduce the notion of redundant variables for the variable selection in the mean structure, we define a binary vector $\bm\gamma = (\gamma_1, \ldots, \gamma_p)^T$, where $\gamma_i = 0$ if and only if $\beta_{jis} = 0$, for all $j = 1,\ldots,q, \ \ s = 0,1,\ldots,k$. By following this rule, the spline basis functions are related to each variable when performing the model selection. It means selecting one variable is equivalent to select all its related basis functions. Similarly, to introduce the notion of sparsity in the precision matrix, we define a binary variable $G_l$, where $l=1,\ldots,\frac{q(q-1)}{2}$, the $l$th off diagonal element in the adjacency matrix corresponding to the graph $G$. Diagonal elements of the adjacency matrix are restricted to one. The number of edges in the graph $G$ is denoted as $|E|=\sum_lG_l$. The Bayesian hierarchical model is given by
\begin{eqnarray}
(\mathrm{Y}-\mathrm{U}_{\bm\gamma} \mathrm{B}_{\bm{\gamma}, G})|  \mathrm{B}_{\bm{\gamma}, G}, \Sigma_G & \sim & \mathrm{MN}_{n \times q}(0,  I_n, \Sigma_G), \label{eq:Y}\\
\mathrm{B}_{\bm{\gamma}, G}| \bm\gamma, \Sigma_G & \sim & \mathrm{MN}_{p_{\bm\gamma}(k+1) \times q}\big(0,g (\mathrm{U}_{\bm\gamma}^T \mathrm{U}_{\bm\gamma})^{-1}_{p_{\bm\gamma}(k+1)}, \Sigma_G\big), \label{eq:B}\\
\Sigma_G|G & \sim & \mathrm{HIW}_G (b,d I_q),\label{eq:sigma}\\
\gamma_i &\overset{i.i.d.}{\sim}& \mathrm{Bernoulli}(\alpha_{\bm\gamma}) \text{  for } i=1,\ldots,p_{\bm\gamma},\label{eq:gamma}\\
G_l &\overset{i.i.d.}{\sim} & \mathrm{Bernoulli}(\alpha_{G}) \text{  for } l=1,\ldots,\frac{q(q-1)}{2},\label{eq:graph}\\
\alpha_{\bm\gamma} & \sim & \mathrm{U}(0,1),\label{eq:weight} \\
\alpha_G & = & 2/(q-1),\label{eq:gweight}
\end{eqnarray}
where $\mathrm{U}_{\bm\gamma}$ is the spline basis matrix with regressors corresponding to $\bm{\gamma}$ and $b>3$, $g$, $d$ are fixed positive hyper parameters. $\alpha_{\bm\gamma}$ is used to control the sparsity of variable selection and $\alpha_G$ is responsible for the complexity of graph selection. Also, denote $p_{\bm\gamma}=\sum_i\gamma_i$.


Equation (\ref{eq:B}) is the extended version of Zellner's g-prior (\cite{zellner1986assessing}) for multivariate regression. $g$-prior in this matrix normal form requires one more parameter than the usual multivariate normal form to allow the covariance structure between columns. Here, we  use $\Sigma_G$ as that parameter. There are a couple of reasons for this choice. First, it drastically decreases the complexity of marginalization. By using the same structure as the graph, it gives us the ability to integrate out the coefficient matrix $\mathrm{B}_{\bm{\gamma}, G}$. That way, we derive the marginal of $\mathrm{Y}$ explicitly. Moreover, it allows the variable selection and the graph selection to borrow strength from each other. Next, we derive the marginal density of data $\mathrm{Y}$ given only $\bm\gamma$ and graph $G$ in this modeling framework.

By using equation (\ref{eq:Y}) and (\ref{eq:B}), we have
\begin{equation*}
\mathrm{Y} | \bm{\gamma}, \Sigma_G \sim \mathrm{MN}_{n \times q} (0, I_n+g P_{\bm\gamma}, \Sigma_G),
\end{equation*}
where $P_{\bm\gamma} = \mathrm{U}_{\bm\gamma}(\mathrm{U}_{\bm\gamma}^T\mathrm{U}_{\bm\gamma})^{-1}\mathrm{U}_{\bm\gamma}^T$. In order to calculate the marginal of $\mathrm{Y}$, we need to vectorize $\mathrm{Y}$ as follows,
\begin{equation*}
vec(\mathrm{Y}^T) | \bm{\gamma}, \Sigma_G \sim N_{nq} (0, (I_n+g P_{\bm\gamma})\otimes \Sigma_G),
\end{equation*}
where $\otimes$ is the Kronecker product. Next, using the equation (\ref{eq:sigma}), we integrate out the $\Sigma_G$ to derive the marginal distribution of $\mathrm{Y}$.  The detailed calculation is in Appendix 1. Let $\mathscr{C}$ and $\mathscr{S}$ be the sets of all cliques and all separators for the given graph $G$ then 
\begin{equation}
f(\mathrm{Y}|\bm\gamma, G) = M_{n,G} \times {(g+1)}^{-\frac{p_{\bm\gamma}(k+1)q}{2}}
\frac
{\prod_{C\in\mathscr{C}} {|dI_C+S_C(\bm\gamma)|}^{-\frac{b+n+|C|-1}{2}}}
{\prod_{S\in\mathscr{S}} {|dI_S+S_S(\bm\gamma)|}^{-\frac{b+n+|S|-1}{2}}},
\end{equation}
where $S(\bm\gamma) = \mathrm{Y}^T(I_n-\frac{g}{g+1}P_{\bm\gamma})\mathrm{Y}$, $S_C(\bm\gamma)$ and $S_S(\bm\gamma)$ denote the quadratic forms restricted to the clique $C \in \mathscr{C}$ and the separator $S\in \mathscr{S}$. 

The normalizing constant $M_{n,G}$ has the following factorization which depends only  on $n$ and $G$, but it is the same for all $\bm\gamma$ under the same graph $G$. The advantage of this is when updating $\bm\gamma$ in the stochastic search, this term cancels out reducing the computational complexity. 
\begin{equation*}
M_{n,G}=
{(2\pi)}^{-\frac{nq}{2}}
\frac
{\prod_{C\in\mathscr{C}}
\frac{{|dI_C|}^{\frac{b+|C|-1}{2}}}
{2^{-\frac{n|C|}{2}}\Gamma_{|C|}\big(\frac{b+|C|-1}{2}\big)\Gamma_{|C|}^{-1}\big(\frac{b+n+|C|-1}{2}\big)}
}
{\prod_{S\in\mathscr{S}}
\frac{{|dI_S|}^{\frac{b+|S|-1}{2}}}
{2^{-\frac{n|S|}{2}}\Gamma_{|S|}\big(\frac{b+|S|-1}{2}\big)\Gamma_{|S|}^{-1}\big(\frac{b+n+|S|-1}{2}\big)}
}
\end{equation*}

\subsection{Prior specification for $\bm\gamma$ and $G$}
We use beta-binomial priors (\cite{george1993variable}) for both variable and graph selection. 
We control the sparsity by fixing $\alpha_G$. \cite{jones2005experiments} suggested to use $\frac{2}{|V|-1}$ as the hyper parameter for the Bernoulli distribution. For an undirected graph, it has peak around $|V|$ edges and it will be lower when applying to decomposable graphs. Additionally, we have other ways to control the number of edges which will be stated in the next section.

\section{The Stochastic Search Algorithm}
\subsection{Searching for $\gamma$}
From equation (\ref{eq:gamma}), we obtain the prior $p(\bm{\gamma}|\alpha_{\bm\gamma})=\prod_{i=1}^p p(\gamma_i|\alpha_{\bm\gamma})=\alpha_{\bm\gamma}^{p_{\bm\gamma}}(1-\alpha_{\bm\gamma})^{p-p_{\bm\gamma}}$.
Next, using equation (\ref{eq:weight}), we  integrate out $\alpha_{\bm\gamma}$, so that the marginal prior for $\bm{\gamma}$ is $p(\bm{\gamma}) \propto p_{\bm\gamma}!(p-p_{\bm\gamma})!$.
The searching for $\bm{\gamma}$ proceeds as follows:
\begin{itemize}
\item
Given $\bm\gamma$, propose $\bm\gamma^*$ by the following procedure. With equal probabilities, randomly choose one entry in $\bm\gamma$, say $\gamma_{s^*}$. If $\gamma_{s^*}=0$, then with probability $\delta$ change it to $1$ and with probability $1-\delta$ remain the same; if $\gamma_{s^*}=1$, then with probability $1-\delta$ change it to $0$ and with probability $\delta$ remain the same. Under this setting, $\delta$ is the probability of adding one variable when $\gamma_{s^*}=0$ and $1-\delta$ is the probability of deleting one variable when $\gamma_{s^*}=1$. If $\bm\gamma^*=\bm\gamma$, then $\frac{q(\bm\gamma|{\bm\gamma}^*)}{q({\bm\gamma}^*|\bm\gamma)}$=1. If one variable has been added to the model, $\frac{q(\bm\gamma|{\bm\gamma}^*)}{q({\bm\gamma}^*|\bm\gamma)}=\frac{1-\delta}{\delta}$; if one variable has been deleted from the model, $\frac{q(\bm\gamma|{\bm\gamma}^*)}{q({\bm\gamma}^*|\bm\gamma)}=\frac{\delta}{1-\delta}$.
\item
Calculate the marginal densities under both models $p(\mathrm{Y}|\bm{\gamma}, G)$ and $p(\mathrm{Y}|{\bm\gamma}^*, G)$.
\item
Accept ${\bm\gamma}^*$ with probability
\begin{equation*}
r(\bm{\gamma}, {\bm\gamma}^*) = min \bigg\{ 1, \frac{p(\mathrm{Y}|{\bm\gamma}^*, G)p({\bm\gamma}^*)q(\bm\gamma|{\bm\gamma}^*)}{p(\mathrm{Y}|\bm{\gamma}, G)p(\bm{\gamma})q({\bm\gamma}^*|\bm\gamma)} \bigg\}.
\end{equation*}
\end{itemize}
Notice, under the same graph the normalizing constant $M_{n,G}$ cancels out. Another thing is by using the parameter $\delta$, we can further control the sparsity of variable selection.

\subsection{Searching for $G$}
Similar to the calculation for $\bm{\gamma}$, the prior over the graph space is $p(G|\alpha_G)=\prod_{l=1}^{q(q-1)/2}p(G_l|\alpha_G)=\alpha_G^{|E|}(1-\alpha_G)^{q(q-1)/2-|E|}$, where $|E|$ is the total number of edges in the graph $G$ and $\alpha_G=2/(|V|-1)$. The searching for $G$ works as follows:
\begin{itemize}
\item
Given the current decomposable graph $G$, propose a new decomposable graph $G^*$ by the following procedure. With equal probabilities, randomly select an off-diagonal entry from the adjacency matrix of graph $G$, say $G_{s^*}$. If $G_{s^*}=0$, then with probability $\eta$ change it to $1$ and with probability $1-\eta$ remain the same; if $G_{s^*}=1$, with probability $1-\eta$ change it to $0$ and with probability $\eta$ remain the same. So the probability of adding an edge is $\eta$ when $G_{s^*}=0$ and $1-\eta$ is the probability of deleting an edge when $G_{s^*}=1$. We discard all proposed graphs which are non-decomposable. In those cases, the chain remains in the same graph for that iteration. If an edge has been added to the graph, $\frac{p(G|G^*)}{p(G^*|G)}=\frac{1-\eta}{\eta}$; if an edge has been removed from the graph, $\frac{p(G|G^*)}{p(G^*|G)}=\frac{\eta}{1-\eta}$.
\item
Calculate the marginal densities under both graphs $p(\mathrm{Y}|\bm{\gamma}, G)$ and $p(\mathrm{Y}|\bm{\gamma}, G^*)$.
\item
Accept $G^*$ with probability
\begin{equation*}
r(G, G^*) = min \bigg\{ 1, \frac{p(\mathrm{Y}|\bm{\gamma}, G^*)p(G^*)q(G|G^*)}{p(\mathrm{Y}|\bm{\gamma}, G)p(G)q(G^*|G)} \bigg\}.
\end{equation*}
\end{itemize}
This procedure is called add-delete Metropolis–Hastings sampler (\cite{jones2005experiments}). Another tool for sparsity is $\eta$. By choosing its value to be less than 0.5, it can reinforce sparsity on the graph.

\subsection{Conditional distributions of $\mathrm{B}_{\bm{\gamma}, G}$ and $\Sigma_G$}
We integrate out $\mathrm{B}_{\bm{\gamma}, G}$ and $\Sigma_G$ to make the stochastic search more efficient. But the conditional distributions of them both have simple closed forms. In \cite{bhadra2013joint}, the conditional distribution of $\Sigma_G$ depends on the coefficient matrix $\mathrm{B}_{\bm{\gamma}, G}$, but by using Zellner's $g$-prior it only requires $\bm\gamma$ and $G$,
\begin{equation*}
\Sigma_G | \mathrm{Y}, \bm{\gamma}, G  \sim  \mathrm{HIW}_G (b+n, dI_q+S(\bm{\gamma})).
\end{equation*}
At each iteration, given $\bm\gamma$ and $\Sigma_G$, using the following conditional distribution we can simulate $\mathrm{B}_{\bm{\gamma}, G}$,
\begin{equation*}
\mathrm{B}_{\bm{\gamma}, G} | \mathrm{Y}, \bm{\gamma}, \Sigma_G  \sim  \mathrm{MN}_{p_{\bm\gamma}(k+1) \times q} \bigg( \frac{g}{g+1}\big(\mathrm{U}_{\bm\gamma}^T\mathrm{U}_{\bm\gamma}\big)^{-1}\mathrm{U}_{\bm\gamma}^T\mathrm{Y}, \frac{g}{g+1}\big(\mathrm{U}_{\bm\gamma}^T\mathrm{U}_{\bm\gamma}\big)^{-1}, \Sigma_G \bigg).
\end{equation*}

\subsection{Choices of hyperparameters}
For choosing hyperparameters, we need to specify $g$, $b$, $d$. \cite{liang2008mixtures} summarized some choices for $g$ in the $g$-prior, like  $g=n$ (\cite{kass1995reference}),  $g=p^2$ (\cite{foster1994risk}), $g=max(n,p^2)$ (\cite{fernandez2001benchmark}) and other empirical Bayes methods to choose $g$. Based on simulations the choice of $g$ is not very critical in our approach. As long as $g$ satisfies the basic condition $g=O(n)$, there is no significant effect on the results. This condition is to keep the variances of the prior of coefficients not to be too small as $n$ goes to infinity. But when the dimension of the predictor space $p$ is large, one can consider to use $g=max(n, p^2)$.

The hyperparameter $b$ and $d$ are the two constants which control the hyper-inverse Wishart distribution. The common choice for the degree of freedom $b$ is 3 which provides a finite moment for the HIW prior (\cite{jones2005experiments}, \cite{carvalho2009objective}). Based on our experiments $d$ has a big impact on the graph selection results. Large $d$ results in more sparse graphs. On the other hand, large $d$ also contributes to large variances for coefficients. After standardizing the variances of responses to be $1$, \cite{jones2005experiments} suggested to use $1/(b+1)$ as a default choice of $d$, since the marginal prior mode for each variance term is $d(b+1)$. In our approach we are basically using the residuals after variable selection to fit the graphical model, hence it is impossible to know the variances. But we find $d = 1$ works well in our simulations.

The common choice for $\delta$ and $\eta$ in the stochastic search is $0.5$. Unless strong parsimony is required, we suggest to use this value. On the other hand, $\alpha_G$ can be set to $1/(|V|-1)$ or $0.5/(|V|-1)$ to achieve more sparsity on graph selection for noisy data.

\section{Variable Selection Consistency}

In this section, we first show the  Bayes Factor consistency of the variable selection method for a given graphical structure. To our best knowledge, there are no results on Bayesian variable selection consistency for this case. We first define the pairwise Bayes factor consistency for a given graph, subsequently under moderate conditions, we prove the pairwise Bayes factor consistency.  For some related development in multiple linear regression model with univariate response, see \cite{liang2008mixtures} and \cite{fernandez2001benchmark}. For simplicity, from now on we refer  the multivariate regression model as the regression model or just the model. Without further specification, the model we refer implies the regression model, not the graphical model.

Let binary vector $\bm{t}=(t_1,\dots,t_p)^T$ denote the regression model with respect to the true set of covariates of size $p_{\bm{t}}=\sum_{i=1}^{p}t_i$ and binary vector $\bm{a}=(a_1,\dots,a_p)^T$ denote an alternative regression model of size $p_{\bm{a}}=\sum_{i=1}^{p}a_i$. We use $\bm\gamma$ to represent any subset of the regression model space for being consistent with the notation in the early section. Next, we introduce the definition of pairwise Bayes factor consistency with graph structures.

\begin{definition}
{\normalfont{\textbf{(pairwise Bayes factor consistency under a given graph)}}}
Let $\mathrm{BF}(\bm{a}; \bm{t}|G)$ be the Bayes factor in favor of an alternative model $\bm{a}$ for a given graph $G$, such that $\mathrm{BF}(\bm{a}; \bm{t}|G) = \frac{P(\mathrm{Y}|\bm{a}, G)}{P(\mathrm{Y}|\bm{t}, G)}$. If $\mathrm{p}\lim_{n\rightarrow\infty} \mathrm{BF}(\bm{a}; \bm{t}|G)=0$, for any $\bm{a}\neq\bm{t}$, then we have pairwise Bayes factor consistency with respect to the true regression model $\bm{t}$ and the graph $G$.
\end{definition}

Here, ``$\mathrm{p}\lim_{n\rightarrow\infty}$" denotes convergence in probability and the probability measure is the sampling distribution under the true data generating model (\cite{liang2008mixtures}). Notice that the alternative model and the true model used in the Bayes factor calculation have the same graph $G$ which may not be the true underlying  graph. To clarify, the Bayes factor in the definition above is for a given graph $G$, where as the actual Bayes factor for the joint model is defined as $\mathrm{BF}(\bm{a};\bm{t})=\frac{P(\mathrm{Y}|\bm{a})}{P(\mathrm{Y}|\bm{t})}=\frac{\int P(\mathrm{Y}|\bm{a},G)\pi(G)dG}{\int P(\mathrm{Y}|\bm{t},G)\pi(G)dG}$, where $\pi(G)$ is the prior on the graph space. In this paper, the graph $G$ is restricted to the set of decomposable graphs and the number of nodes $q$ in the graph is finite. Before the main result, some regularization conditions need to be introduced.

\begin{condition} \label{cond1}
The set of graphs we consider is restricted to decomposable graphs with the number of nodes $q$ is finite. The number of knots $k$ for the spline basis is also finite.  
\end{condition}

\begin{condition}\label{cond2}
Let $\lambda_{min}\leq\dots\leq\lambda_i\leq\dots\leq\lambda_{max}$ be the eigenvalues of $\big({\mathrm{U}}^T_{\bm\gamma}{\mathrm{U}_{\bm\gamma}}\big)_{p_{\bm\gamma}(k+1)\times p_{\bm\gamma}(k+1)}$. Assume $0<c_\mathrm{U}<\frac{\lambda_{min}}{n} \leq \frac{\lambda_{max}}{n}<d_\mathrm{U}<\infty$, where $c_\mathrm{U}$ and $d_\mathrm{U}$ are two positive finite constants.
\end{condition}

\begin{condition}\label{cond3}
Let $\mathrm{E}_y = \mathrm{U}_{\bm{t}}\mathrm{B}_{\bm{t},G}$. Assume $\inf_{\bm{a}\neq \bm{t}} tr\{\mathrm{E}^T_y(I_n-P_{\bm{a}})\mathrm{E}_y\} > C_0n$, where $P_{\bm{a}}$ is the projection matrix of an alternative model $\bm{a}$ and $C_0$ is some fixed constant.
\end{condition}

\begin{condition}\label{cond4}
For the $g$-prior $\mathrm{B}_{\bm{\gamma}, G}| \bm\gamma, \Sigma_G  \sim  \mathrm{MN} \big(0,g\big({\mathrm{U}}^T_{\bm\gamma}{\mathrm{U}_{\bm\gamma}}\big)^{-1}, \Sigma_G\big)$ as in equation (\ref{eq:sigma}), assume $g = O(n)$.
\end{condition}

\begin{condition}\label{cond5}
Let $\hat{\Sigma}_{\bm{\gamma}, G}^{-1}$ be the MLE of $\Sigma_G^{-1}$ under any regression model $\bm\gamma$ and any decomposable graph $G$. Assume $\hat{\Sigma}_{\bm{\gamma}, G}^{-1}$ converges to some positive definite matrix ${\Sigma_{\bm{\gamma}, G}^0}^{-1}$ which has all eigenvalues bounded away from zero and infinity. Later, we drop the subscript $\bm{\gamma}$ and only use $\hat{\Sigma}_{G}^{-1}$ and ${\Sigma_{G}^0}^{-1}$.
\end{condition}

\begin{condition} \label{cond6}
The number of total covariates satisfies $\lim_{n\rightarrow\infty}\frac{p}{n}=0$, i.e. $p=o(n)$.
\end{condition}





Condition \ref{cond2} is needed to avoid singularity when the dimension of the model space $p$ increases to infinity as $n$ goes to infinity. Condition \ref{cond3} indicates that no true covariate can be fully explained by the rest of the covariates, which implies that  regressing any true covariate on all others, the coefficient of determination $R^2$ is less than 1. Condition \ref{cond4} makes sure we assign a non-degenerated prior on the coefficient matrix. Condition \ref{cond5} imposes restriction on the limit of $\hat {\Sigma}_G^{-1}$ or equivalently on the corresponding quadratic forms. It is needed for the given clique and separator decomposition of the hyper-inverse Wishart prior. For inverse Wishart prior, this condition can be relaxed if the corresponding graph is complete. We assume that  the MLE converges to a positive definite matrix ${\Sigma_G^0}^{-1}$. For the true graph this statement holds trivially. The explicit calculation of the MLE can be done by calculating the MLEs for each clique and separator, then combining them to form $\hat {\Sigma}_G^{-1}$. Given any model $\bm\gamma$, for $C\in \mathscr{C}$, define $\hat{\Sigma}_C^{-1}=\big\{ \frac{1}{n}\mathrm{Y}_C^T(I_n-\frac{g}{g+1}P_{\bm{\gamma}})\mathrm{Y}_C\big\}^{-1}$ and $\hat{\Sigma}_S^{-1}$ is defined similarly. The MLE is given by $\hat {\Sigma}_G^{-1}=\sum_{C\in\mathscr{C}} \hat{\Sigma}_C^{-1}\big|_0-\sum_{\mathscr{S}}\hat{\Sigma}_S^{-1}\big|_0$ where the suffix `0' implies that the elements corresponding to the vertices which are not in that subgraph are filled with zeros to create a $q\times q$ matrix (\cite{lauritzen1996graphical}).

\begin{lemma}\label{lm1}
Let $(\bm{t}, G)$ be the model with true covariates and any finite dimensional graph $G$. Assume $(\bm{a}, G)$ is any alternative model $(\bm{a}\neq\bm{t})$ with the same graph $G$. Then, for the model given by (\ref{eq:Y})-(\ref{eq:sigma}), under Condition \ref{cond1}-\ref{cond6}, $\mathrm{p}\lim_{n\rightarrow\infty}\mathrm{BF}(\bm{a};\bm{t}|G) = 0$ for \textbf{any} graph $G$ and any model $\bm{a}\neq\bm{t}$.
\end{lemma}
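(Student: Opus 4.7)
The plan starts from the closed-form marginal likelihood displayed just above the lemma. Since $M_{n,G}$ depends only on $(n,G)$, it cancels in the Bayes factor, and splitting $(b+n+|C|-1)/2=n/2+(b+|C|-1)/2$ yields
\[
\log\mathrm{BF}(\bm a;\bm t|G) = -\tfrac{(p_{\bm a}-p_{\bm t})(k+1)q}{2}\log(g+1) - \tfrac{n}{2}\bigl[h_G(\hat A_{\bm a}) - h_G(\hat A_{\bm t})\bigr] + R_n,
\]
where $\hat A_{\bm\gamma}:=n^{-1}(dI_q+S(\bm\gamma))$, the hyper-Markov log-determinant is $h_G(A):=\sum_{C\in\mathscr{C}}\log|A_C|-\sum_{S\in\mathscr{S}}\log|A_S|$, and the remainder $R_n=O_p(1)$ (each clique or separator term is $|C|\log n+O_p(1)$ and $\sum_C|C|-\sum_S|S|=q$ for decomposable $G$). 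Writing $\mathrm{Y}=\mathrm{E}_y+\mathrm{E}$ with rows of $\mathrm{E}$ i.i.d.\ $N_q(0,\Sigma_G)$ and expanding $S(\bm\gamma)$ into signal, noise, and cross components, Conditions~\ref{cond2}, \ref{cond4}, \ref{cond6} control $P_{\bm\gamma}$ and $g/(g+1)$, and Condition~\ref{cond5} gives $\hat A_{\bm\gamma}\to\Sigma^0_{\bm\gamma,G}$ in probability, positive definite. I would then split into the two cases $\bm a\supset\bm t$ and $\bm a\not\supset\bm t$.

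In Case~1 ($\bm a\supset\bm t$), $P_{\bm a}\mathrm{E}_y=P_{\bm t}\mathrm{E}_y=\mathrm{E}_y$, so the signal contributions to $S(\bm a)-S(\bm t)=-\tfrac{g}{g+1}\mathrm{E}^T(P_{\bm a}-P_{\bm t})\mathrm{E}$ cancel, and this noise difference has Frobenius norm $O_p(p_{\bm a}-p_{\bm t})$. A first-order Taylor expansion of $h_G$ about $\Sigma^0_{\bm t,G}$ then gives $h_G(\hat A_{\bm a})-h_G(\hat A_{\bm t})=O_p((p_{\bm a}-p_{\bm t})/n)$, so the $-n/2$ contribution is only $O_p(p_{\bm a}-p_{\bm t})$. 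This is dominated by the dimension penalty $-\tfrac{(p_{\bm a}-p_{\bm t})(k+1)q}{2}\log(g+1)=-\Theta(\log n)$ under Condition~\ref{cond4}, so $\log\mathrm{BF}\to-\infty$.

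In Case~2 ($\bm a\not\supset\bm t$), Condition~\ref{cond3} gives $\mathrm{tr}\bigl(n^{-1}\mathrm{E}_y^T(I_n-P_{\bm a})\mathrm{E}_y\bigr)\geq C_0>0$ uniformly in $\bm a$, and the decomposition of $\hat A_{\bm a}-\hat A_{\bm t}$ yields a nonzero PSD limit $D_{\bm a}=\Sigma^0_{\bm a,G}-\Sigma^0_{\bm t,G}$ with $\mathrm{tr}(D_{\bm a})\geq C_0$. The key step is strict L\"owner-monotonicity of $h_G$: integrating along $A(s)=\Sigma^0_{\bm t,G}+sD_{\bm a}$,
\[
h_G(\Sigma^0_{\bm t,G}+D_{\bm a})-h_G(\Sigma^0_{\bm t,G}) = \int_0^1 \mathrm{tr}\bigl[\hat\Sigma_G\bigl(A(s)\bigr)^{-1}D_{\bm a}\bigr]\,ds,
\]
since $\nabla h_G(A)=\sum_C A_C^{-1}\big|_0-\sum_S A_S^{-1}\big|_0=\hat\Sigma_G(A)^{-1}$ by the junction-tree formula recalled just before the lemma. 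The matrix $\hat\Sigma_G(A(s))^{-1}$ is positive definite with $\lambda_{\min}(\hat\Sigma_G(A(s))^{-1})\geq\lambda>0$ uniformly on the compact interval $[0,1]$ (by continuity in $s$ and Condition~\ref{cond5} at the endpoints), so the integrand is at least $\lambda\cdot\mathrm{tr}(D_{\bm a})\geq\lambda C_0>0$. Hence the leading term $-\tfrac{n}{2}[h_G(\hat A_{\bm a})-h_G(\hat A_{\bm t})]\to-\infty$ at rate $n$, overwhelming the $\Theta(\log n)$ dimension penalty and the $O_p(1)$ remainder.

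The main obstacle is the Case~2 argument: establishing strict L\"owner-monotonicity of $h_G$ (cleanly via the path-integral formula above) and then obtaining a uniform-in-$\bm a$ positive lower bound on $h_G(\Sigma^0_{\bm a,G})-h_G(\Sigma^0_{\bm t,G})$. The quantitative bound requires uniform control of $\lambda_{\min}\bigl(\hat\Sigma_G(A(s))^{-1}\bigr)$ over $s\in[0,1]$ and over alternatives $\bm a$, which is precisely where Condition~\ref{cond5} is used, extended through the convex interpolation by compactness and the Condition~\ref{cond2} eigenvalue bounds on $\mathrm{U}^T_{\bm\gamma}\mathrm{U}_{\bm\gamma}/n$. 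Combining the Case~1 and Case~2 conclusions then yields $\mathrm{p}\lim_{n\to\infty}\mathrm{BF}(\bm a;\bm t|G)=0$ for any $\bm a\neq\bm t$ and any decomposable $G$.
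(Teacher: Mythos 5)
Your proposal is correct in substance but reaches the conclusion by a genuinely different route in the harder, non-nested case. Your Case~1 ($\bm a\supset\bm t$) is essentially the paper's Lemma~\ref{lm8.6}: there the determinant ratio $\Delta(\bm a,\bm t)$ is shown to behave like $\{1-O_p(r_{\bm a\cup\bm t}/n)\}^{h}$ so that the $(g+1)^{-(p_{\bm a}-p_{\bm t})(k+1)q/2}$ penalty dominates; your Taylor expansion of $h_G$ around $\Sigma^0_{\bm t,G}$ is the same estimate in different clothing. The real divergence is Case~2. The paper factorizes $\mathrm{BF}(\bm a;\bm t|G)=\mathrm{BF}(\bm a;\bm a\cup\bm t|G)\cdot\mathrm{BF}(\bm a\cup\bm t;\bm t|G)$ and, in Lemma~\ref{lm8.7}, abandons the closed-form marginal in favor of lower-bounding the integrated likelihood over a prior neighborhood $Nb(\epsilon')$ of ${\Sigma^0_G}^{-1}$, combined with the eigenvalue lower bound of Lemma~\ref{lm8.5} (driven by Condition~\ref{cond3}) to extract the $e^{-\widetilde Cn}$ decay. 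You instead stay with the closed form, compare $\bm a$ to $\bm t$ directly, and convert Condition~\ref{cond3} into a strict gap $h_G(\hat A_{\bm a})-h_G(\hat A_{\bm t})\geq\lambda C_0/d>0$ via the path-integral identity and the fact that $\nabla h_G(A)=\sum_C A_C^{-1}\big|_0-\sum_S A_S^{-1}\big|_0=\hat\Sigma_G(A)^{-1}\succ 0$ for decomposable $G$. This buys a more transparent argument that never needs the prior to charge small neighborhoods of the MLE, at the cost of having to actually establish (or cite, e.g.\ from \cite{lauritzen1996graphical}) the strict L\"owner monotonicity of $h_G$ and the uniform lower bound on $\lambda_{\min}\bigl(\hat\Sigma_G(A(s))^{-1}\bigr)$ along the interpolation path; your compactness argument for that is sound because $D_{\bm a}\succeq 0$ is uniformly bounded in trace. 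Two points to make explicit when writing this up: (i) the positivity of $D_{\bm a}=\mathrm{p}\lim\,n^{-1}d^{-1}\mathrm{E}_y^T(I_n-P_{\bm a})\mathrm{E}_y$ requires the cross terms $n^{-1}\mathrm{E}_y^T(I_n-P_{\bm a})\mathrm{E}$ and the $\tfrac{1}{g+1}P_{\bm a}$ correction to vanish, which is where Conditions~\ref{cond2}, \ref{cond4} and \ref{cond6} enter (mirroring Lemmas~\ref{lm8.2}--\ref{lm8.3}); and (ii) the bound $+\tfrac{(p_{\bm t}-p_{\bm a})(k+1)q}{2}\log(g+1)=O(\log n)$ on an adverse dimension penalty in Case~2 uses the finiteness of $r_{\bm t}$, an assumption the paper also relies on implicitly in Lemma~\ref{lm8.5}.
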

\begin{proof}
See Appendix 2 for details.
\end{proof}

That is to say, in the Lemma \ref{lm1}, we conclude that the pairwise Bayes factor for variable selection is consistent for any given graph, which is a quite strong result considering the magnitude of the graph space (here it is restricted to decomposable graph space). Next we show that with finite dimension graph, the result we have from Lemma \ref{lm1} is equivalent to the traditional Bayes factor for regression models.

\begin{theorem}\label{th1}
{\normalfont{\textbf{(pairwise Bayes factor consistency)}}}
For model given by (\ref{eq:Y})-(\ref{eq:gweight}), under Condition \ref{cond1}-\ref{cond6}, $\mathrm{p}\lim_{n\rightarrow\infty}\mathrm{BF}(\bm{a};\bm{t}) = 0$ for any model $\bm{a}\neq\bm{t}$.
\end{theorem}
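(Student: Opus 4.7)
The plan is to lift the fixed-graph result of Lemma~\ref{lm1} to the marginal Bayes factor by exploiting the finiteness of the decomposable graph space. Under Condition~\ref{cond1} the number of nodes $q$ is finite, so the set $\mathscr{G}$ of decomposable graphs on $q$ vertices is a finite set whose cardinality does not depend on $n$. This finiteness is the entire reason the theorem reduces to Lemma~\ref{lm1}; the hard analytic work (controlling the ratio of HIW-integrated marginals via the clique/separator decomposition, using Conditions~\ref{cond2}--\ref{cond6}) has already been carried out there.

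The first step is to rewrite the marginal Bayes factor as a weighted average of the conditional ones. Multiplying and dividing each numerator summand by $P(\mathrm{Y}|\bm{t},G)$ gives
\begin{equation*}
\mathrm{BF}(\bm{a};\bm{t})
= \frac{\sum_{G\in\mathscr{G}} P(\mathrm{Y}|\bm{a},G)\pi(G)}{\sum_{G\in\mathscr{G}} P(\mathrm{Y}|\bm{t},G)\pi(G)}
= \sum_{G\in\mathscr{G}} w_G(\mathrm{Y})\,\mathrm{BF}(\bm{a};\bm{t}|G),
\end{equation*}
where $w_G(\mathrm{Y}) = P(\mathrm{Y}|\bm{t},G)\pi(G) / \sum_{G'\in\mathscr{G}} P(\mathrm{Y}|\bm{t},G')\pi(G')$. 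The data-dependent weights $w_G(\mathrm{Y})$ are nonnegative and sum to one, so
\begin{equation*}
0 \;\leq\; \mathrm{BF}(\bm{a};\bm{t}) \;\leq\; \max_{G\in\mathscr{G}} \mathrm{BF}(\bm{a};\bm{t}|G).
\end{equation*}

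The second step is a union bound: for every $\varepsilon > 0$,
\begin{equation*}
\Pr\!\Bigl(\max_{G\in\mathscr{G}} \mathrm{BF}(\bm{a};\bm{t}|G) > \varepsilon\Bigr)
\;\leq\; \sum_{G\in\mathscr{G}} \Pr\bigl(\mathrm{BF}(\bm{a};\bm{t}|G) > \varepsilon\bigr).
\end{equation*}
By Lemma~\ref{lm1} each of the finitely many summands on the right tends to zero as $n\to\infty$, so the maximum converges to zero in probability, and therefore so does $\mathrm{BF}(\bm{a};\bm{t})$. I expect this passage to be essentially routine once Lemma~\ref{lm1} is in hand; the only conceptual point worth flagging is that the argument relies crucially on Condition~\ref{cond1}. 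If $q$ were allowed to grow with $n$, $|\mathscr{G}|$ would grow super-exponentially in $q$ and the union-bound step would fail, requiring a uniform-in-$G$ strengthening of Lemma~\ref{lm1}; under the stated assumptions no such strengthening is needed.
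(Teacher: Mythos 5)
Your proof is correct and takes essentially the same route as the paper's: both reduce the marginal Bayes factor to the fixed-graph Bayes factors of Lemma~\ref{lm1} by exploiting the finiteness of the decomposable graph space under Condition~\ref{cond1}. The only difference is cosmetic --- the paper bounds the ratio of sums by $N_G(q)$ times the conditional Bayes factor at a maximizing graph, whereas your convex-combination bound plus union bound handles the data-dependence of that maximizing graph a bit more explicitly.
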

\begin{proof}
Since the number of nodes $q$ in the graph is finite, then let $N_G(q)<\infty$ denote the number of all possible graphs. Therefore, for any alternative model $\bm{a}\neq\bm{t}$, we have
	\begin{align*}
	\mathrm{BF}(\bm{a};\bm{t}) 
	& =\frac{P(\mathrm{Y}|\bm{a})}{P(\mathrm{Y}|\bm{t})}
	  =\frac{\int P(\mathrm{Y}|\bm{a},G)\pi(G)dG}
	        {\int P(\mathrm{Y}|\bm{t},G)\pi(G)dG} \\
	& =\frac{\sum_{i=1}^{N_G(q)} P(\mathrm{Y}|\bm{a},G_i)\pi(G_i)}
			{\sum_{i=1}^{N_G(q)} P(\mathrm{Y}|\bm{t},G_i)\pi(G_i)} \\
	& \leq \frac{N_G(q)P(\mathrm{Y}|\bm{a},G_M)\pi(G_M)}
				{P(\mathrm{Y}|\bm{t},G_M)\pi(G_M)}
	 = N_G(q)\frac{P(\mathrm{Y}|\bm{a},G_M)}{P(\mathrm{Y}|\bm{t},G_M)}\rightarrow 0,	
	\end{align*}
where $P(\mathrm{Y}|\bm{a},G_M) = max\{P(\mathrm{Y}|\bm{a},G_i), i=1,\dots, N_G(q)\}$ and $\pi(G_M) = max\{\pi(G_i),i=1,\dots, N_G(q)\}$. Then following the result in Lemma \ref{lm1}, we have a pairwise Bayes factor consistency for regression models.
\end{proof}

Notice that we do not need the number of covariates $p$ to be finite here. As long as the conditions are satisfied, the result of Lemma \ref{lm1} and Theorem \ref{th1} hold accordingly. Next, we discuss the variable selection consistency. For finite dimensional covariate space, the variable selection consistency is an immediate result.

\begin{corollary}\label{col1}
For model given by (\ref{eq:Y})-(\ref{eq:gweight}), under Condition \ref{cond1}-\ref{cond5}, if the  number of covariates $p$ is finite, then $\mathrm{p}\lim_{n\rightarrow\infty} P(\bm{t}|\mathrm{Y}) = 1$.
\end{corollary}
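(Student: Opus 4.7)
The plan is to reduce the corollary to Theorem~\ref{th1} via Bayes's rule and a finite-sum sandwich argument. Writing out the posterior probability of the true model,
\begin{equation*}
P(\bm{t}\mid\mathrm{Y}) \;=\; \frac{P(\mathrm{Y}\mid\bm{t})\,\pi(\bm{t})}{\sum_{\bm{\gamma}} P(\mathrm{Y}\mid\bm{\gamma})\,\pi(\bm{\gamma})} \;=\; \frac{1}{1 + \displaystyle\sum_{\bm{\gamma}\neq\bm{t}} \mathrm{BF}(\bm{\gamma};\bm{t})\,\dfrac{\pi(\bm{\gamma})}{\pi(\bm{t})}},
\end{equation*}
where the sums range over all $2^p$ subsets $\bm{\gamma}\in\{0,1\}^p$ and $\pi(\bm{\gamma})$ denotes the marginal prior obtained after integrating out $\alpha_{\bm\gamma}$ as in Section~3.1, namely $\pi(\bm{\gamma})\propto p_{\bm\gamma}!(p-p_{\bm\gamma})!$. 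Since $p$ is assumed finite, the number of summands $2^p-1$ is finite and the prior ratios $\pi(\bm{\gamma})/\pi(\bm{t})$ are bounded by a constant $K_p$ depending only on $p$.

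First, I would invoke Theorem~\ref{th1}, which gives $\mathrm{BF}(\bm{\gamma};\bm{t})\xrightarrow{p}0$ for every fixed $\bm{\gamma}\neq\bm{t}$. Because a finite sum of terms each converging to zero in probability also converges to zero in probability (equivalently, applying the continuous mapping theorem to the finite-dimensional random vector of Bayes factors), the denominator in the display above satisfies
\begin{equation*}
1 + \sum_{\bm{\gamma}\neq\bm{t}} \mathrm{BF}(\bm{\gamma};\bm{t})\,\frac{\pi(\bm{\gamma})}{\pi(\bm{t})} \;\xrightarrow{p}\; 1.
\end{equation*}
A second application of the continuous mapping theorem (using that $x\mapsto 1/x$ is continuous at $x=1$) then yields $P(\bm{t}\mid\mathrm{Y})\xrightarrow{p}1$, which is the desired conclusion.

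There is essentially no substantial obstacle here; the corollary is a direct corollary of Theorem~\ref{th1} once finiteness of $p$ is used to exchange the limit with a finite sum. The only subtlety worth flagging in the write-up is making explicit that the prior ratio $\pi(\bm{\gamma})/\pi(\bm{t})$ is bounded uniformly in $\bm{\gamma}$ for fixed $p$, so that the convergence of each Bayes factor in probability is not swamped by a diverging prior weight. This is immediate from $\pi(\bm{\gamma})/\pi(\bm{t}) = p_{\bm\gamma}!(p-p_{\bm\gamma})!/[p_{\bm{t}}!(p-p_{\bm{t}})!] \le p!/[p_{\bm{t}}!(p-p_{\bm{t}})!]$, which is a finite constant when $p$ is finite.
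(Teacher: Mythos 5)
Your proposal is correct and follows essentially the same route as the paper: write $P(\bm{t}\mid\mathrm{Y}) = \bigl(1+\sum_{\bm{a}\neq\bm{t}}\frac{\pi(\bm{a})}{\pi(\bm{t})}\mathrm{BF}(\bm{a};\bm{t})\bigr)^{-1}$, note the sum has finitely many terms because $p$ is finite, and invoke Theorem~\ref{th1} for each term. Your explicit remark that the prior ratios $\pi(\bm{\gamma})/\pi(\bm{t})$ are uniformly bounded for fixed $p$ is a small clarification the paper leaves implicit, but it is the same argument.
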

\begin{proof}
	\begin{equation*}
	P(\bm{t}|\mathrm{Y})
	= \frac{\pi(\bm{t})P(\mathrm{Y}|\bm{t})}{\sum_{\bm{a}} \pi(\bm{a})P(\mathrm{Y}|\bm{a})}
	= \bigg( \sum_{\bm{a}} \frac{\pi(\bm{a})p(\mathrm{Y}|\bm{a})}{\pi(\bm{t})p(\mathrm{Y}|\bm{t})} \bigg)^{-1}
	= \bigg(1+ \sum_{\bm{a}\neq\bm{t}} \frac{\pi(\bm{a})}{\pi(\bm{t})} \mathrm{BF}({\bm{a};\bm{t}}) \bigg)^{-1} \rightarrow 1,
\end{equation*}
where $\pi(\cdot)$ is the prior on the regression model. Notice, the last summation has finite number of terms, since the covariate space is finite. Then the rest follows directly form Theorem \ref{th1}.
\end{proof}

Therefore, the variable selection is consistent when the model space is finite. Corollary \ref{col1} does not depend on the graph, which means it holds even if we do not identify the true graph.

\section{Simulation Study}
In this section, we present the simulation study for our method considering the hierarchical model from Section 2. In order to justify the necessity for a covariate adjusted mean structure, we compare the graph estimation among spline regression, linear regression and no covariates (graph only) assuming an underlying nonlinear covariate structure as the true model.

Considering the hierarchical model in Section 2, we choose $p=30$, $q=40$, $n=700$. All predictors $x_{ij}$, $i=1, \dots, n$ and $j=1, \dots, p$ are simulated from uniform distribution $(-1, 1)$. For the nonlinear regression structure, we use a relatively simple and smooth function, similar to the function used in \cite{denison1998automatic},
\begin{equation}\label{sim}
f_i(\mathrm{x}_j) = h_{i1}\sin(x_{j5}) + h_{i2}\sin(x_{j11}) + h_{i3} x_{j17} + h_{i4}e^{x_{j24}}, i=1,2,\dots, q,
\end{equation}
where $\mathrm{x}_j = (x_{j1}, x_{j2}, \dots, x_{jp})$ is the $j$th sample of data $\mathrm{X} = (\mathrm{x}^T_1, \mathrm{x}^T_2, \dots, \mathrm{x}^T_q)^T$. The true set of predictors is $\{5, 11, 17, 24\}$. The coefficients in (\ref{sim}) are simulated from exponential distribution $\exp(1)$. Figure \ref{fig2by2}(a) shows the true adjacency matrix for the true graph $G$. The true covariance matrix $\Sigma_G$ is generated from $\mathrm{HIW}_G(3, I_q)$. And the columns of error matrix $\mathrm{E}$ are $n$ random draws from multivariate normal distribution $N_q(0, \Sigma_G)$. Thus $\mathrm{Y} = F(\mathrm{X}) + \mathrm{E}$, where $F(\mathrm{X}) =\big(f_j(\mathrm{x}_i)\big)_{ij}$. For hyperparameters, we use $g=n=700$, $b=3$, $d=1$ and $\delta=\eta=1/2$ in the stochastic search of graphs.

For spline basis functions, we use 10 fixed knot points which divide $(-1,1)$ evenly into 11 intervals. 100,000 MCMC iterations are performed after 10,000 burn-in steps. We use a similar true graph as in \cite{bhadra2013joint} in the simulation. The results are quite fascinating. For variable selection, after burn-in iterations, it quickly converges to the true set of predictors. Furthermore for the variable selection, if we only use the linear regression model to estimate the nonlinear structure, it misses some important predictors. As shown in the simulation study, the exponential term has not been identified without the spline regression. Although the linear case selects most of the correct variables, estimates of mean functions are completely wrong, which misleads the graph estimation completely as we show next.

\begin{figure}[h]
	\centering
	\subfloat[][]{\includegraphics[width=.4\textwidth]{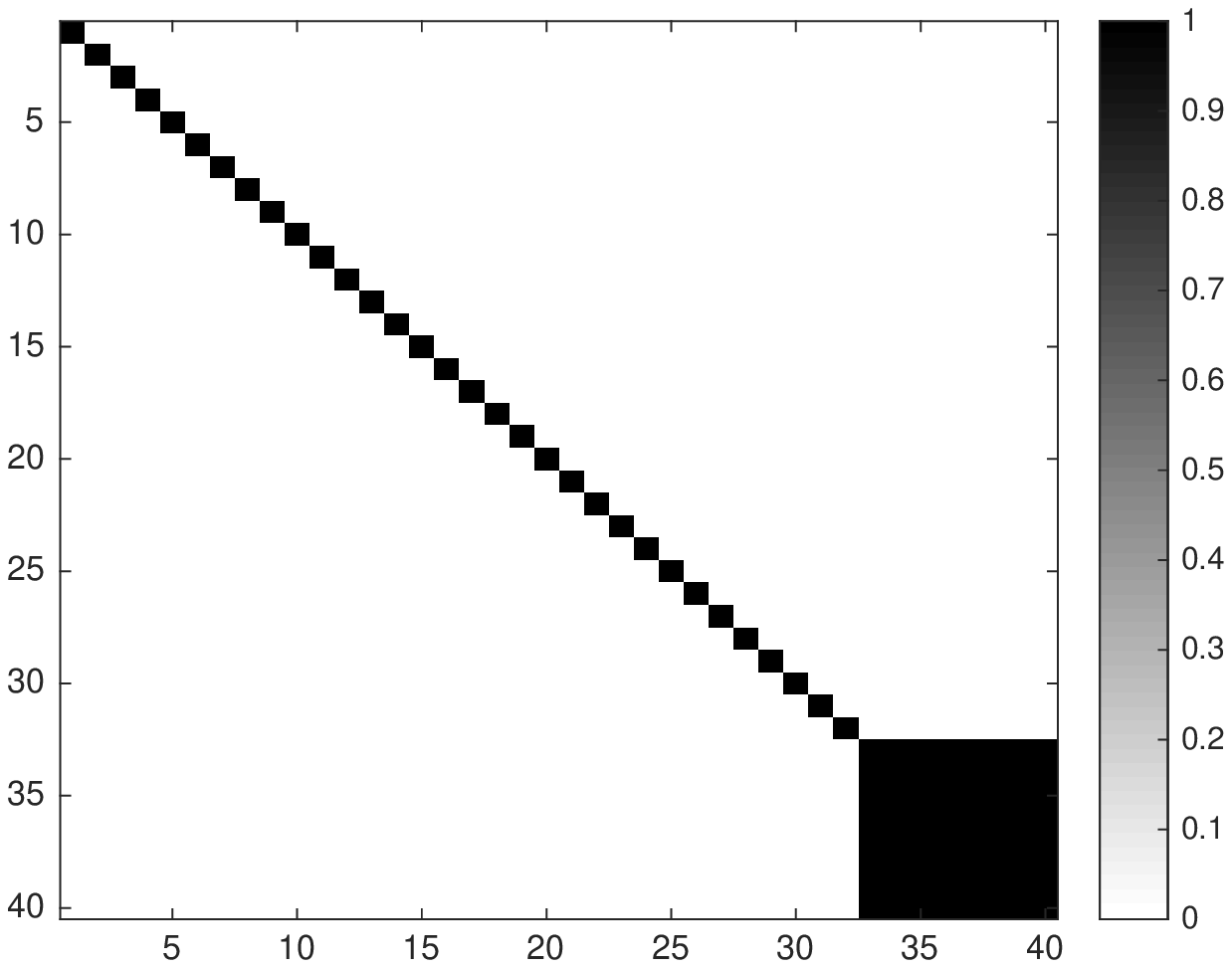}}\quad
	\subfloat[][]{\includegraphics[width=.4\textwidth]{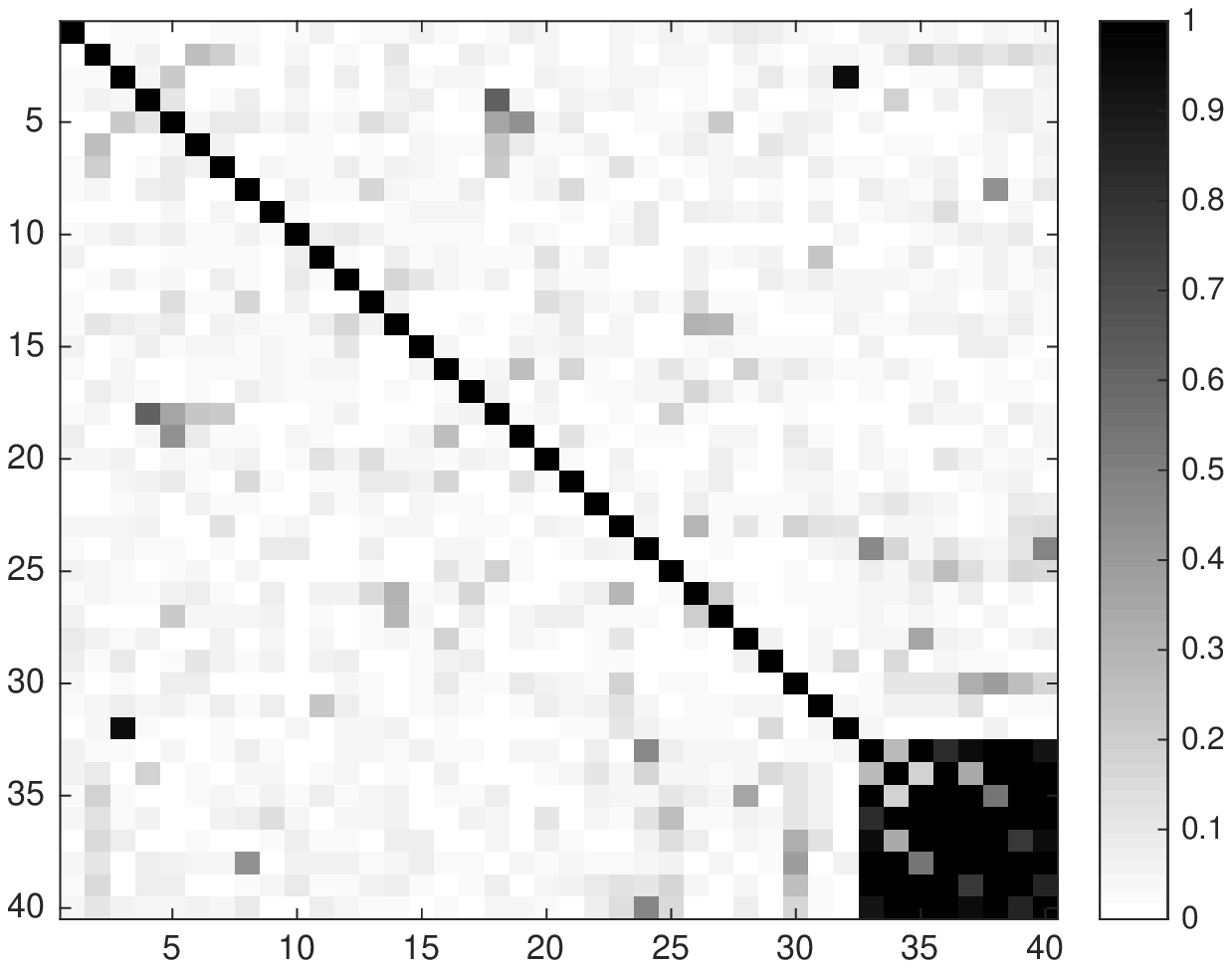}}\\
	\subfloat[][]{\includegraphics[width=.4\textwidth]{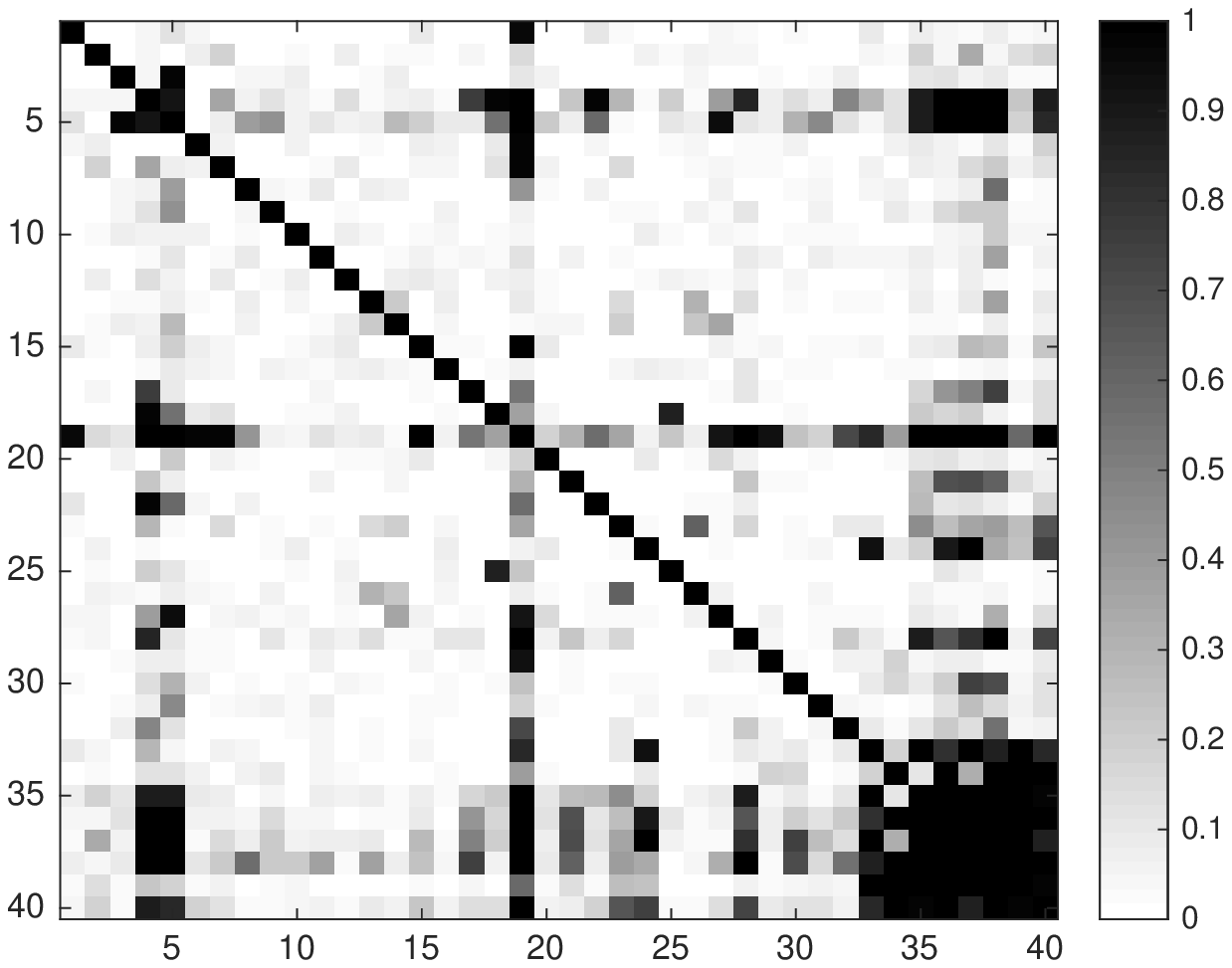}}\quad
	\subfloat[][]{\includegraphics[width=.4\textwidth]{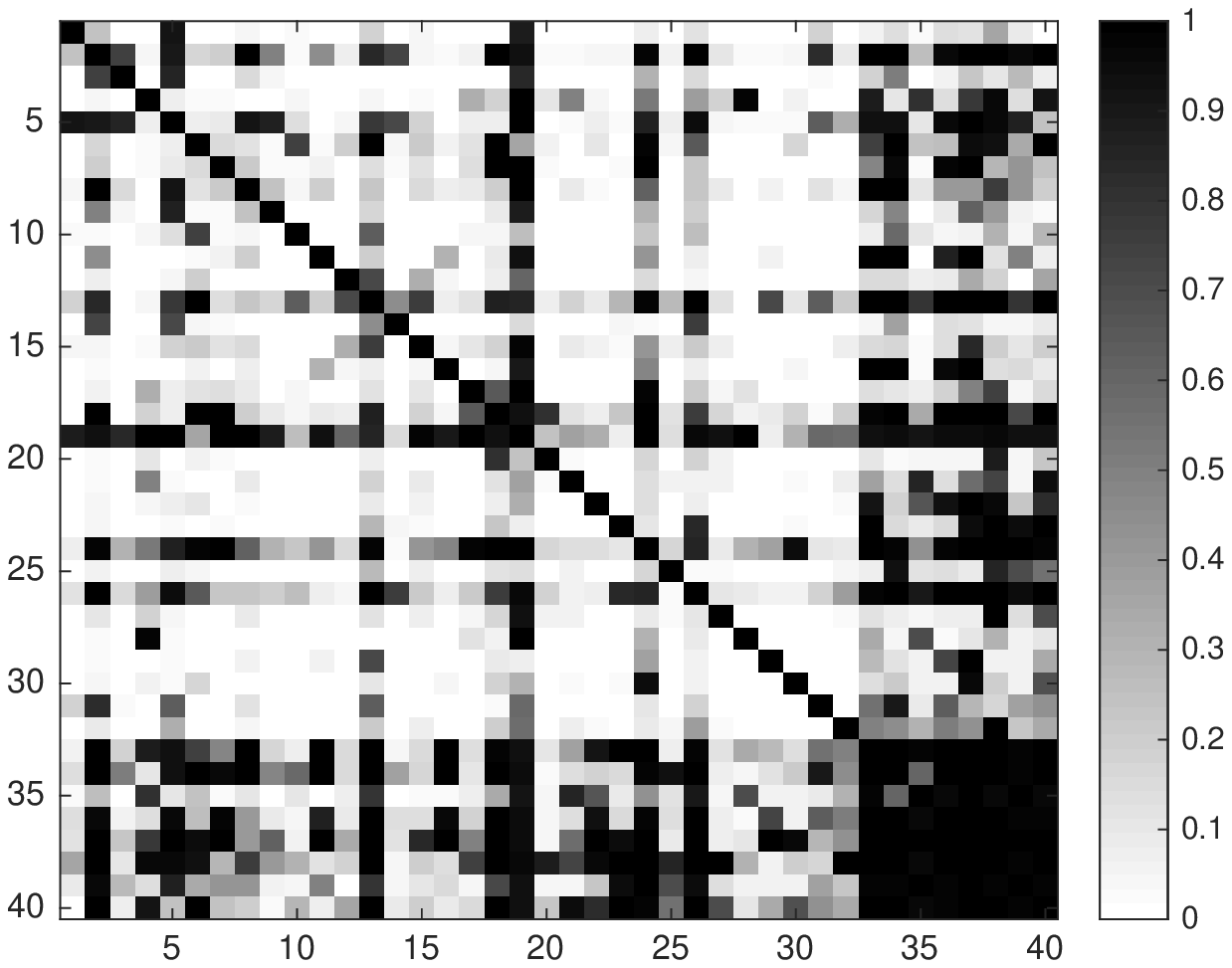}}
	\caption{(a) The adjacency matrix of the true graph $G$. In the adjacency matrix, 1 indicates an edge and 0 indicates no edge. So edges are represented by black bricks. The diagonal entries are 1 by default. (b), (c) and (d) are the marginal posterior probabilities of each edge in the estimated graph on a gray scale under spline regression, linear regression and without covariates, respectively.}
	\label{fig2by2}
\end{figure}

Here, we use marginal posterior probability for each edge to choose our final estimation of the graph. Marginal probabilities are calculated by averaging all adjacency matrices in the MCMC chain. The cut-off point is 0.5, which means we only select the edge with posterior probability more than half. The cut-off point can also be varied to accomplish different degrees of sparsity for the estimated graph.

Figure \ref{fig2by2}(b) shows when using spline regression to capture the nonlinear mean structure, the major parts of the true graph can be recovered. On the other hand, from Figure \ref{fig2by2}(c), we can see that the linear regression model fails to estimate the residual terms properly, hence the estimated graph is completely wrong. It may still capture a few true edges, however a large number of false edges have been added. Thus, modeling the true mean structure is essential for estimating the graph. That way, specification of an incorrect mean structure (e.g. using linear function to estimate nonlinear function, choosing a wrong set of covariates) always leads to a wrong graph estimation. This can also be illustrated by ignoring the covariates to estimate the graph. Figure \ref{fig2by2}(d) shows in this scenario the estimated graph is again completely wrong. We plot the receiver operating characteristic (ROC) curves for the above three cases in Figure 2. As we can see the joint models (i.e. spline and linear regression model) perform  better than no covariates. Furthermore, the ROC curve of spline regression model is nearly perfect.   

\begin{figure}[H] \label{ROC}
	\centering
	\includegraphics[width=0.5\textwidth]{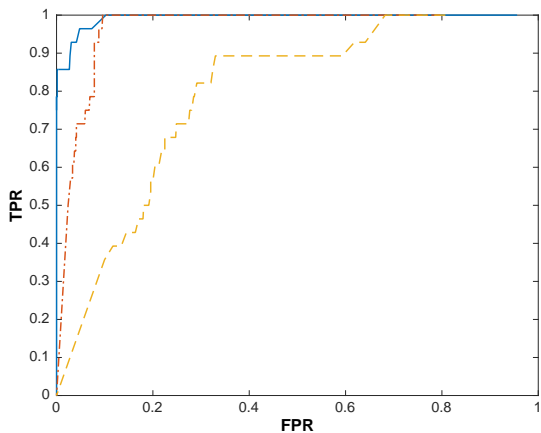}
	\caption{Plot of ROC curves for graph selection. The blue, red and yellow lines are spline regression model, linear regression model and no covariates (graph only) model, respectively.}
\end{figure}

\section{Protein-mRNA data}

In this section, we apply our method to a protein-mRNA data set from The Cancer Proteome Atlas (TCPA) project. The major goals of this analysis are (i)  to identify the influential gene expression based drivers, i.e. mRNAs which  influence the protein activities and (ii) to estimate the protein network, simultaneously. The central dogma to our model is the well-known fact that mRNA which is the messenger of DNA from transcription plays a critical role in proteomics by a process called translation. Consequently, the protein expressions play a crucial role for the development of tumor cells. Therefore, to identify which mRNAs dominate this process is the key component in this oncology study. This also motivates us to regress the protein level data on the mRNA based data. Multivariate regression is a powerful tool for combining information across all regressions. One can use an univariate regression model on each protein. However, there are multiple advantages of our model to apply in this scenario. First, it combines the information across all responses, i.e. protein expressions. As we know, proteins tend to work together as a network. Performing single regression separately on each of them will lose information which lies in this protein network. Second, by jointly modeling all proteins, we obtain a graph of all proteins after correcting the mRNA effects from the mean structure. In our analysis, we choose the breast cancer data which has the largest sample size of 844 among other tumors.  Based on different DNA functions, proteins are categorized into 12 pathways with their corresponding mRNAs. For more details about these pathways, see \cite{akbani2014pan} and \cite{ha2016personalized}. We apply our model for each pathway since proteins from the same pathway exhibit similar behaviors.  We use spline based regression to further investigate the different nonlinear relationships among proteins and mRNAs. The results of the covariate selection and the graph estimation are summarized below.

For the standardized design matrix, we use ten evenly distributed knot points for spline basis to capture the nonlinearities between proteins and mRNAs. Since the observations of mRNAs are not uniformly distributed across their ranges, we use the penalized spline regression proposed by \cite{crainiceanu2005bayesian} to solve the rank deficiency problem in the $g$-prior. The selection results along with the number of proteins and mRNAs used in each pathway are summarized in the table below. Four of those pathways don't have any influential mRNA which controls the proteins. The rest seven pathways all have one or more related mRNAs according to the results. For example, the pathway of Apoptosis is about programmed cell death. The model selects only the mRNA corresponding to gene BCL2. BCL2 is an anti-cell death gene (\cite{tsujimoto1998role}). Proteins in BCL2 family play an important role in control of apoptosis. Studies have found that they constitute a life or death decision point for cells in a common pathway involved in various forms of Apoptosis (\cite{tsujimoto1998role}). In this sense, our model identifies the correct mRNA that dominates this Apoptosis process. BCL2 also contributes to the pathway about hormone receptor and signaling. CCNE1 has been selected in the pathway of cell cycle. It has been found that there is an association between CCNE1 amplification and breast cancer treatment (\cite{etemadmoghadam2013synthetic}). CCNE1 is also related to the endometrioid endometrial carcinomas (\cite{nakayama2016ccne1}, \cite{cocco2016dual}). CDH1 has been selected in the pathway of core reactive and EMT. Mutations in CDH1 have been observed to be associated with increased susceptibility to develop lobular breast cancer (\cite{schrader2008hereditary}, \cite{polyak2009transitions}). From the selection results, INPP4B is related to PI3K/AKT and hormone receptor and signaling pathways. Interestingly, there are emerging evidences that INPP4B identified as a tumor suppressor regulates PI3K/AKT signaling in breast cancer cell lines (\cite{bertucci2013phosphoinositide}). For more information about the relationship between INPP4B and PI3K/ATK pathway, see \cite{mcauliffe2010deciphering}, \cite{miller2011mutations} and \cite{gasser2014sgk3}. Both ERBB2 and GATA3 genes have strong influence in breast cancer, see \cite{harari2000molecular}, \cite{revillion1998erbb2}, \cite{kallioniemi1992erbb2}, \cite{dydensborg2009gata3} and \cite{yan2010gata3}. For a full list of gene names related to each of 12 pathways, see supplementary table S2 in \cite{ha2016personalized}. For the plots of estimated nonlinear functions of each node in all seven pathways, see Appendix 3.







\begin{table}[H] \label{tab1}
	\caption{12 pathways and mRNA selection results} 
	\centering 
	\scalebox{0.9}{
	\begin{tabular}{c c c c c} 
		\hline\hline 
		\# & pathway names & \# of proteins & \# of mRNAs & mRNA selected\\ [0.5ex] 
		\hline 
		1  & Apoptosis           &  9 &  9  &  BCL2  \\ 
		2  & Breast reactive     &  6 &  7  &   -    \\
		3  & Cell cycle          &  8 &  7  & CCNE1  \\
		4  & Core reactive       &  5 &  5  &  CDH1  \\
		5  & DNA damage response & 11 & 11  &   -    \\
		6  & EMT                 &  7 &  7  &  CDH1  \\
		7  & PI3K/AKT            & 10 & 10  &INPP4B  \\
		8  & RAS/MAPK            &  9 & 10  &   -    \\
		9  & RTK                 &  7 &  5  &  ERBB2 \\
		10 & TSC/mTOR            &  8 &  5  &   -    \\
		11\&12 & Hormone receptor\&signaling & 7 & 4 & INPP4B, GATA3, BCL2 \\ [1ex] 
		\hline 
	\end{tabular}
	}
	\label{table:nonlin} 
\end{table}

The protein networks for all 12 pathways are shown in Figure 3. The number on each edge is the estimated partial correlation between two proteins it connects. Green edge means positively partial correlated; red edge means negatively partial correlated. The thickness of the edge represents the magnitude of the absolute value of partial correlation. Within each network, majority of the proteins tends to be positively correlated, which means most of the proteins are working together within each pathway. Proteins which are related to the same gene family have high positive correlation in the graph. For example, AKTPS and AKTPT (AKT gene family) in PI3K/AKT pathway, GSK3A and GSK3P (GSK gene family) in PI3K/AKT pathway, and S6PS24 and S6PS23 (RPS6 gene family) in TSC/mTOR pathway. We define the degree of freedom for nodes as the number of edges connected to them. Then hub nodes are the nodes which have the largest degree of freedom in each pathway. These are the proteins which have the maximum connectivities and interact heavily with other proteins. The summary of hub nodes are shown in the table below (the number in the bracket is the degree of freedom of the hub node).

\begin{table}[H] \label{tab2}
	\caption{hub nodes in each pathway}
	\centering 
	\scalebox{0.9}{
	\begin{tabular}{c c c c} 
		\hline\hline 
		pathway & hub nodes & pathway & hub nodes\\ [0.5ex]
		\hline
		Apoptosis           & BAX(7)             & Breast reactive & RAB(5), RBM(5) \\
		Cell cycle          & CYCLINE1(7)        & Core reactive & CAV(4), ECA(4)   \\
		DNA damage response & XRC(9)             & EMT & ECA(6), COL(6)             \\
		PI3K/AKT            & ATKPT(9), GSK3P(9) & RAS/MAPK & CJU(8)                \\
		RTK                 & EGFRPY10(6), HER3(6), SHC(6) & TSC/mTOR & S6PS23(6)   \\ 
		Hormone receptor\&signaling  & INPP4B(6) \\ [1ex]		
		\hline
	\end{tabular}
	}
\end{table}

\newcommand*{\addheight}[2][.5ex]{%
  	\raisebox{0pt}[\dimexpr\height+(#1)\relax]{#2}%
}

\section{Further Discussion}
Our model provides a framework for jointly learning about the nonlinear covariate structure as well as the dependence graph structures for the Gaussian graphical model. We used fixed knot splines for estimation of nonlinear functions. This can be extended for adaptive splines or other adaptive basis functions (\cite{denison1998automatic}). We have introduced our model for decomposable graphs but it can be extended for more general settings. Apart from genomic applications, there are numerous problems that arise in finance, econometrics, and biological sciences where nonlinear graph models can be a useful approach to modeling and therefore, we expect our inference procedure to be effective in those applications. Extension from Gaussian to non-Gaussian models is an interesting topic for future research.

\section*{Acknowledgements}
Nilabja Guha is supported by the NSF award \#2015460.



\begin{table}[H] \label{tab3}
	\centering 
	\scalebox{0.9}{
	\begin{tabular}{c c c}
		\addheight{\includegraphics[width=55mm]{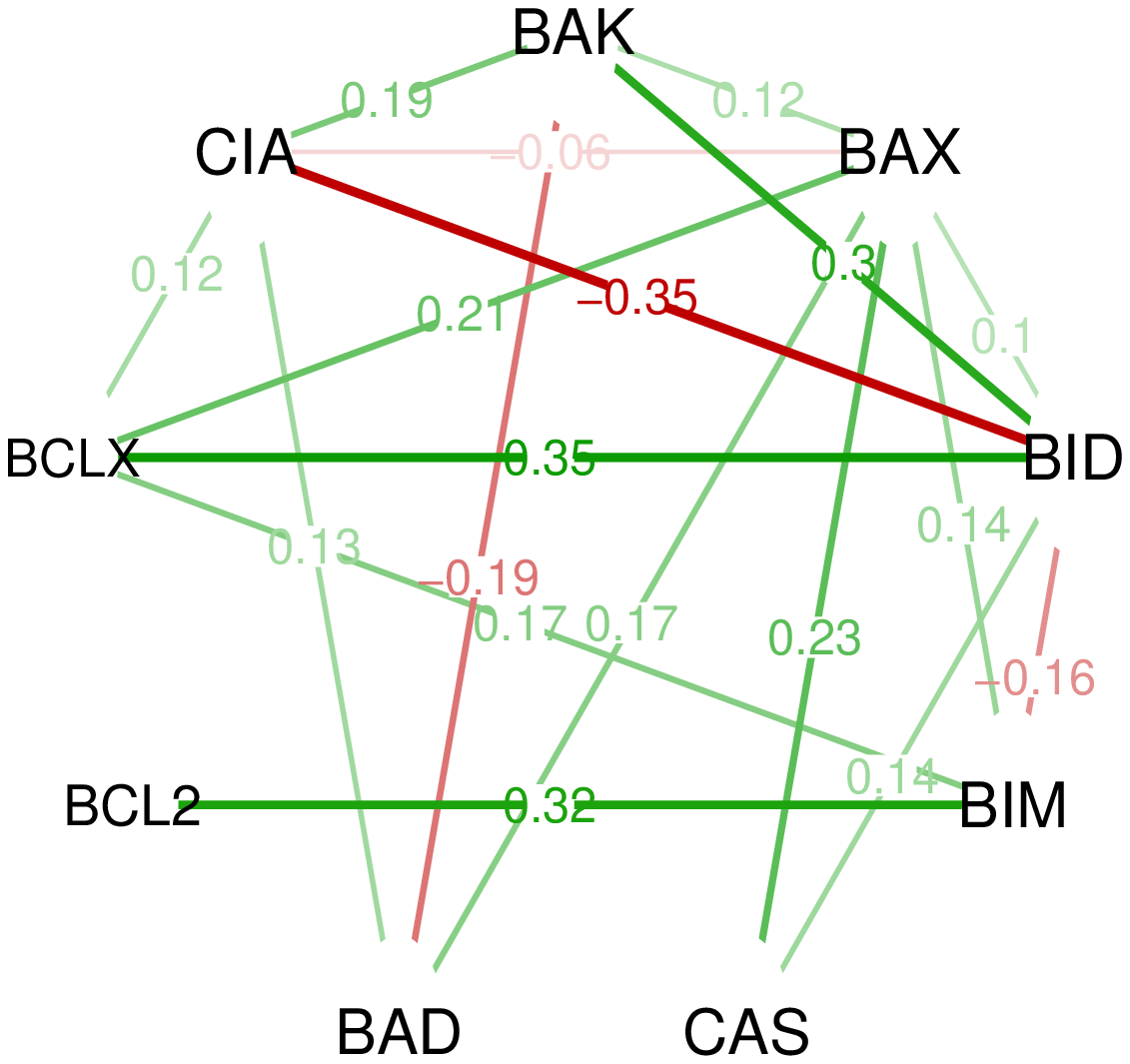}} &
		\addheight{\includegraphics[width=55mm]{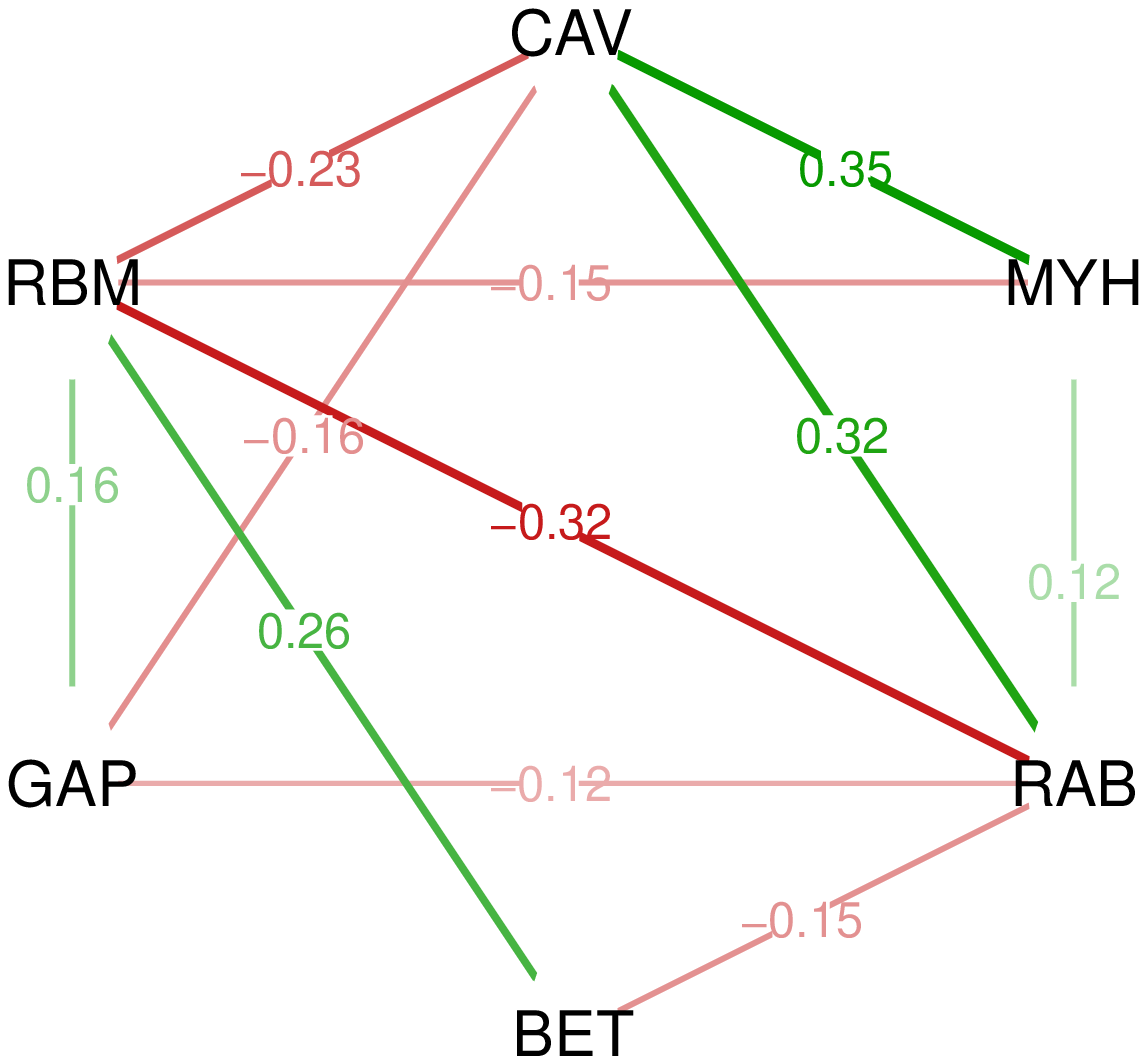}} &
		\addheight{\includegraphics[width=55mm]{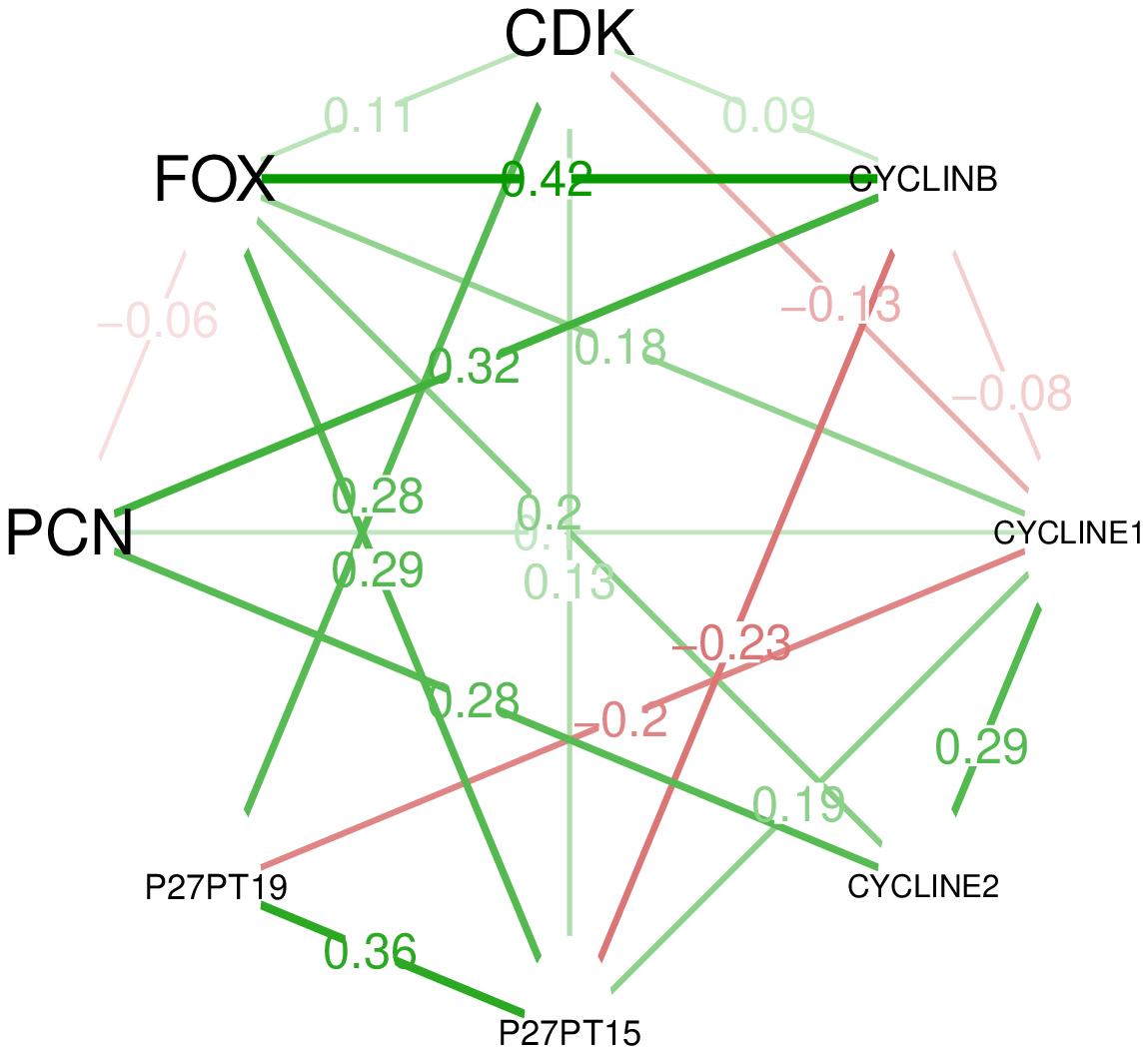}} \\
		\small Apoptosis &  Breast reactive  & Cell cycle \\
		\addheight{\includegraphics[width=55mm]{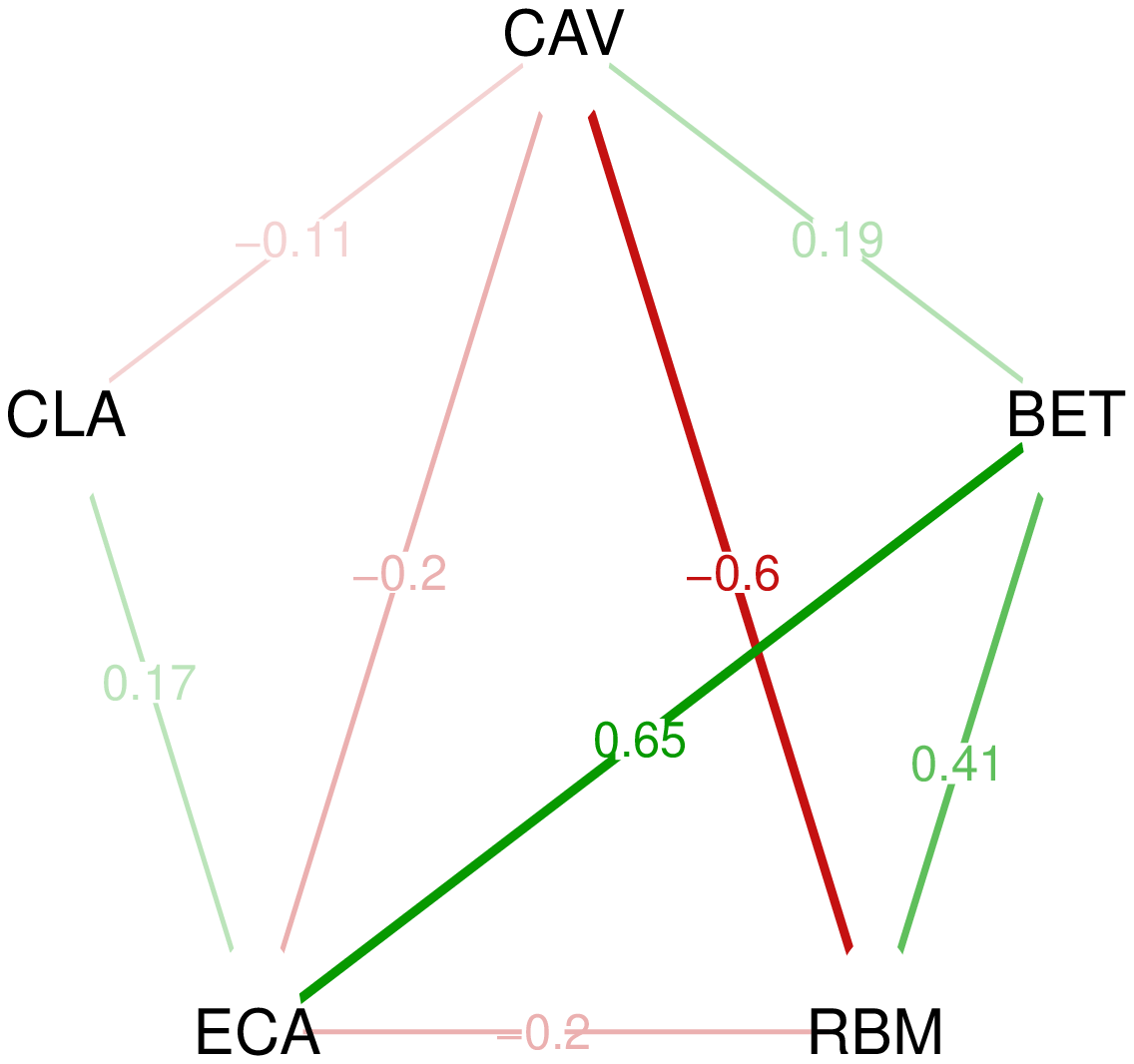}} &
		\addheight{\includegraphics[width=55mm]{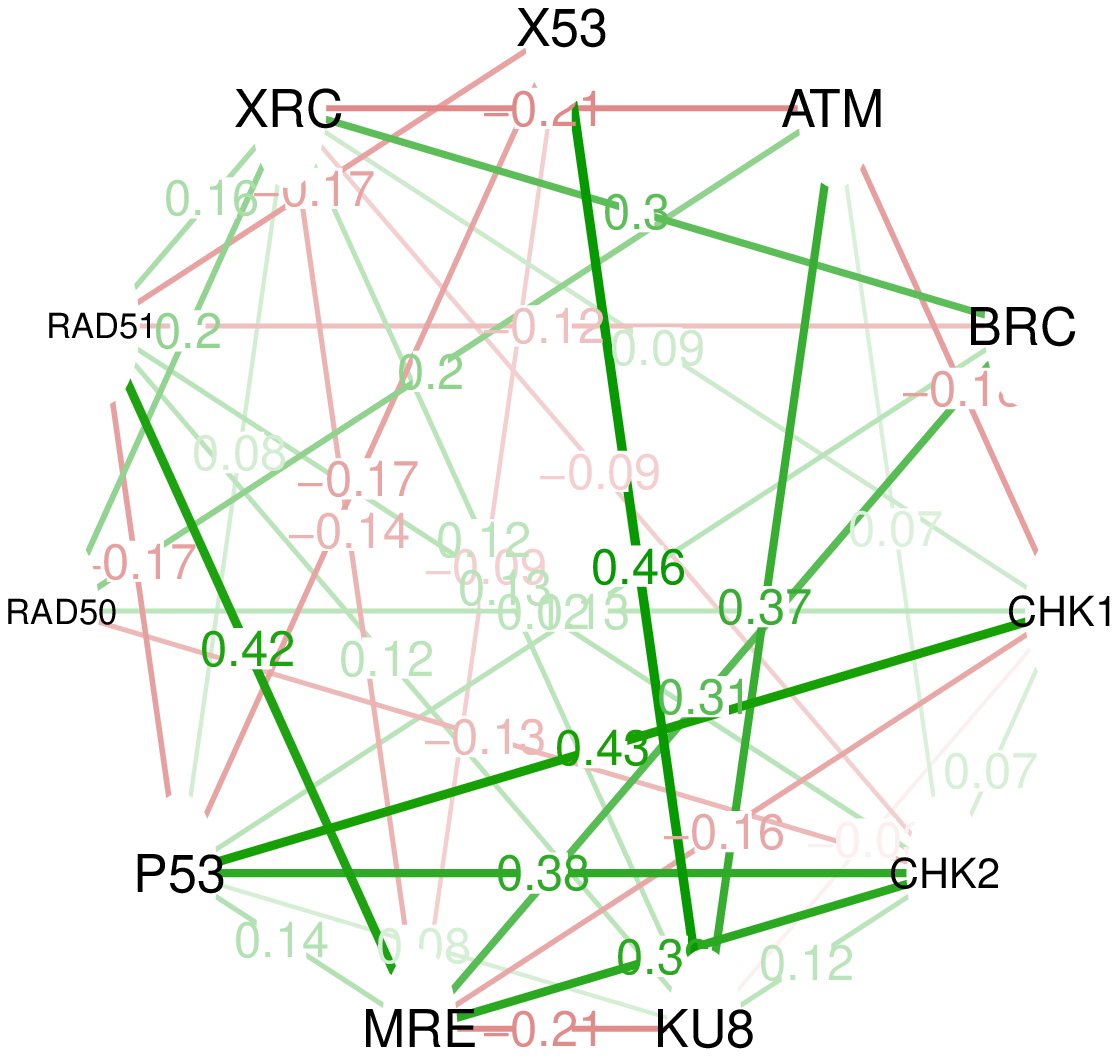}} &
		\addheight{\includegraphics[width=55mm]{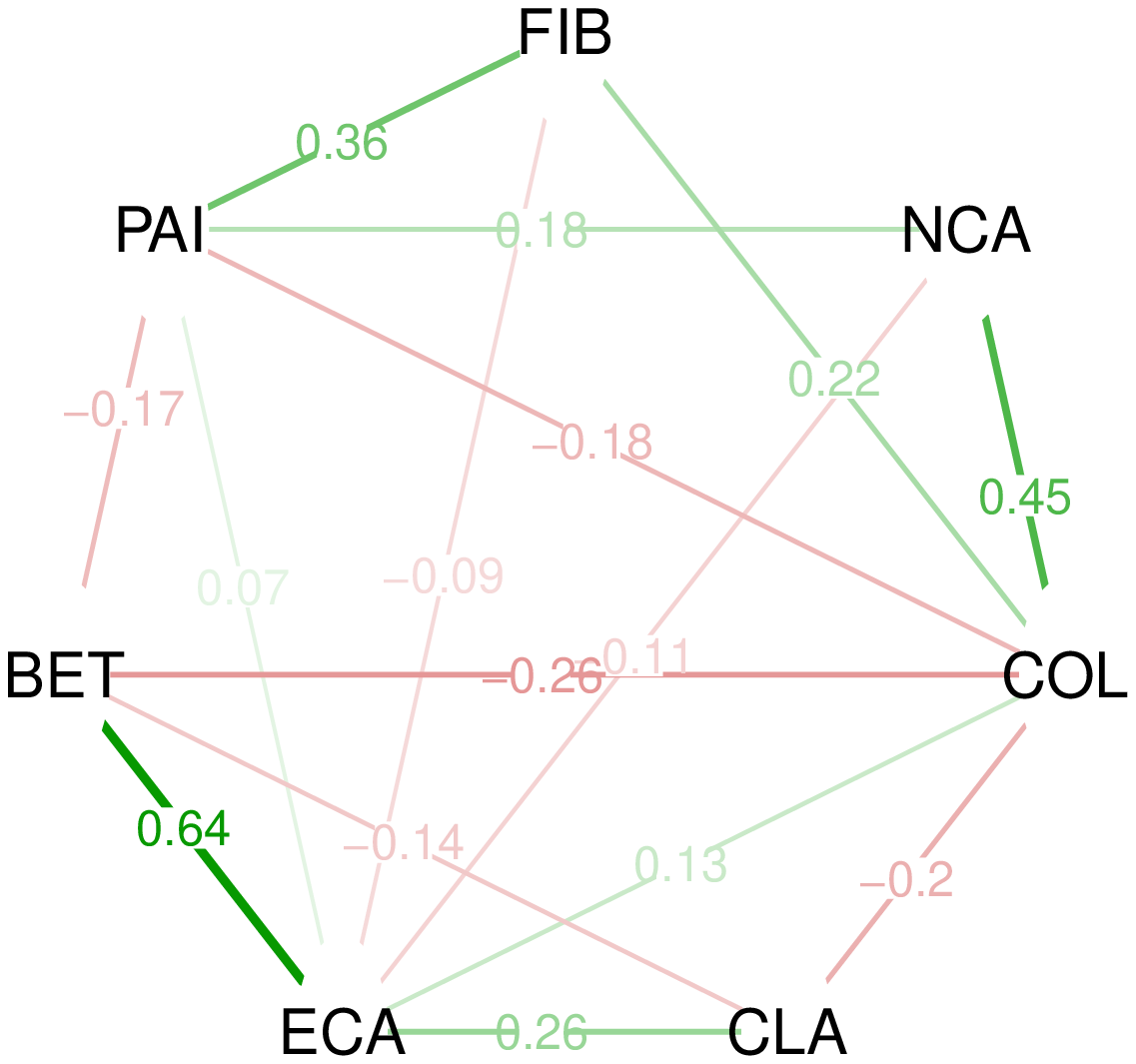}} \\	
		\small Core reactive & DNA damage response & EMT \\
		\addheight{\includegraphics[width=55mm]{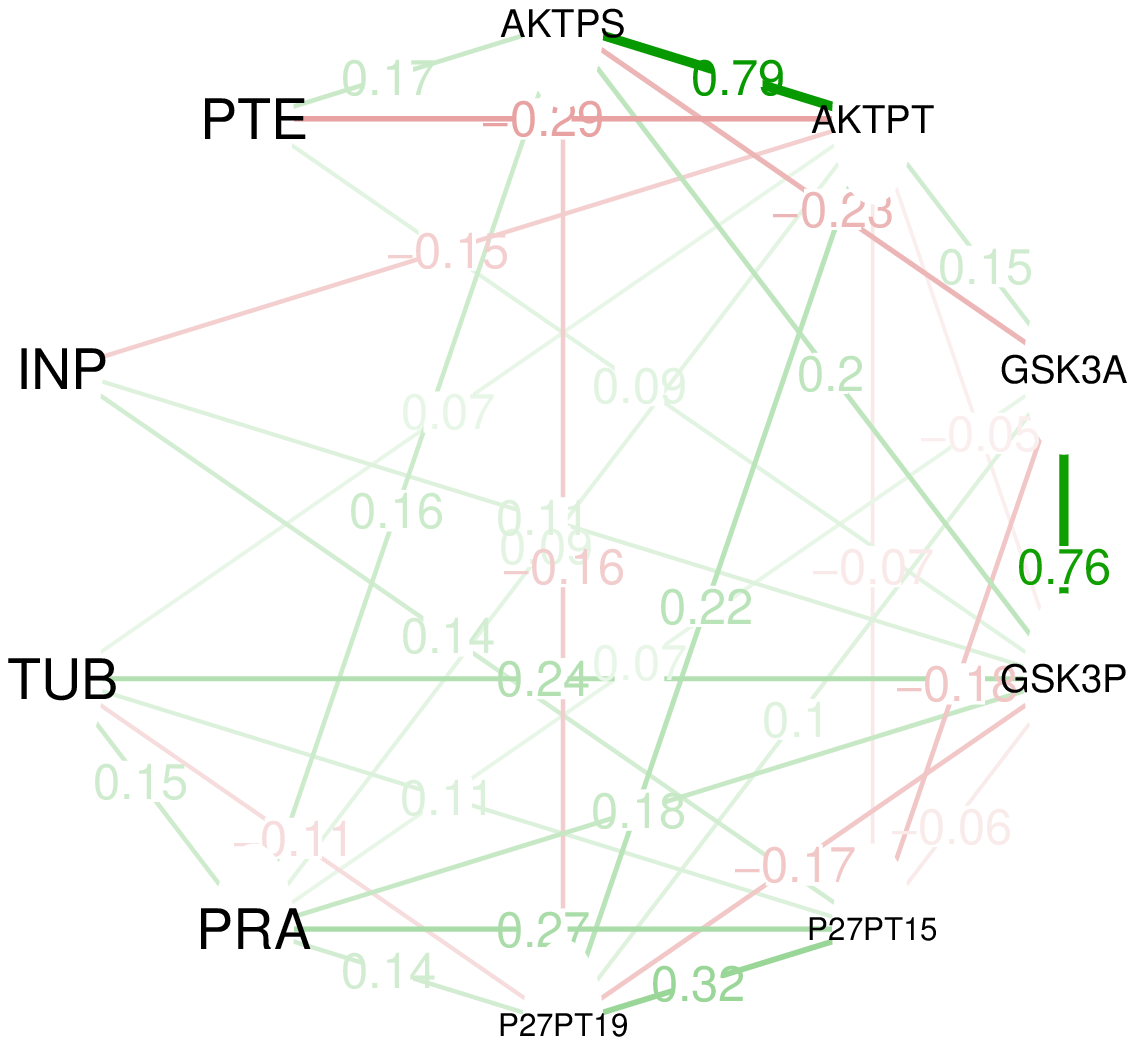}} &
		\addheight{\includegraphics[width=55mm]{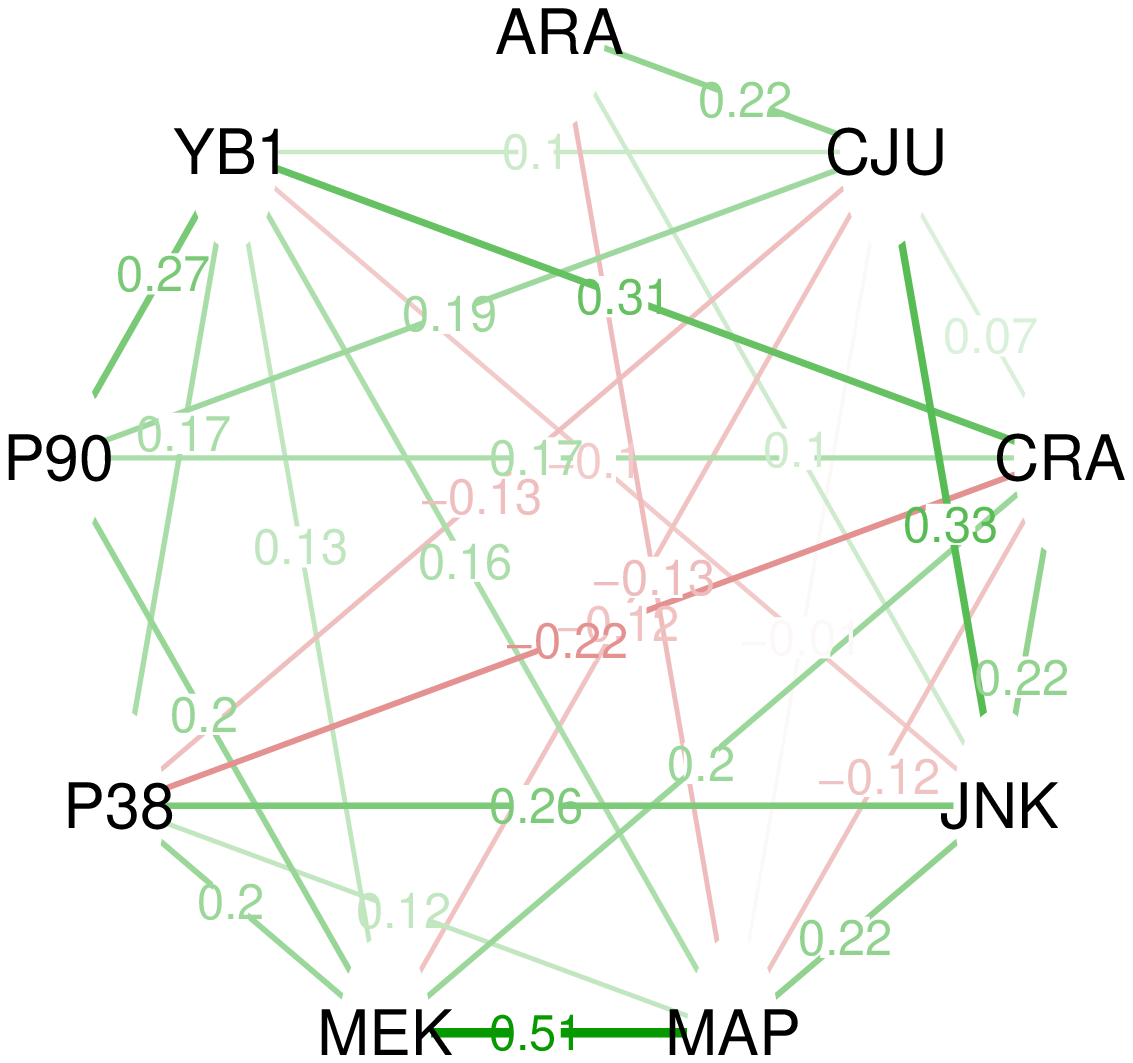}} &
		\addheight{\includegraphics[width=55mm]{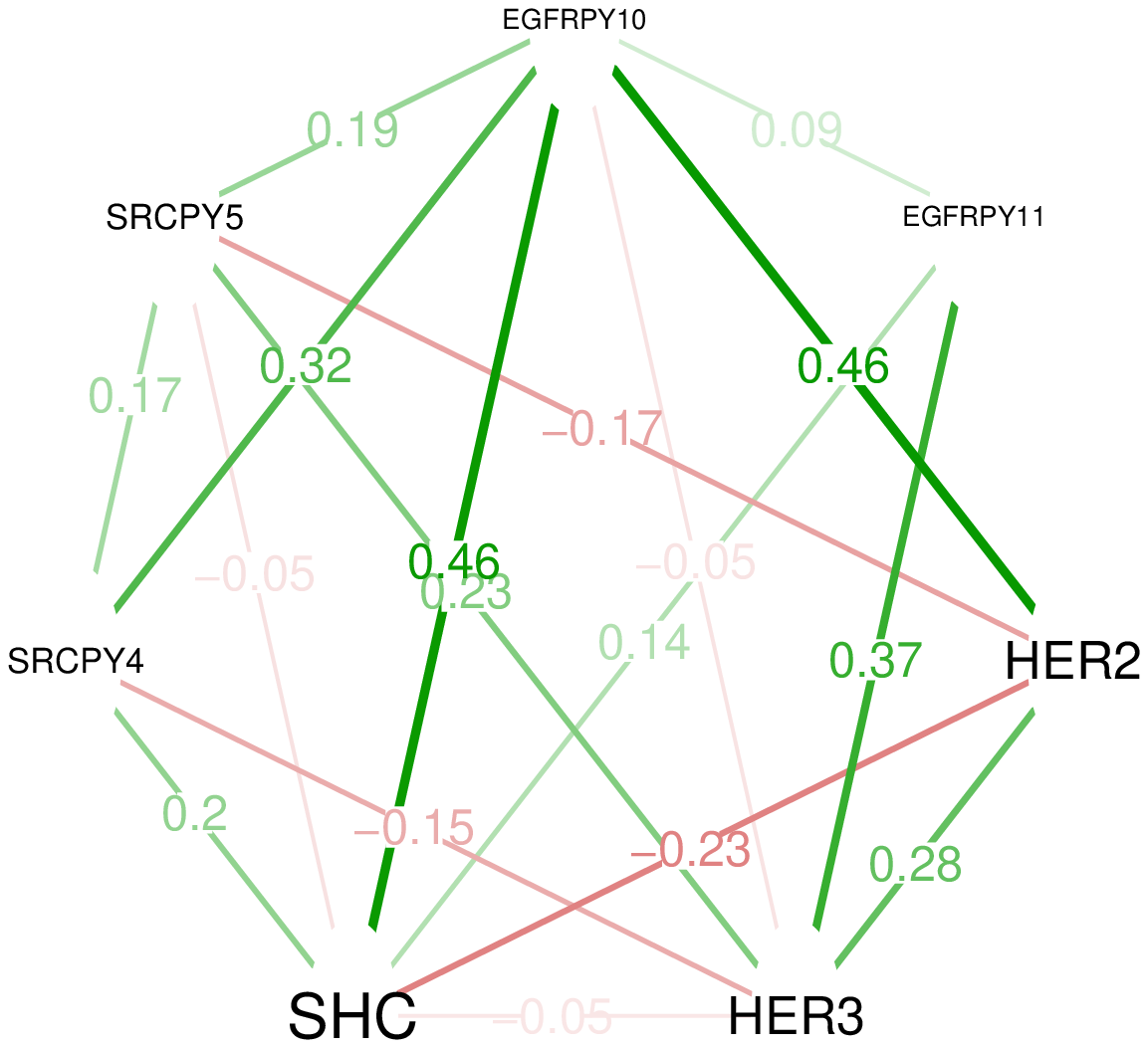}} \\
		\small PI3K/AKT & RAS/MAPK & RTK \\
		\addheight{\includegraphics[width=55mm]{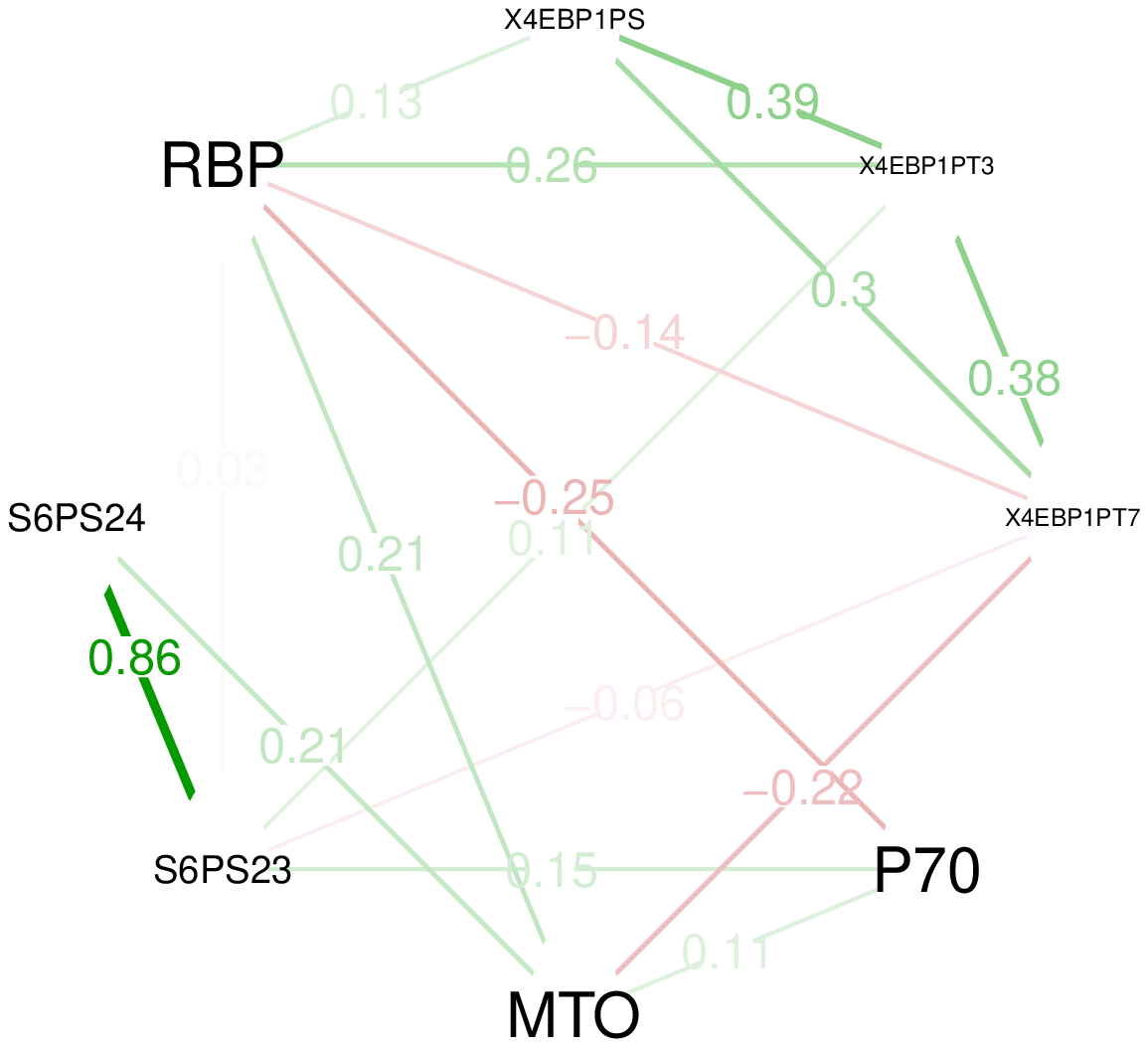}} &
		\addheight{\includegraphics[width=55mm]{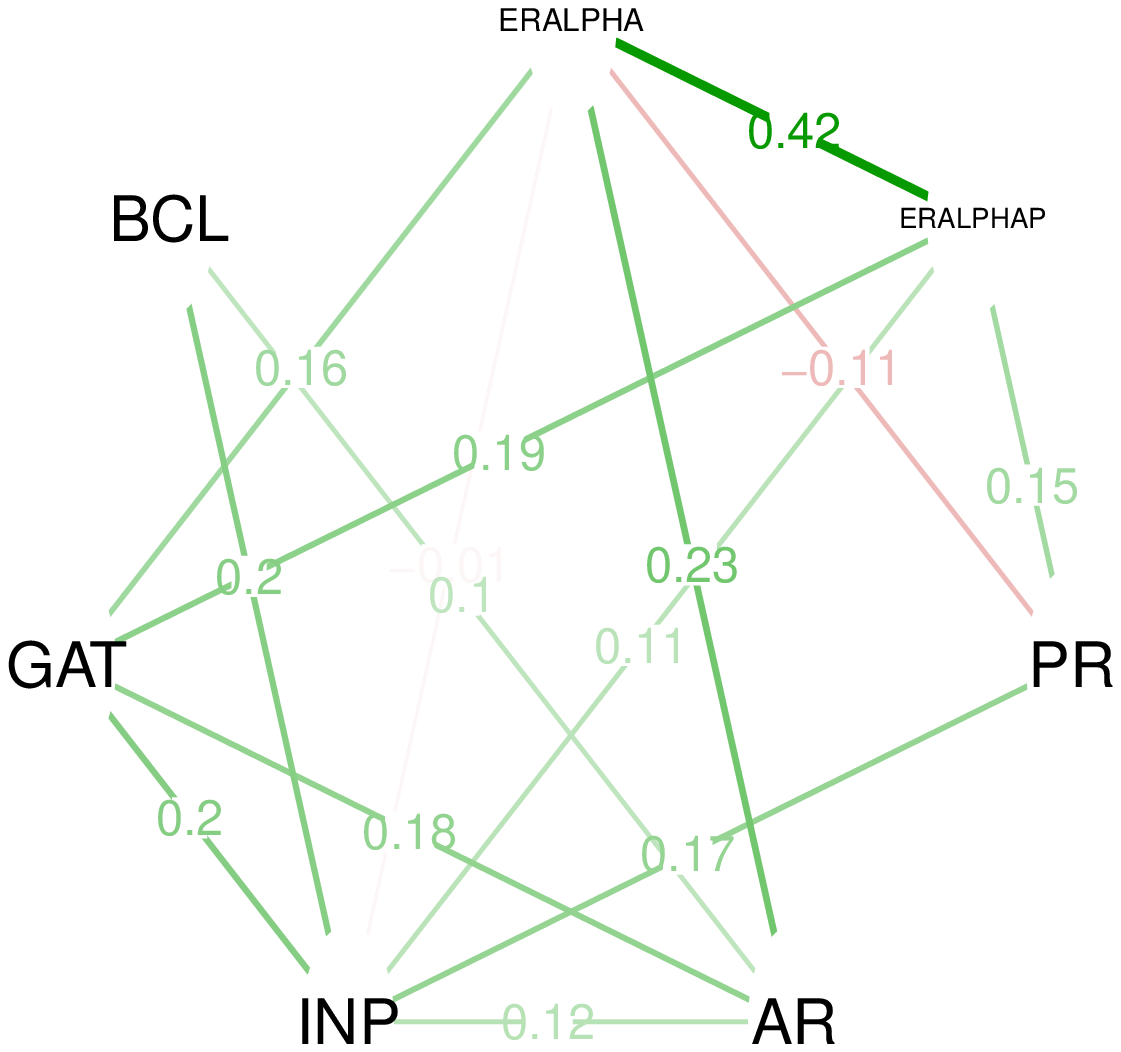}} \\
		\small TSC/mTOR & Hormone receptor\&signaling \\
	\end{tabular}
	}
	\caption*{Figure 3: Protein networks for all 12 pathways}
\end{table}

\bibliography{paper1-nyb}

\begin{thebibliography}{46}
\expandafter\ifx\csname natexlab\endcsname\relax\def\natexlab#1{#1}\fi

\bibitem[{Akbani et~al.(2014)Akbani, Ng, Werner, Shahmoradgoli, Zhang, Ju, Liu,
  Yang, Yoshihara, Li et~al.}]{akbani2014pan}
\textsc{Akbani, R.}, \textsc{Ng, P. K.~S.}, \textsc{Werner, H.~M.},
  \textsc{Shahmoradgoli, M.}, \textsc{Zhang, F.}, \textsc{Ju, Z.}, \textsc{Liu,
  W.}, \textsc{Yang, J.-Y.}, \textsc{Yoshihara, K.}, \textsc{Li, J.} et~al.
  (2014).
\newblock A pan-cancer proteomic perspective on the cancer genome atlas.
\newblock \textit{Nature communications} \textbf{5}, 3887.

\bibitem[{Atay-Kayis \& Massam(2005)}]{atay2005monte}
\textsc{Atay-Kayis, A.} \& \textsc{Massam, H.} (2005).
\newblock A monte carlo method for computing the marginal likelihood in
  nondecomposable gaussian graphical models.
\newblock \textit{Biometrika} \textbf{92}, 317--335.

\bibitem[{Bertucci \& Mitchell(2013)}]{bertucci2013phosphoinositide}
\textsc{Bertucci, M.~C.} \& \textsc{Mitchell, C.~A.} (2013).
\newblock Phosphoinositide 3-kinase and inpp4b in human breast cancer.
\newblock \textit{Annals of the New York Academy of Sciences} \textbf{1280},
  1--5.

\bibitem[{Bhadra \& Mallick(2013)}]{bhadra2013joint}
\textsc{Bhadra, A.} \& \textsc{Mallick, B.~K.} (2013).
\newblock Joint high-dimensional bayesian variable and covariance selection
  with an application to eqtl analysis.
\newblock \textit{Biometrics} \textbf{69}, 447--457.

\bibitem[{Cai et~al.(2012)Cai, Li, Liu \& Xie}]{cai2012covariate}
\textsc{Cai, T.~T.}, \textsc{Li, H.}, \textsc{Liu, W.} \& \textsc{Xie, J.}
  (2012).
\newblock Covariate-adjusted precision matrix estimation with an application in
  genetical genomics.
\newblock \textit{Biometrika} \textbf{100}, 139--156.

\bibitem[{Carvalho et~al.(2007)Carvalho, Massam \&
  West}]{carvalho2007simulation}
\textsc{Carvalho, C.~M.}, \textsc{Massam, H.} \& \textsc{West, M.} (2007).
\newblock Simulation of hyper-inverse wishart distributions in graphical
  models.
\newblock \textit{Biometrika} \textbf{94}, 647--659.

\bibitem[{Carvalho \& Scott(2009)}]{carvalho2009objective}
\textsc{Carvalho, C.~M.} \& \textsc{Scott, J.~G.} (2009).
\newblock Objective bayesian model selection in gaussian graphical models.
\newblock \textit{Biometrika} \textbf{96}, 497--512.

\bibitem[{Cocco et~al.(2016)Cocco, Lopez, Black, Bellone, Bonazzoli, Predolini,
  Ferrari, Schwab, Menderes, Zammataro et~al.}]{cocco2016dual}
\textsc{Cocco, E.}, \textsc{Lopez, S.}, \textsc{Black, J.}, \textsc{Bellone,
  S.}, \textsc{Bonazzoli, E.}, \textsc{Predolini, F.}, \textsc{Ferrari, F.},
  \textsc{Schwab, C.~L.}, \textsc{Menderes, G.}, \textsc{Zammataro, L.} et~al.
  (2016).
\newblock Dual ccne1/pik3ca targeting is synergistic in
  ccne1-amplified/pik3ca-mutated uterine serous carcinomas in vitro and in
  vivo.
\newblock \textit{British journal of cancer} .

\bibitem[{Crainiceanu et~al.(2005)Crainiceanu, Ruppert \&
  Wand}]{crainiceanu2005bayesian}
\textsc{Crainiceanu, C.}, \textsc{Ruppert, D.} \& \textsc{Wand, M.~P.} (2005).
\newblock Bayesian analysis for penalized spline regression using winbugs .

\bibitem[{Dawid \& Lauritzen(1993)}]{dawid1993hyper}
\textsc{Dawid, A.~P.} \& \textsc{Lauritzen, S.~L.} (1993).
\newblock Hyper markov laws in the statistical analysis of decomposable
  graphical models.
\newblock \textit{The Annals of Statistics} , 1272--1317.

\bibitem[{Denison et~al.(1998)Denison, Mallick \& Smith}]{denison1998automatic}
\textsc{Denison, D.}, \textsc{Mallick, B.} \& \textsc{Smith, A.} (1998).
\newblock Automatic bayesian curve fitting.
\newblock \textit{Journal of the Royal Statistical Society: Series B
  (Statistical Methodology)} \textbf{60}, 333--350.

\bibitem[{Dobra \& Fienberg(2000)}]{dobra2000bounds}
\textsc{Dobra, A.} \& \textsc{Fienberg, S.~E.} (2000).
\newblock Bounds for cell entries in contingency tables given marginal totals
  and decomposable graphs.
\newblock \textit{Proceedings of the National Academy of Sciences} \textbf{97},
  11885--11892.

\bibitem[{Dydensborg et~al.(2009)Dydensborg, Rose, Wilson, Grote, Paquet,
  Giguere, Siegel \& Bouchard}]{dydensborg2009gata3}
\textsc{Dydensborg, A.}, \textsc{Rose, A.}, \textsc{Wilson, B.}, \textsc{Grote,
  D.}, \textsc{Paquet, M.}, \textsc{Giguere, V.}, \textsc{Siegel, P.} \&
  \textsc{Bouchard, M.} (2009).
\newblock Gata3 inhibits breast cancer growth and pulmonary breast cancer
  metastasis.
\newblock \textit{Oncogene} \textbf{28}, 2634--2642.

\bibitem[{Etemadmoghadam et~al.(2013)Etemadmoghadam, Weir, Au-Yeung, Alsop,
  Mitchell, George, Davis, D’Andrea, Simpson, Hahn
  et~al.}]{etemadmoghadam2013synthetic}
\textsc{Etemadmoghadam, D.}, \textsc{Weir, B.~A.}, \textsc{Au-Yeung, G.},
  \textsc{Alsop, K.}, \textsc{Mitchell, G.}, \textsc{George, J.},
  \textsc{Davis, S.}, \textsc{D’Andrea, A.~D.}, \textsc{Simpson, K.},
  \textsc{Hahn, W.~C.} et~al. (2013).
\newblock Synthetic lethality between ccne1 amplification and loss of brca1.
\newblock \textit{Proceedings of the National Academy of Sciences}
  \textbf{110}, 19489--19494.

\bibitem[{Fernandez et~al.(2001)Fernandez, Ley \&
  Steel}]{fernandez2001benchmark}
\textsc{Fernandez, C.}, \textsc{Ley, E.} \& \textsc{Steel, M.~F.} (2001).
\newblock Benchmark priors for bayesian model averaging.
\newblock \textit{Journal of Econometrics} \textbf{100}, 381--427.

\bibitem[{Foster \& George(1994)}]{foster1994risk}
\textsc{Foster, D.~P.} \& \textsc{George, E.~I.} (1994).
\newblock The risk inflation criterion for multiple regression.
\newblock \textit{The Annals of Statistics} , 1947--1975.

\bibitem[{Gasser et~al.(2014)Gasser, Inuzuka, Lau, Wei, Beroukhim \&
  Toker}]{gasser2014sgk3}
\textsc{Gasser, J.~A.}, \textsc{Inuzuka, H.}, \textsc{Lau, A.~W.}, \textsc{Wei,
  W.}, \textsc{Beroukhim, R.} \& \textsc{Toker, A.} (2014).
\newblock Sgk3 mediates inpp4b-dependent pi3k signaling in breast cancer.
\newblock \textit{Molecular cell} \textbf{56}, 595--607.

\bibitem[{George \& McCulloch(1993)}]{george1993variable}
\textsc{George, E.~I.} \& \textsc{McCulloch, R.~E.} (1993).
\newblock Variable selection via gibbs sampling.
\newblock \textit{Journal of the American Statistical Association} \textbf{88},
  881--889.

\bibitem[{Giudici(1996)}]{giudici1996learning}
\textsc{Giudici, P.} (1996).
\newblock Learning in graphical gaussian models.
\newblock \textit{Bayesian Statistics} \textbf{5}, 621--628.

\bibitem[{Giudici et~al.(1999)Giudici, Green \& PJ}]{giudici1999decomposable}
\textsc{Giudici, P.}, \textsc{Green} \& \textsc{PJ} (1999).
\newblock Decomposable graphical gaussian model determination.
\newblock \textit{Biometrika} \textbf{86}, 785--801.

\bibitem[{Ha et~al.(2016)Ha, Banerjee, Akbani, Liang, Mills, Do \&
  Baladandayuthapani}]{ha2016personalized}
\textsc{Ha, M.}, \textsc{Banerjee, S.}, \textsc{Akbani, R.}, \textsc{Liang,
  H.}, \textsc{Mills, G.}, \textsc{Do, K.} \& \textsc{Baladandayuthapani, V.}
  (2016).
\newblock Personalized cancer-specific integrated network estimation.
\newblock \textit{Submitted} .

\bibitem[{Harari \& Yarden(2000)}]{harari2000molecular}
\textsc{Harari, D.} \& \textsc{Yarden, Y.} (2000).
\newblock Molecular mechanisms underlying erbb2/her2 action in breast cancer.
\newblock \textit{Oncogene} \textbf{19}, 6102.

\bibitem[{Holmes et~al.(2002)Holmes, Denison \& Mallick}]{holmes2002accounting}
\textsc{Holmes, C.}, \textsc{Denison, D.~T.} \& \textsc{Mallick, B.} (2002).
\newblock Accounting for model uncertainty in seemingly unrelated regressions.
\newblock \textit{Journal of Computational and Graphical Statistics}
  \textbf{11}, 533--551.

\bibitem[{Jones et~al.(2005)Jones, Carvalho, Dobra, Hans, Carter \&
  West}]{jones2005experiments}
\textsc{Jones, B.}, \textsc{Carvalho, C.}, \textsc{Dobra, A.}, \textsc{Hans,
  C.}, \textsc{Carter, C.} \& \textsc{West, M.} (2005).
\newblock Experiments in stochastic computation for high-dimensional graphical
  models.
\newblock \textit{Statistical Science} , 388--400.

\bibitem[{Kallioniemi et~al.(1992)Kallioniemi, Kallioniemi, Kurisu, Thor, Chen,
  Smith, Waldman, Pinkel \& Gray}]{kallioniemi1992erbb2}
\textsc{Kallioniemi, O.-P.}, \textsc{Kallioniemi, A.}, \textsc{Kurisu, W.},
  \textsc{Thor, A.}, \textsc{Chen, L.-C.}, \textsc{Smith, H.~S.},
  \textsc{Waldman, F.~M.}, \textsc{Pinkel, D.} \& \textsc{Gray, J.~W.} (1992).
\newblock Erbb2 amplification in breast cancer analyzed by fluorescence in situ
  hybridization.
\newblock \textit{Proceedings of the National Academy of Sciences} \textbf{89},
  5321--5325.

\bibitem[{Kass \& Wasserman(1995)}]{kass1995reference}
\textsc{Kass, R.~E.} \& \textsc{Wasserman, L.} (1995).
\newblock A reference bayesian test for nested hypotheses and its relationship
  to the schwarz criterion.
\newblock \textit{Journal of the american statistical association} \textbf{90},
  928--934.

\bibitem[{Kuo \& Mallick(1998)}]{kuo1998variable}
\textsc{Kuo, L.} \& \textsc{Mallick, B.} (1998).
\newblock Variable selection for regression models.
\newblock \textit{Sankhy{\=a}: The Indian Journal of Statistics, Series B} ,
  65--81.

\bibitem[{Lauritzen(1996)}]{lauritzen1996graphical}
\textsc{Lauritzen, S.~L.} (1996).
\newblock \textit{Graphical models}, vol.~17.
\newblock Clarendon Press.

\bibitem[{Lee \& Liu(2012)}]{lee2012simultaneous}
\textsc{Lee, W.} \& \textsc{Liu, Y.} (2012).
\newblock Simultaneous multiple response regression and inverse covariance
  matrix estimation via penalized gaussian maximum likelihood.
\newblock \textit{Journal of multivariate analysis} \textbf{111}, 241--255.

\bibitem[{Liang et~al.(2008)Liang, Paulo, Molina, Clyde \&
  Berger}]{liang2008mixtures}
\textsc{Liang, F.}, \textsc{Paulo, R.}, \textsc{Molina, G.}, \textsc{Clyde,
  M.~A.} \& \textsc{Berger, J.~O.} (2008).
\newblock Mixtures of g priors for bayesian variable selection.
\newblock \textit{Journal of the American Statistical Association}
  \textbf{103}, 410--423.

\bibitem[{McAuliffe et~al.(2010)McAuliffe, Meric-Bernstam, Mills \&
  Gonzalez-Angulo}]{mcauliffe2010deciphering}
\textsc{McAuliffe, P.~F.}, \textsc{Meric-Bernstam, F.}, \textsc{Mills, G.~B.}
  \& \textsc{Gonzalez-Angulo, A.~M.} (2010).
\newblock Deciphering the role of pi3k/akt/mtor pathway in breast cancer
  biology and pathogenesis.
\newblock \textit{Clinical breast cancer} \textbf{10}, S59--S65.

\bibitem[{Miller et~al.(2011)Miller, Rexer, Garrett \&
  Arteaga}]{miller2011mutations}
\textsc{Miller, T.~W.}, \textsc{Rexer, B.~N.}, \textsc{Garrett, J.~T.} \&
  \textsc{Arteaga, C.~L.} (2011).
\newblock Mutations in the phosphatidylinositol 3-kinase pathway: role in tumor
  progression and therapeutic implications in breast cancer.
\newblock \textit{Breast cancer research} \textbf{13}, 224.

\bibitem[{Nakayama et~al.(2016)Nakayama, Rahman, Rahman, Nakamura, Ishikawa,
  Katagiri, Sato, Ishibashi, Iida, Ishikawa et~al.}]{nakayama2016ccne1}
\textsc{Nakayama, K.}, \textsc{Rahman, M.~T.}, \textsc{Rahman, M.},
  \textsc{Nakamura, K.}, \textsc{Ishikawa, M.}, \textsc{Katagiri, H.},
  \textsc{Sato, E.}, \textsc{Ishibashi, T.}, \textsc{Iida, K.},
  \textsc{Ishikawa, N.} et~al. (2016).
\newblock Ccne1 amplification is associated with aggressive potential in
  endometrioid endometrial carcinomas.
\newblock \textit{International journal of oncology} \textbf{48}, 506--516.

\bibitem[{Polyak \& Weinberg(2009)}]{polyak2009transitions}
\textsc{Polyak, K.} \& \textsc{Weinberg, R.~A.} (2009).
\newblock Transitions between epithelial and mesenchymal states: acquisition of
  malignant and stem cell traits.
\newblock \textit{Nature Reviews Cancer} \textbf{9}, 265--273.

\bibitem[{Revillion et~al.(1998)Revillion, Bonneterre \&
  Peyrat}]{revillion1998erbb2}
\textsc{Revillion, F.}, \textsc{Bonneterre, J.} \& \textsc{Peyrat, J.} (1998).
\newblock Erbb2 oncogene in human breast cancer and its clinical significance.
\newblock \textit{European Journal of Cancer} \textbf{34}, 791--808.

\bibitem[{Roverato(2002)}]{roverato2002hyper}
\textsc{Roverato, A.} (2002).
\newblock Hyper inverse wishart distribution for non-decomposable graphs and
  its application to bayesian inference for gaussian graphical models.
\newblock \textit{Scandinavian Journal of Statistics} \textbf{29}, 391--411.

\bibitem[{Schrader et~al.(2008)Schrader, Masciari, Boyd, Wiyrick, Kaurah, Senz,
  Burke, Lynch, Garber \& Huntsman}]{schrader2008hereditary}
\textsc{Schrader, K.~A.}, \textsc{Masciari, S.}, \textsc{Boyd, N.},
  \textsc{Wiyrick, S.}, \textsc{Kaurah, P.}, \textsc{Senz, J.}, \textsc{Burke,
  W.}, \textsc{Lynch, H.~T.}, \textsc{Garber, J.~E.} \& \textsc{Huntsman,
  D.~G.} (2008).
\newblock Hereditary diffuse gastric cancer: association with lobular breast
  cancer.
\newblock \textit{Familial cancer} \textbf{7}, 73--82.

\bibitem[{Scott \& Carvalho(2008)}]{scott2008feature}
\textsc{Scott, J.~G.} \& \textsc{Carvalho, C.~M.} (2008).
\newblock Feature-inclusion stochastic search for gaussian graphical models.
\newblock \textit{Journal of Computational and Graphical Statistics}
  \textbf{17}, 790--808.

\bibitem[{Singull \& Koski(2012)}]{singull2012distribution}
\textsc{Singull, M.} \& \textsc{Koski, T.} (2012).
\newblock On the distribution of matrix quadratic forms.
\newblock \textit{Communications in Statistics-Theory and Methods} \textbf{41},
  3403--3415.

\bibitem[{Tsujimoto(1998)}]{tsujimoto1998role}
\textsc{Tsujimoto, Y.} (1998).
\newblock Role of bcl-2 family proteins in apoptosis: apoptosomes or
  mitochondria?
\newblock \textit{Genes to cells} \textbf{3}, 697--707.

\bibitem[{Wang(2010)}]{wang2010sparse}
\textsc{Wang, H.} (2010).
\newblock Sparse seemingly unrelated regression modelling: Applications in
  finance and econometrics.
\newblock \textit{Computational Statistics \& Data Analysis} \textbf{54},
  2866--2877.

\bibitem[{Wang et~al.(2010)Wang, Carvalho et~al.}]{wang2010simulation}
\textsc{Wang, H.}, \textsc{Carvalho, C.~M.} et~al. (2010).
\newblock Simulation of hyper-inverse wishart distributions for
  non-decomposable graphs.
\newblock \textit{Electronic Journal of Statistics} \textbf{4}, 1470--1475.

\bibitem[{Yan et~al.(2010)Yan, Cao, Arenas, Bentley \& Shao}]{yan2010gata3}
\textsc{Yan, W.}, \textsc{Cao, Q.~J.}, \textsc{Arenas, R.~B.}, \textsc{Bentley,
  B.} \& \textsc{Shao, R.} (2010).
\newblock Gata3 inhibits breast cancer metastasis through the reversal of
  epithelial-mesenchymal transition.
\newblock \textit{Journal of Biological Chemistry} \textbf{285}, 14042--14051.

\bibitem[{Yin \& Li(2011)}]{yin2011sparse}
\textsc{Yin, J.} \& \textsc{Li, H.} (2011).
\newblock A sparse conditional gaussian graphical model for analysis of
  genetical genomics data.
\newblock \textit{The annals of applied statistics} \textbf{5}, 2630.

\bibitem[{Zellner(1962)}]{zellner1962efficient}
\textsc{Zellner, A.} (1962).
\newblock An efficient method of estimating seemingly unrelated regressions and
  tests for aggregation bias.
\newblock \textit{Journal of the American statistical Association} \textbf{57},
  348--368.

\bibitem[{Zellner(1986)}]{zellner1986assessing}
\textsc{Zellner, A.} (1986).
\newblock On assessing prior distributions and bayesian regression analysis
  with g-prior distributions.
\newblock \textit{Bayesian inference and decision techniques: Essays in Honor
  of Bruno De Finetti} \textbf{6}, 233--243.

\end{thebibliography}
\bibliographystyle{biometrika}

\newpage
\section*{Appendix 1}
In this appendix, we provide the detail calculation for the conditionally marginal density of $\mathrm{Y}$ given only $\bm{\gamma}$ and graph $G$. Given the hierarchical model in section 2,
\begin{eqnarray*}
(\mathrm{Y}-\mathrm{U}_{\bm\gamma} \mathrm{B}_{\bm{\gamma}, G})| \bm{\gamma}, \Sigma_G & \sim & \mathrm{MN}_{n \times q}(0,  I_n, \Sigma_G), \\
\mathrm{B}_{\bm{\gamma}, G}| \bm\gamma, \Sigma_G & \sim & \mathrm{MN}_{p_{\bm\gamma}(k+1) \times q}(0,g (\mathrm{U}_{\bm\gamma}^T \mathrm{U}_{\bm\gamma})^{-1}_{p_{\bm\gamma}(k+1)}, \Sigma_G), \\
\Sigma_G|G & \sim & \mathrm{HIW}_G (b,d I_q),
\end{eqnarray*}
where $\mathrm{Y}$ is $n \times q$, $\mathrm{U}_{\bm\gamma}$ is $n \times p_{\bm\gamma}(k+1)$, $\mathrm{B}_{\bm{\gamma}, G}$ is $p_{\bm\gamma}(k+1) \times q$, $\Sigma_G$ is $q \times q$. First, we can marginalize out the coefficient matrix $\mathrm{B}_{\bm{\gamma}, G}$ due to the conjugacy of its prior to the likelihood of $\mathrm{Y}$. We have
\begin{equation*}
\mathrm{Y} | \bm{\gamma}, \Sigma_G \sim \mathrm{MN}_{n \times q} (0, I_n + g\mathrm{U}_{\bm\gamma}(\mathrm{U}_{\bm\gamma}^T\mathrm{U}_{\bm\gamma})^{-1}\mathrm{U}_{\bm\gamma}^T, \Sigma_G).
\end{equation*}
Next, we vectorize $\mathrm{Y}$ to prepare for integrating out $\Sigma_G$. So,
\begin{equation}\label{vecY}
vec(\mathrm{Y}^T) | \bm{\gamma}, \Sigma_G \sim N_{nq} (0, (I_n + gP_{\bm\gamma})\otimes \Sigma_G),
\end{equation}
where $P_{\bm\gamma} = \mathrm{U}_{\bm\gamma}(\mathrm{U}_{\bm\gamma}^T\mathrm{U}_{\bm\gamma})^{-1}\mathrm{U}_{\bm\gamma}^T$ and $\otimes$ is the Kronecker product operation.

We use the Sylvester’s determinant theorem to further simplify the density of $vec(\mathrm{Y}^T)$. If $A$ and $B$ are matrices of size $m \times n$ and $n \times m$ respectively, then $|I_m + AB| = |I_n + BA|$. We have
\begin{equation}
\label{Sthm}
|I_n + gP_{\bm\gamma}| = |I_n + g\mathrm{U}_{\bm\gamma}(\mathrm{U}_{\bm\gamma}^T\mathrm{U}_{\bm\gamma})^{-1}\mathrm{U}_{\bm\gamma}^T| = |I_{p_{\bm\gamma}(k+1)} + g(\mathrm{U}_{\bm\gamma}^T\mathrm{U}_{\bm\gamma})^{-1}(\mathrm{U}_{\bm\gamma}^T\mathrm{U}_{\bm\gamma})| = {(g+1)}^{p_{\bm\gamma}(k+1)}.
\end{equation}

By the Sherman-Morrison-Woodbury (SMW) identity, assuming $A$, $C$ and $(C^{-1}+DA^{-1}B)$ to be nonsingular,
\begin{equation*}
(A+BCD)^{-1} = A^{-1} - A^{-1} B (C^{-1}+DA^{-1}B)^{-1} DA^{-1},
\end{equation*}
we have
\begin{equation}\label{smw}
(I_n + gP_{\bm\gamma})^{-1} = \big\{I_n + g\mathrm{U}_{\bm\gamma}(\mathrm{U}_{\bm\gamma}^T\mathrm{U}_{\bm\gamma})^{-1}\mathrm{U}_{\bm\gamma}^T\big\}^{-1} = I_n - \frac{g}{g+1}P_{\bm\gamma}.
\end{equation}

Simplifying the density of (\ref{vecY}) using (\ref{Sthm}) and (\ref{smw}),
\begin{align*}
f(vec(\mathrm{Y}^T)|\bm{\gamma}, \Sigma_G)
&= {(2\pi)}^{-\frac{nq}{2}} {|I_n + gP_{\bm\gamma}|}^{-\frac{q}{2}} {|\Sigma_G|}^{-\frac{n}{2}}\\
&\phantom{{(2\pi)}^{-\frac{nq}{2}} {|I_n + gP_{\bm\gamma}|}^{-\frac{q}{2}}}
exp \bigg[ -\frac{1}{2} {\{vec(\mathrm{Y}^T)\}}^T
\bigg\{{(I_n + gP_{\bm\gamma})}^{-1} \otimes {\Sigma_G}^{-1} \bigg\}
\{vec(\mathrm{Y}^T)\} \bigg] \\
&= {(2\pi)}^{-\frac{nq}{2}} {(g+1)}^{-\frac{p_{\bm\gamma}(k+1)q}{2}} {|\Sigma_G|}^{-\frac{n}{2}} \\
&\phantom{{(2\pi)}^{-\frac{nq}{2}} {(g+1)}^{-\frac{p_{\bm\gamma}(k+1)q}{2}}}
exp \bigg[ -\frac{1}{2} {\{vec(\mathrm{Y}^T)\}}^T
\bigg\{ \bigg(I_n - \frac{g}{g+1}P_{\bm\gamma} \bigg) \otimes {\Sigma_G}^{-1} \bigg\}
\{vec(\mathrm{Y}^T)\} \bigg].
\end{align*}

Matrix vectorization and trace operation have the following relationship. Suppose that $A$ is an $r \times s$ matrix and $B$ is $s \times r$, then $tr(AB) = {\{vec(A)\}}^T vec(B^T) = {\{vec(A^T)\}}^T vec(B)$. So we can further reduce the complexity of the exponential term in the density above.
\begin{align*}
&{\{vec(\mathrm{Y}^T)\}}^T
\bigg\{ \bigg(I_n - \frac{g}{g+1}P_{\bm\gamma} \bigg) \otimes {\Sigma_G}^{-1} \bigg\}
\{vec(\mathrm{Y}^T)\} \\
&= {\{vec(\mathrm{Y}^T)\}}^T \cdot \Bigg\{ \bigg(I_n - \frac{g}{g+1}P_{\bm\gamma} \bigg) \otimes {\Sigma_G}^{-1} \cdot vec(\mathrm{Y}^T) \Bigg\} \\
&= {\{vec(\mathrm{Y}^T)\}}^T \cdot vec \bigg\{\Sigma_G^{-1}\mathrm{Y}^T \bigg(I_n - \frac{g}{g+1}P_{\bm\gamma} \bigg)\bigg\} \\
&= tr \bigg\{ \mathrm{Y}^T \bigg(I_n - \frac{g}{g+1}P_{\bm\gamma} \bigg) \mathrm{Y} \Sigma_G^{-1} \bigg\} \\
&= tr\{S(\bm{\gamma}) \Sigma_G^{-1}\},
\end{align*}
where $S(\bm{\gamma}) = \mathrm{Y}^T \big(I_n - \frac{g}{g+1}P_{\bm\gamma} \big) \mathrm{Y}$. Eventually, by factorizing the density $f(\mathrm{Y}|\bm{\gamma}, \Sigma_G)$ corresponding to the hyper-inverse Wishart prior, we have
\begin{align}
f(\mathrm{Y}|\bm{\gamma}, \Sigma_G)
&= {(2\pi)}^{-\frac{nq}{2}} {(g+1)}^{-\frac{p_{\bm\gamma}(k+1)q}{2}} {|\Sigma_G|}^{-\frac{n}{2}}
etr \bigg\{ -\frac{1}{2} S(\bm{\gamma}) \Sigma_G^{-1} \bigg\} \nonumber \\
&= {(2\pi)}^{-\frac{nq}{2}} {(g+1)}^{-\frac{p_{\bm\gamma}(k+1)q}{2}}
\frac
{\prod_{C\in\mathscr{C}} {|\Sigma_C|}^{-\frac{n}{2}} etr \bigg\{ -\frac{1}{2} S_C(\bm{\gamma}) \Sigma_C^{-1} \bigg\}}
{\prod_{S\in\mathscr{S}} {|\Sigma_S|}^{-\frac{n}{2}} etr \bigg\{ -\frac{1}{2} S_S(\bm{\gamma}) \Sigma_S^{-1} \bigg\}}. \label{YgG}
\end{align}

Since $\Sigma_G | G \sim \mathrm{HIW}_G (b, dI_q)$, then
\begin{align}
f(\Sigma_G | G)
&= \frac
{\prod_{C\in\mathscr{C}}
\frac{{|dI_C|}^{\frac{b+|C|-1}{2}}}
{2^{\frac{(b+|C|-1)|C|}{2}}\Gamma_{|C|}\big(\frac{b+|C|-1}{2}\big)}
{|\Sigma_C|}^{-(\frac{b}{2}+|C|)} etr(-\frac{1}{2}dI_C \Sigma_C^{-1})
}
{\prod_{S\in\mathscr{S}}
\frac{{|dI_S|}^{\frac{b+|S|-1}{2}}}
{2^{\frac{(b+|S|-1)|S|}{2}}\Gamma_{|S|}\big(\frac{b+|S|-1}{2}\big)}
{|\Sigma_S|}^{-(\frac{b}{2}+|S|)} etr(-\frac{1}{2}dI_S \Sigma_S^{-1})
} \nonumber \\
&= \mathscr{H} (b, d I_q, G) \cdot
\frac
{\prod_{C\in\mathscr{C}}
{|\Sigma_C|}^{-(\frac{b}{2}+|C|)} etr(-\frac{1}{2}dI_C \Sigma_C^{-1})
}
{\prod_{S\in\mathscr{S}}
{|\Sigma_S|}^{-(\frac{b}{2}+|S|)} etr(-\frac{1}{2}dI_S \Sigma_S^{-1})
}, \label{hiw}
\end{align}
where
\begin{equation*}
\mathscr{H} (b, d I_q, G) =
\frac
{\prod_{C\in\mathscr{C}}
\frac{{|dI_C|}^{\frac{b+|C|-1}{2}}}
{2^{\frac{(b+|C|-1)|C|}{2}}\Gamma_{|C|}\big(\frac{b+|C|-1}{2}\big)}
}
{\prod_{S\in\mathscr{S}}
\frac{{|dI_S|}^{\frac{b+|S|-1}{2}}}
{2^{\frac{(b+|S|-1)|S|}{2}}\Gamma_{|S|}\big(\frac{b+|S|-1}{2}\big)}
}.
\end{equation*}

Next, we integrate out $\Sigma_G$ by (\ref{YgG}) and (\ref{hiw}),
\begin{align}
&f(\mathrm{Y}|\bm{\gamma}, G)= \int f(\mathrm{Y}|\bm{\gamma}, \Sigma_G)f(\Sigma_G | G) d\Sigma_G \nonumber\\
&= {(2\pi)}^{-\frac{nq}{2}} {(g+1)}^{-\frac{p_{\bm\gamma}(k+1)q}{2}} \mathscr{H} (b, d I_q, G)
\int \frac
{\prod_{C\in\mathscr{C}}
{|\Sigma_C|}^{-(\frac{b+n}{2}+|C|)} etr[-\frac{1}{2}(dI_C+S_C(\gamma)) \Sigma_C^{-1}]
}
{\prod_{S\in\mathscr{S}}
{|\Sigma_S|}^{-(\frac{b+n}{2}+|S|)} etr[-\frac{1}{2}(dI_S+S_S(\gamma)) \Sigma_S^{-1}]
} d\Sigma_G \nonumber\\
&={(2\pi)}^{-\frac{nq}{2}} {(g+1)}^{-\frac{p_{\bm\gamma}(k+1)q}{2}} \mathscr{H} (b, d I_q, G) \mathscr{H}^{-1} (b+n, d I_q + S(\bm{\gamma}), G) \nonumber\\
&= {(2\pi)}^{-\frac{nq}{2}} {(g+1)}^{-\frac{p_{\bm\gamma}(k+1)q}{2}} \mathscr{H} (b, d I_q, G)
\frac
{\prod_{C\in\mathscr{C}}
\frac{{|dI_C+S_C(\bm{\gamma})|}^{-\frac{b+n+|C|-1}{2}}}
{2^{-\frac{(b+n+|C|-1)|C|}{2}}\Gamma_{|C|}^{-1}\big(\frac{b+n+|C|-1}{2}\big)}
}
{\prod_{S\in\mathscr{S}}
\frac{{|dI_S+S_S(\bm{\gamma})|}^{-\frac{b+n+|S|-1}{2}}}
{2^{-\frac{(b+n+|S|-1)|S|}{2}}\Gamma_{|S|}^{-1}\big(\frac{b+n+|S|-1}{2}\big)}
} \nonumber\\
&= M_{n,G} \times {(g+1)}^{-\frac{p_{\bm\gamma}(k+1) q}{2}}
\frac
{\prod_{C\in\mathscr{C}} {|dI_C+S_C(\bm{\gamma})|}^{-\frac{b+n+|C|-1}{2}}}
{\prod_{S\in\mathscr{S}} {|dI_S+S_S(\bm{\gamma})|}^{-\frac{b+n+|S|-1}{2}}}, \label{marginalY}
\end{align}
where
\begin{equation*}
M_{n,G}=
{(2\pi)}^{-\frac{nq}{2}}
\frac
{\prod_{C\in\mathscr{C}}
\frac{{|dI_C|}^{\frac{b+|C|-1}{2}}}
{2^{-\frac{n|C|}{2}}\Gamma_{|C|}\big(\frac{b+|C|-1}{2}\big)\Gamma_{|C|}^{-1}\big(\frac{b+n+|C|-1}{2}\big)}
}
{\prod_{S\in\mathscr{S}}
\frac{{|dI_S|}^{\frac{b+|S|-1}{2}}}
{2^{-\frac{n|S|}{2}}\Gamma_{|S|}\big(\frac{b+|S|-1}{2}\big)\Gamma_{|S|}^{-1}\big(\frac{b+n+|S|-1}{2}\big)}
}
\end{equation*}
is a constant which depends only on $n$ and $G$, but it is the same for all $\bm\gamma$ under the same graph $G$. This makes possible for the cancellation in Metroplis-Hasting step for variable selection, which leads to faster in computation. $S_C(\bm{\gamma})$ and $S_S(\bm{\gamma})$ are the corresponding quadratic form similar to $S(\bm{\gamma})$ but restricted to sub-graphs denoted by $C$ and $S$ (i.e. cliques and separators).

\section*{Appendix 2.}
\newcommand{\RN}[1]{  \textup{\uppercase\expandafter{\romannumeral#1}}  }
We now show the complete proof of Lemma {\ref{lm1}} in this section. The main idea is, if the alternative model $\bm{a}$ does not contain the true model, the likelihood part drives the Bayes factor to zero exponentially, as $\bm{a}$ cannot fit the true mean adequately. If $\bm{a}$ contains true model then the difference in the likelihood becomes negligible but the prior penalizes for the extra dimensions and the Bayes factor goes to zero as $n$ goes to infinity.

By (\ref{marginalY}), the Bayes factor in favor of alternative model $\bm{a}$ under any graph $G$ is
\begin{align}
BF(\bm{a}; \bm{t}| G)
&= \frac{f(\mathrm{Y}|\bm{a}, G)}{f(\mathrm{Y}|\bm{t}, G)} \nonumber \\
&={(g+1)}^{-\frac{(p_{\bm{a}}-p_{\bm{t}})(k+1)q}{2}} \times \frac
{\prod_{C\in\mathscr{C}} {\big(\frac{|dI_C+S_C(\bm{a})|}{|dI_C+S_C(\bm{t})|}\big)}^{-\frac{b+n+|C|-1}{2}} }
{\prod_{S\in\mathscr{S}} {\big(\frac{|dI_S+S_S(\bm{a})|}{|dI_S+S_S(\bm{t})|}\big)}^{-\frac{b+n+|S|-1}{2}} } \nonumber\\
&:= {(g+1)}^{-\frac{(p_{\bm{a}}-p_{\bm{t}})(k+1)q}{2}} \times \frac
{\prod_{C\in\mathscr{C}} {\big\{ \Delta_C (\bm{a}, \bm{t}) \big\}}^{-\frac{b+n+|C|-1}{2}} }
{\prod_{S\in\mathscr{S}} {\big\{ \Delta_S (\bm{a}, \bm{t}) \big\}}^{-\frac{b+n+|S|-1}{2}} } \nonumber\\
&:= \RN{1} \times \RN{2}(\bm{a}, \bm{t}), \label{II}
\end{align}
where $S(\bm{a}) = \mathrm{Y}^T \big(I_n - \frac{g}{g+1}P_{\bm{a}} \big) \mathrm{Y}$, $S(\bm{t}) = \mathrm{Y}^T \big(I_n - \frac{g}{g+1}P_{\bm{t}} \big) Y$,  and $S_C$ and $S_S$ denote the quadratic forms restricted to clique $C \in \mathscr{C}$ and separator $S\in \mathscr{S}$; furthermore, we denote $\Delta_C(\bm{a}, \bm{t})=\frac{|dI_C+S_C(\bm{a})|}{|dI_C+S_C(\bm{t})|}$ and $\Delta_S(\bm{a}, \bm{t})=\frac{|dI_S+S_S(\bm{a})|}{|dI_S+S_S(\bm{t})|}$. Let $\Delta(\bm{a}, \bm{t}) = \frac{|dI_q+S(\bm{a})|}{|dI_q+S(\bm{t})|}$ be the version of $\Delta(\bm{a}, \bm{t})$ for the whole graph $G$.

\begin{lemma}\label{lm8}
$\Delta(\bm{a}, \bm{t}) = \Big|I_q +\frac{1}{d}\frac{g}{g+1} \Big(\frac{1}{n} A_{\bm{t}}\Big)^{-\frac{1}{2}} \Big\{\frac{1}{n}\mathrm{Y}^T \big(P_{\bm{t}} - P_{\bm{a}} \big) \mathrm{Y}\bigg\} \Big(\frac{1}{n} A_{\bm{t}}\Big)^{-\frac{1}{2}}\Big|$, where $A_{\bm{t}}=I_q + \frac{1}{d}\mathrm{Y}^T \big(I_n - \frac{g}{g+1}P_{\bm{t}} \big) Y$.
\end{lemma}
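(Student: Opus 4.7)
The plan is to reduce $\Delta(\bm a,\bm t)$ to a single determinant of the form $|I_q + N|$ where $N$ has the target symmetric sandwich form, by three successive standard manipulations: pulling the scalar $d$ out of the determinants, expressing the numerator as a rank-update of the denominator, and symmetrizing via the cyclic-trace/Sylvester identity.

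First I would factor $d$ out of each determinant. Since $dI_q + S(\bm\gamma) = d\bigl(I_q + \tfrac{1}{d}S(\bm\gamma)\bigr) = d\,A_{\bm\gamma}$ and both matrices are $q\times q$, the $d^q$ factors cancel, giving $\Delta(\bm a,\bm t) = |A_{\bm a}|/|A_{\bm t}|$. Next, I would compute the difference
\begin{equation*}
A_{\bm a} - A_{\bm t} = \tfrac{1}{d}\mathrm{Y}^T\!\bigl[\tfrac{g}{g+1}P_{\bm t} - \tfrac{g}{g+1}P_{\bm a}\bigr]\mathrm{Y} = \tfrac{1}{d}\tfrac{g}{g+1}\mathrm{Y}^T(P_{\bm t}-P_{\bm a})\mathrm{Y},
\end{equation*}
so that $A_{\bm a} = A_{\bm t} + \tfrac{1}{d}\tfrac{g}{g+1}\mathrm{Y}^T(P_{\bm t}-P_{\bm a})\mathrm{Y}$. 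Because $\tfrac{g}{g+1}<1$ and $P_{\bm t}$ has eigenvalues in $\{0,1\}$, the matrix $I_n - \tfrac{g}{g+1}P_{\bm t}$ is positive definite, hence $A_{\bm t}$ is positive definite, and $A_{\bm t}^{-1/2}$ exists.

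Then I would write
\begin{equation*}
\frac{|A_{\bm a}|}{|A_{\bm t}|} = \bigl|A_{\bm t}^{-1}A_{\bm a}\bigr| = \bigl|I_q + A_{\bm t}^{-1}(A_{\bm a}-A_{\bm t})\bigr|,
\end{equation*}
and apply the identity $|I + XY| = |I + YX|$ with $X = A_{\bm t}^{-1/2}$, $Y = A_{\bm t}^{-1/2}(A_{\bm a}-A_{\bm t})$ to symmetrize, obtaining
\begin{equation*}
\Delta(\bm a,\bm t) = \bigl|I_q + A_{\bm t}^{-1/2}(A_{\bm a}-A_{\bm t})A_{\bm t}^{-1/2}\bigr|.
\end{equation*}

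Finally I would insert the $n$-scaling. Using $(\tfrac{1}{n}A_{\bm t})^{-1/2} = n^{1/2}A_{\bm t}^{-1/2}$, the two factors of $n^{1/2}$ absorb the $\tfrac{1}{n}$ on the inner quadratic form, so
\begin{equation*}
A_{\bm t}^{-1/2}(A_{\bm a}-A_{\bm t})A_{\bm t}^{-1/2} = \tfrac{1}{d}\tfrac{g}{g+1}\bigl(\tfrac{1}{n}A_{\bm t}\bigr)^{-1/2}\bigl\{\tfrac{1}{n}\mathrm{Y}^T(P_{\bm t}-P_{\bm a})\mathrm{Y}\bigr\}\bigl(\tfrac{1}{n}A_{\bm t}\bigr)^{-1/2},
\end{equation*}
which substituted into the previous display yields the claimed formula. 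There is no real obstacle; the only point requiring a brief comment is the positive-definiteness of $A_{\bm t}$ guaranteeing the existence of the square-root, which follows immediately from $0 \le \tfrac{g}{g+1}P_{\bm t} \prec I_n$.
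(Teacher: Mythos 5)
Your proposal is correct and follows essentially the same route as the paper: factor out $d$ to get $|A_{\bm a}|/|A_{\bm t}|$, write $A_{\bm a}=A_{\bm t}+\tfrac{1}{d}\tfrac{g}{g+1}\mathrm{Y}^T(P_{\bm t}-P_{\bm a})\mathrm{Y}$, symmetrize with $A_{\bm t}^{-1/2}$ on both sides, and insert the $1/n$ scaling. Your explicit check that $A_{\bm t}$ is positive definite (so that $A_{\bm t}^{-1/2}$ exists) is a small but welcome addition that the paper leaves implicit.
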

\begin{proof}
\begin{align*}
\Delta(\bm{a}, \bm{t})
& = \frac{|dI_q + S(\bm{a})|}{|dI_q + S(\bm{t})|}
 = \frac{|I_q + \frac{1}{d}S(\bm{a})|}{|I_q + \frac{1}{d}S(\bm{t})|} \\
& = \frac{|I_q + \frac{1}{d}\mathrm{Y}^T \big(I_n - \frac{g}{g+1}P_{\bm{a}} \big) \mathrm{Y}|}
{|I_q + \frac{1}{d}\mathrm{Y}^T \big(I_n - \frac{g}{g+1}P_{\bm{t}} \big) \mathrm{Y}|} \\
& = \frac{|I_q + \frac{1}{d}\mathrm{Y}^T \big(I_n - \frac{g}{g+1}P_{\bm{t}} \big) \mathrm{Y}
+\frac{1}{d}\mathrm{Y}^T \big(I_n - \frac{g}{g+1}P_{\bm{a}} \big) \mathrm{Y} - \frac{1}{d}\mathrm{Y}^T \big(I_n - \frac{g}{g+1}P_{\bm{t}} \big) \mathrm{Y}|}
{|I_q + \frac{1}{d}\mathrm{Y}^T \big(I_n - \frac{g}{g+1}P_{\bm{t}} \big) \mathrm{Y}|} \\
& = \frac{|I_q + \frac{1}{d}\mathrm{Y}^T \big(I_n - \frac{g}{g+1}P_{\bm{t}} \big) \mathrm{Y}
+\frac{1}{d}\frac{g}{g+1}\mathrm{Y}^T \big(P_{\bm{t}} - P_{\bm{a}} \big) \mathrm{Y}|}
{|I_q + \frac{1}{d}\mathrm{Y}^T \big(I_n - \frac{g}{g+1}P_{\bm{t}} \big) \mathrm{Y}|} \\
& = \frac{|A_{\bm{t}} +\frac{1}{d}\frac{g}{g+1}\mathrm{Y}^T \big(P_{\bm{t}} - P_{\bm{a}} \big) \mathrm{Y}|} {|A_{\bm{t}}|} \\
& = \Big|I_q +\frac{1}{d}\frac{g}{g+1} A_{\bm{t}}^{-\frac{1}{2}} \mathrm{Y}^T \big(P_{\bm{t}} - P_{\bm{a}} \big) \mathrm{Y} A_{\bm{t}}^{-\frac{1}{2}}\Big|\\
& = \Big|I_q +\frac{1}{d}\frac{g}{g+1} \Big(\frac{1}{n} A_{\bm{t}}\Big)^{-\frac{1}{2}} \Big\{\frac{1}{n}\mathrm{Y}^T \big(P_{\bm{t}} - P_{\bm{a}} \big) \mathrm{Y}\Big\} \Big(\frac{1}{n} A_{\bm{t}}\Big)^{-\frac{1}{2}}\Big|.
\end{align*}
\end{proof}

\begin{remark}
Lemma \ref{lm8} is with respect to the whole graph $G$, but the same result holds for every clique $C$ and separator $S$. And similarly we have $A^C_{\bm{t}}$ and $A^S_{\bm{t}}$ for clique $C$ and separator $S$, respectively. For simplicity, we will not show the results for cliques and separators. In the next several lemmas, we only show the results with respect to the whole graph $G$, but they all hold for any subgraphs of $G$, i.e. cliques and separators.
\end{remark}

Next we split Bayes factor $\mathrm{BF}(\bm{a};\bm{t}|G)$ into two parts $\mathrm{BF}(\bm{a};\bm{a}\cup\bm{t}|G)$ and $\mathrm{BF}(\bm{a}\cup\bm{t};\bm{t}|G)$ and show them both converge to zero as $n\rightarrow\infty$. But before that, we need to introduce several lemmas.

\begin{lemma}\label{lm8.1}
	Under Condition \ref{cond1}, $\mathrm{p}\lim_{n\rightarrow\infty}\frac{\mathrm{Y}^T(I_n-P_{\bm{t}})\mathrm{Y}}{n} = \Sigma_{G^*}$, where $\Sigma_{G^*}$ is the true covariance matrix with respect to the true graph $G^*$.
\end{lemma}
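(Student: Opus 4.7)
The plan is to reduce the quantity to a residual quadratic form in the noise matrix and then apply a law-of-large-numbers argument entrywise. Since the true data-generating model (via the spline approximation used throughout) can be written as $\mathrm{Y} = \mathrm{U}_{\bm{t}}\mathrm{B}_{\bm{t}, G^*} + \mathrm{E}$, where the rows of $\mathrm{E}$ are i.i.d.\ $N_q(0, \Sigma_{G^*})$, I would first exploit the fact that $P_{\bm{t}}$ is the orthogonal projector onto the column space of $\mathrm{U}_{\bm{t}}$, so $(I_n - P_{\bm{t}})\mathrm{U}_{\bm{t}} = 0$ and therefore
\begin{equation*}
\mathrm{Y}^T(I_n - P_{\bm{t}})\mathrm{Y} = \mathrm{E}^T(I_n - P_{\bm{t}})\mathrm{E} = \mathrm{E}^T\mathrm{E} - \mathrm{E}^T P_{\bm{t}} \mathrm{E}.
\end{equation*}
This splits the problem into two pieces that can be handled separately.

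Next, for the first piece, I would note that the $(i,j)$-entry of $\mathrm{E}^T\mathrm{E}/n$ is $\frac{1}{n}\sum_{k=1}^n \mathrm{E}_{ki}\mathrm{E}_{kj}$, an average of i.i.d.\ random variables with mean $(\Sigma_{G^*})_{ij}$, so the weak law of large numbers gives $\mathrm{E}^T\mathrm{E}/n \xrightarrow{p} \Sigma_{G^*}$ entrywise, and hence in Frobenius (or any) matrix norm since $q$ is fixed by Condition~\ref{cond1}.

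For the second piece I would show $\mathrm{E}^T P_{\bm{t}} \mathrm{E}/n \xrightarrow{p} 0$ by a moment computation: the $(i,j)$-entry has mean $\frac{1}{n}\mathrm{tr}(P_{\bm{t}})(\Sigma_{G^*})_{ij} = \frac{p_{\bm{t}}(k+1)}{n}(\Sigma_{G^*})_{ij}$, which tends to $0$ because $p_{\bm{t}} \le p = o(n)$ (Condition~\ref{cond6}) and $k$ is finite (Condition~\ref{cond1}). For the variance I would use Isserlis'/Wick's formula for the fourth moment of a Gaussian together with $\|P_{\bm{t}}\|_F^2 = \mathrm{tr}(P_{\bm{t}}) = p_{\bm{t}}(k+1)$ and $\|P_{\bm{t}}\|_{\mathrm{op}} = 1$ to bound the variance of that entry by $O(p_{\bm{t}}(k+1)/n^2)$, again vanishing. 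A Chebyshev/Markov inequality then gives convergence in probability of each entry to $0$, hence of the matrix to $0$.

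Combining the two limits yields $\mathrm{Y}^T(I_n - P_{\bm{t}})\mathrm{Y}/n \xrightarrow{p} \Sigma_{G^*}$. The only mildly delicate point, and the step I would flag as the main obstacle, is the variance bound on $\mathrm{E}^T P_{\bm{t}} \mathrm{E}/n$: it must be expressed purely in terms of $\mathrm{tr}(P_{\bm{t}})$ and $\Sigma_{G^*}$ (not the unknown entries of $P_{\bm{t}}$), and one must verify that this bound remains $o(1)$ under the rate $p = o(n)$ from Condition~\ref{cond6}. Apart from this, everything is routine Gaussian calculation, and the argument transfers verbatim to any clique $C$ or separator $S$ (with $\Sigma_{G^*}$ replaced by its restriction $\Sigma_{G^*, C}$ or $\Sigma_{G^*, S}$), which is why the lemma is stated for the whole graph and used without change for the subgraph quantities $A_{\bm{t}}^C, A_{\bm{t}}^S$ appearing in Lemma~\ref{lm8}.
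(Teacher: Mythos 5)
Your proof is correct and takes essentially the same route as the paper's: an entrywise mean/variance computation followed by Chebyshev, with convergence of the whole matrix following because $q$ is fixed. The only cosmetic difference is that the paper identifies $\mathrm{Y}^T(I_n-P_{\bm{t}})\mathrm{Y}$ directly as a central Wishart $W_q(n-r_{\bm{t}},\frac{1}{n}\Sigma_{G^*})$ (citing a corollary of Singull) and reads off the entrywise moments from that, whereas you decompose into $\mathrm{E}^T\mathrm{E}-\mathrm{E}^TP_{\bm{t}}\mathrm{E}$ and compute the moments by hand; both arguments tacitly need $r_{\bm{t}}/n\rightarrow 0$ beyond the stated Condition~\ref{cond1}, which you at least flag explicitly via Condition~\ref{cond6}.
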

\begin{proof}
	Since $Y|\bm{t}, \Sigma_{G^*} \sim \mathrm{MN}_{n\times q}(\mathrm{U}_{\bm{t}}\mathrm{B}_{\bm{t},G^*}, I_n, \Sigma_{G^*})$ and $I_n-P_{\bm{t}}$ is symmetric and idempotent, by Corollary 2.1 in \cite{singull2012distribution}, we have $\frac{\mathrm{Y}^T(I_n-P_{\bm{t}})\mathrm{Y}}{n} \sim W_q(n-r_{\bm{t}}, \frac{1}{n}\Sigma_{G^*})$ and the non-central parameter is zero here. Let $\tilde{y}_{ij}(n), i,j = 1,\dots,q$ denote the entries of $\frac{\mathrm{Y}^T(I_n-P_{\bm{t}})\mathrm{Y}}{n}$ and $\tilde{y}_{ij}(n)=\tilde{y}_{ji}(n)$. Further more, let $\sigma_{ij}^*, i,j = 1,\dots,q$ be the entries of $\Sigma_{G^*}$, and $\sigma_{ij}^* = \sigma_{ji}^*$. Since $\mathbb{E}\big\{\frac{\mathrm{Y}^T(I_n-P_{\bm{t}})\mathrm{Y}}{n}\big\} = (1 - \frac{r_{\bm{t}}}{n})\Sigma_{G^*}\rightarrow \Sigma_{G^*}$, then $\mathbb{E}\big\{\tilde{y}_{ij}(n)\big\} =(1 - \frac{r_{\bm{t}}}{n})\sigma_{ij}^*\rightarrow \sigma_{ij}^*$ and $Var\big\{\tilde{y}_{ij}(n) \big\}=\frac{1}{n}(\sigma_{ij}^{*2}+\sigma_{ii}^*\sigma_{jj}^*)\rightarrow 0$. Thus for any $\epsilon>0$, there exist $M_{\epsilon}$, when $n>M_{\epsilon}$, such that $|(1 - \frac{r_{\bm{t}}}{n})\sigma_{ij}^*- \sigma_{ij}^*|<\epsilon/2$, then $Pr(|\tilde{y}_{ij}(n)-\sigma_{ij}^*|>\epsilon)\leq Pr(|\tilde{y}_{ij}(n)-\mathbb{E}\{\tilde{y}_{ij}(n)\}|>\epsilon/2)\leq \frac{4Var\{\tilde{y}_{ij}(n))\}}{\epsilon^2}=\frac{4(\sigma_{ij}^{*2}+\sigma_{ii}^*\sigma_{jj}^*)}{n\epsilon^2}\rightarrow 0$. So $\tilde{y}_{ij}(n)\xrightarrow{p}\sigma_{ij}^*$ in probability, for all $i,j=1,\dots,q$. Therefore, $\frac{\mathrm{Y}^T(I_n-P_{\bm{t}})\mathrm{Y}}{n} \xrightarrow{p} \Sigma_{G^*}$ as $n\rightarrow\infty$.
\end{proof}

\begin{lemma}\label{lm8.2}
	Under Condition \ref{cond1}, \ref{cond2}, \ref{cond4}, $\mathrm{p}\lim_{n\rightarrow\infty}\frac{1}{n}A_{\bm{t}} = \Sigma_{G^*}$, where $\Sigma_{G^*}$ is the true covariance matrix with respect to the true graph $G^*$.
\end{lemma}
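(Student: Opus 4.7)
The plan is to split $\frac{1}{n}A_{\bm{t}}$ into a deterministic piece and two stochastic pieces, recognize one of those pieces as exactly the quantity handled by Lemma \ref{lm8.1}, and argue that the remaining pieces are negligible in probability using Conditions \ref{cond2} and \ref{cond4}. Concretely, I would start from
\begin{equation*}
\frac{1}{n}A_{\bm{t}} \;=\; \frac{1}{n}I_q \;+\; \frac{1}{nd}\,\mathrm{Y}^T\Bigl(I_n - \frac{g}{g+1}P_{\bm{t}}\Bigr)\mathrm{Y},
\end{equation*}
and then use the algebraic identity $I_n - \frac{g}{g+1}P_{\bm{t}} = (I_n - P_{\bm{t}}) + \frac{1}{g+1}P_{\bm{t}}$ to rewrite the second summand as
\begin{equation*}
\frac{1}{nd}\,\mathrm{Y}^T(I_n-P_{\bm{t}})\mathrm{Y} \;+\; \frac{1}{n d (g+1)}\,\mathrm{Y}^T P_{\bm{t}}\mathrm{Y}.
\end{equation*}

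The first term $\frac{1}{n}I_q$ trivially vanishes. The middle term is handled directly by Lemma \ref{lm8.1}, giving convergence in probability (up to the $1/d$ factor, which is reconciled against the stated scaling of the lemma). The key step is therefore to show that the last term is $o_p(1)$. For that, I would substitute the true data generating equation $\mathrm{Y} = \mathrm{U}_{\bm{t}}\mathrm{B}_{\bm{t},G^*} + \mathrm{E}$, where the columns of $\mathrm{E}$ are i.i.d.\ $N_n(0, I_n)$ in a suitable sense, and expand
\begin{equation*}
\mathrm{Y}^T P_{\bm{t}} \mathrm{Y} \;=\; \mathrm{B}_{\bm{t},G^*}^T\mathrm{U}_{\bm{t}}^T\mathrm{U}_{\bm{t}}\mathrm{B}_{\bm{t},G^*} \;+\; \mathrm{B}_{\bm{t},G^*}^T\mathrm{U}_{\bm{t}}^T\mathrm{E} \;+\; \mathrm{E}^T\mathrm{U}_{\bm{t}}\mathrm{B}_{\bm{t},G^*} \;+\; \mathrm{E}^T P_{\bm{t}}\mathrm{E}.
\end{equation*}
By Condition \ref{cond2}, the deterministic piece has operator norm $O(n)$; the two cross terms are mean zero with variance controlled by $\|\mathrm{U}_{\bm{t}}^T\mathrm{U}_{\bm{t}}\| = O(n)$, so each is $O_p(\sqrt{n})$; and $\mathbb{E}[\mathrm{E}^T P_{\bm{t}}\mathrm{E}] = \mathrm{tr}(P_{\bm{t}})\,\Sigma_{G^*} = p_{\bm{t}}(k+1)\,\Sigma_{G^*}$, which is $O(1)$ once we invoke Condition \ref{cond1} (so $k$ finite) and treat the true active set as fixed. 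Hence $\mathrm{Y}^T P_{\bm{t}}\mathrm{Y} = O_p(n)$.

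Combining with Condition \ref{cond4} ($g = O(n)$, hence $g+1$ is of the same order), the offending term satisfies $\frac{1}{nd(g+1)}\mathrm{Y}^T P_{\bm{t}}\mathrm{Y} = O_p(1/n)$, which tends to zero in probability. Assembling the three pieces gives $\frac{1}{n}A_{\bm{t}} \xrightarrow{p} \Sigma_{G^*}$ (matching Lemma \ref{lm8.1}'s normalization). The main technical hurdle is controlling $\mathrm{E}^T P_{\bm{t}} \mathrm{E}$: this is a $q \times q$ random matrix whose entries are quadratic forms in Gaussians, so I would either apply Markov's inequality entry-wise using the finite-dimensional moment formula for Wishart-type variables, or note that $\mathrm{E}^T P_{\bm{t}} \mathrm{E}$ has a non-central Wishart distribution with bounded degrees of freedom so its operator norm is automatically $O_p(1)$. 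Everything else is straightforward bookkeeping, and, as remarked after Lemma \ref{lm8}, the same argument applied entrywise to the clique and separator submatrices yields the analogous conclusion for $\frac{1}{n}A_{\bm{t}}^C$ and $\frac{1}{n}A_{\bm{t}}^S$, which is what subsequent lemmas will need.
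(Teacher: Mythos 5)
Your proof is correct and takes essentially the same route as the paper: the identical three-way split of $\frac{1}{n}A_{\bm{t}}$ into $\frac{1}{n}I_q$, the residual quadratic form covered by Lemma \ref{lm8.1}, and the $\frac{1}{n(g+1)}\mathrm{Y}^TP_{\bm{t}}\mathrm{Y}$ remainder, which is then shown to vanish. The only (cosmetic) difference is that the paper kills that remainder by showing its trace tends to zero and invoking nonnegative-definiteness plus the spectral decomposition to pass to entries, rather than expanding $\mathrm{Y}=\mathrm{U}_{\bm{t}}\mathrm{B}_{\bm{t},G^*}+\mathrm{E}$ into four pieces as you do; also note that the $1/d$ discrepancy you flagged is present in the paper's own proof as well (the limit is literally $\frac{1}{d}\Sigma_{G^*}$ unless $d=1$).
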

\begin{proof}
	First, we show $\mathrm{p}\lim_{n\rightarrow\infty}\frac{1}{n}\frac{1}{g+1}\mathrm{Y}^TP_{\bm{t}}\mathrm{Y} = \bm{\mathrm{0}}_{q \times q}$. Let $\mathrm{y}_i$ be the $i$th column of $\mathrm{Y}$. Note that $\lim_{n\rightarrow\infty}\mathbb{E}(\frac{1}{n}\frac{1}{g+1}\mathrm{y}_i^TP_{\bm{t}}\mathrm{y}_i)=0$ and $\lim_{n\rightarrow\infty}Var(\frac{1}{n}\frac{1}{g+1}\mathrm{y}_i^TP_{\bm{t}}\mathrm{y}_i)=0$, so $\frac{1}{n}\frac{1}{g+1}\mathrm{y}_i^TP_{\bm{t}}\mathrm{y}_i \rightarrow 0$ in probability. Hence, $\sum_{i=1}^q\frac{1}{n}\frac{1}{g+1}\mathrm{y}_i^TP_{\bm{t}}\mathrm{y}_i \rightarrow 0$ in probability. Therefore, the sum of eigenvalues of matrix $\frac{1}{n}\frac{1}{g+1}\mathrm{Y}^TP_{\bm{t}}\mathrm{Y}$ goes to zero in probability. Let $\lambda^t_i$ be the $i$th eigenvalue of $\frac{1}{n}\frac{1}{g+1}\mathrm{Y}^TP_{\bm{t}}\mathrm{Y}$, $i=1,2,\dots,q$. So $\lambda^t_i$ goes to zero in probability as the matrix is non-negative definite.  Using spectral decomposition, $\frac{1}{n}\frac{1}{g+1}\mathrm{Y}^TP_{\bm{t}}\mathrm{Y}=\sum_{i=1}^q\lambda^t_i u_i u^T_i$, where $u_i$'s are orthonormal eigenvectors, each of the entries of  $\frac{1}{n}\frac{1}{g+1}\mathrm{Y}^TP_{\bm{t}}\mathrm{Y}$ goes to zero in probability and our claim follows. 

	Therefore, 
	\begin{align*}
	\mathrm{p}\lim_{n\rightarrow\infty}\frac{1}{n}A_{\bm{t}} & = \mathrm{p}\lim_{n\rightarrow\infty}\frac{1}{n} \bigg\{I_q + \frac{1}{d}\mathrm{Y}^T \big(I_n - \frac{g}{g+1}P_{\bm{t}} \big) \mathrm{Y}\bigg\} \\
	& = \mathrm{p}\lim_{n\rightarrow\infty}\frac{1}{n}I_q + \mathrm{p}\lim_{n\rightarrow\infty}\frac{1}{d}\frac{\mathrm{Y}^T \big(I_n - P_{\bm{t}} \big) \mathrm{Y}}{n} + \mathrm{p}\lim_{n\rightarrow\infty}\frac{1}{d}\frac{1}{g+1}\frac{\mathrm{Y}^T P_{\bm{t}} \mathrm{Y}}{n}  \\
	& = \bm{\mathrm{0}}_{q \times q} + \Sigma_{G^*} + \bm{\mathrm{0}}_{q \times q} = \Sigma_{G^*}.
	\end{align*}
\end{proof}

\begin{lemma}\label{lm8.3}
	Let $\widetilde{\lambda}^{\bm{a}}_i, i = 1, \dots, q$ be the eigenvalues of $\frac{1}{n}S(\bm{a})$ and $\widetilde{\lambda}^{\bm{a}\cup\bm{t}}_i, i = 1, \dots, q$ be the eigenvalues of $\frac{1}{n}S(\bm{a}\cup\bm{t})$, where $S(\bm{a}) = \mathrm{Y}^T \big(I_n - \frac{g}{g+1}P_{\bm{a}} \big) \mathrm{Y}$ and $S(\bm{a}\cup\bm{t}) = \mathrm{Y}^T \big(I_n - \frac{g}{g+1}P_{\bm{a}\cup\bm{t}} \big) \mathrm{Y}$. Under Condition \ref{cond1}, \ref{cond2}, \ref{cond4}, \ref{cond6}, $Pr(\widetilde{\lambda}^{\bm{a}}_i > \bar{C}) \rightarrow 0$ and $Pr(\widetilde{\lambda}^{\bm{a}\cup\bm{t}}_i > \bar{C}) \rightarrow 0$, $i = 1, \dots, q$, as $n\rightarrow \infty$, where $\bar{C}$ is some fixed positive constant.
\end{lemma}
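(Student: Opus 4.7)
The plan is to upper-bound every $\widetilde\lambda^{\bm{a}}_i$ and $\widetilde\lambda^{\bm{a}\cup\bm{t}}_i$ by $\lambda_{\max}(\tfrac{1}{n}\mathrm{Y}^T\mathrm{Y})$ and then show the latter is $O_p(1)$. Because $P_{\bm{a}}$ and $P_{\bm{a}\cup\bm{t}}$ are orthogonal projections, $0\preceq I_n-\tfrac{g}{g+1}P_{\bm{a}}\preceq I_n$ and similarly with $\bm{a}\cup\bm{t}$ in place of $\bm{a}$; Condition \ref{cond4} enters only as the harmless bound $\tfrac{g}{g+1}<1$, while Condition \ref{cond6} makes $P_{\bm{a}\cup\bm{t}}$ a bona fide projection for sufficiently large $n$ since $p_{\bm{a}\cup\bm{t}}\le 2p=o(n)$. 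Conjugating by $\mathrm{Y}$ and applying Weyl monotonicity of eigenvalues gives $\widetilde\lambda^{\bm{a}}_i,\widetilde\lambda^{\bm{a}\cup\bm{t}}_i\le\lambda_{\max}(\tfrac{1}{n}\mathrm{Y}^T\mathrm{Y})$, which is the desired reduction.

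To dominate $\lambda_{\max}(\tfrac{1}{n}\mathrm{Y}^T\mathrm{Y})$, I would decompose $\mathrm{Y}=\mathrm{E}_y+\mathrm{E}$ with $\mathrm{E}_y=\mathrm{U}_{\bm{t}}\mathrm{B}_{\bm{t},G^*}$ and write
\begin{equation*}
\tfrac{1}{n}\mathrm{Y}^T\mathrm{Y}=\tfrac{1}{n}\mathrm{E}_y^T\mathrm{E}_y+\tfrac{1}{n}\bigl(\mathrm{E}_y^T\mathrm{E}+\mathrm{E}^T\mathrm{E}_y\bigr)+\tfrac{1}{n}\mathrm{E}^T\mathrm{E}.
\end{equation*}
The first summand equals $\mathrm{B}_{\bm{t},G^*}^T\bigl(\tfrac{1}{n}\mathrm{U}_{\bm{t}}^T\mathrm{U}_{\bm{t}}\bigr)\mathrm{B}_{\bm{t},G^*}$, whose operator norm is bounded by $d_{\mathrm{U}}\|\mathrm{B}_{\bm{t},G^*}\|_{\mathrm{op}}^2$ by Condition \ref{cond2}. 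The third summand has distribution $W_q(n,\Sigma_{G^*})/n$, so the entrywise mean-and-variance bookkeeping already performed in Lemma \ref{lm8.1} (means equal to the entries of $\Sigma_{G^*}$, variances of order $1/n$) together with Chebyshev's inequality gives $\tfrac{1}{n}\mathrm{E}^T\mathrm{E}\xrightarrow{p}\Sigma_{G^*}$; since $q$ is fixed by Condition \ref{cond1}, entrywise convergence implies operator-norm convergence. For the cross summand, rows of $\mathrm{E}$ are i.i.d.\ $N_q(0,\Sigma_{G^*})$, so $(\mathrm{E}_y^T\mathrm{E})_{ij}=\sum_\ell(\mathrm{E}_y)_{\ell i}\mathrm{E}_{\ell j}$ is a sum of independent centered Gaussians with variance $\Sigma_{G^*,jj}\sum_\ell(\mathrm{E}_y)_{\ell i}^2\le n\,d_{\mathrm{U}}\|\mathrm{B}_{\bm{t},G^*}\|_{\mathrm{op}}^2\,\lambda_{\max}(\Sigma_{G^*})=O(n)$, so Chebyshev makes every entry of $\tfrac{1}{n}(\mathrm{E}_y^T\mathrm{E}+\mathrm{E}^T\mathrm{E}_y)$ into $o_p(1)$.

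Combining the three summands shows that $\tfrac{1}{n}\mathrm{Y}^T\mathrm{Y}$ lies within $o_p(1)$ (entrywise, hence in operator norm thanks to the fixed $q$) of a deterministic matrix with operator norm at most $d_{\mathrm{U}}\|\mathrm{B}_{\bm{t},G^*}\|_{\mathrm{op}}^2+\lambda_{\max}(\Sigma_{G^*})$. Any fixed $\bar C$ strictly larger than this upper bound then satisfies $\Pr\{\lambda_{\max}(\tfrac{1}{n}\mathrm{Y}^T\mathrm{Y})>\bar C\}\to 0$, and the reduction from the first paragraph finishes the proof. The main obstacle I anticipate is the entrywise variance control for the cross term, where one has to remember that $\mathrm{E}$ does not have i.i.d.\ entries (only i.i.d.\ rows), forcing the Gaussian linear-form variance formula instead of a naive i.i.d.\ sum; once that is in place the rest is a routine application of Chebyshev that goes through uniformly in $\bm{a}$, exploiting only the fixed dimension $q$ from Condition \ref{cond1} and the eigenvalue bounds of Condition \ref{cond2}.
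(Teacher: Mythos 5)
Your proof is correct, but it reaches the bound by a different decomposition than the paper. The paper controls each eigenvalue by the trace of $\tfrac{1}{n}S(\bm{a})$, writes that trace as $\sum_i \tfrac{1}{n}\mathrm{y}_i^T(I_n-P_{\bm{a}})\mathrm{y}_i + \sum_i\tfrac{1}{n}\tfrac{1}{g+1}\mathrm{y}_i^TP_{\bm{a}}\mathrm{y}_i$, identifies each summand as a noncentral chi-squared quadratic form whose noncentrality is bounded via Condition \ref{cond2} (through $b_M = \max_i\|\mathrm{b}_i\|_2^2$), and applies Chebyshev; it explicitly invokes Conditions \ref{cond4} and \ref{cond6} to kill the $P_{\bm{a}}$ contribution. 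You instead eliminate the projection at the outset via the Loewner bound $0\preceq I_n-\tfrac{g}{g+1}P_{\bm{a}}\preceq I_n$, reducing everything to $\lambda_{\max}(\tfrac{1}{n}\mathrm{Y}^T\mathrm{Y})$, and then control that single $\bm{a}$-independent quantity through the signal--cross--noise decomposition $\mathrm{Y}=\mathrm{E}_y+\mathrm{E}$. Your route buys a bound that is manifestly uniform in $\bm{a}$ and makes transparent that Conditions \ref{cond4} and \ref{cond6} contribute nothing beyond trivialities here; the paper's route keeps the projection visible and sets up the noncentral chi-squared bookkeeping that it reuses in Lemmas \ref{lm8.4} and \ref{lm8.5}. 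Both arguments rest on the same implicit boundedness of the true coefficient matrix (your $\|\mathrm{B}_{\bm{t},G^*}\|_{\mathrm{op}}<\infty$ versus the paper's $b_M<\infty$), and your handling of the cross term's variance via independent rows rather than independent entries is exactly right.
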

\begin{proof}
	Let $\mathrm{y}_i$ be the $i$th column of $\mathrm{Y}$ and $\mathrm{b}_i$ be the $i$th column of $\mathrm{B}_{\bm{t},G^*}$. Then $\mathrm{v}_i := \mathrm{U}_{\bm{t}}\mathrm{b}_i$ is the $i$th column of $\mathrm{U}_{\bm{t}}\mathrm{B}_{\bm{t},G^*}$, $i=1,\dots, q$. Next, we have
	\begin{align*}
	tr\Big\{ \frac{S(\bm{a})}{n} \Big\} &= tr\Big\{ \frac{1}{n}\mathrm{Y}^T \Big(I_n - \frac{g}{g+1}P_{\bm{a}} \Big) \mathrm{Y} \Big\} \\
	&= tr\Big\{ \frac{1}{n}\mathrm{Y}^T \Big(I_n - P_{\bm{a}} \Big) \mathrm{Y} \Big\} +tr \Big\{ \frac{1}{n} \frac{1}{g+1}\mathrm{Y}^T P_{\bm{a}}\mathrm{Y} \Big\} \\
	&= \sum_{i=1}^{q} tr\Big\{ \frac{1}{n}\mathrm{y}_i^T \Big(I_n - P_{\bm{a}} \Big) \mathrm{y}_i \Big\} + \sum_{i=1}^{q}tr\Big\{ \frac{1}{n}\frac{1}{g+1}\mathrm{y}_i^T P_{\bm{a}}\mathrm{y}_i \Big\}.
	\end{align*}

	Let $r_{\bm{a}} = rank (\mathrm{U}_{\bm{a}})$. Since $\mathrm{y}_i\sim N_n(\mathrm{v}_i, I_n)$, $i = 1, \dots, q$, we have
	\begin{align*}
	\mathrm{y}_i^T \Big(I_n - P_{\bm{a}} \Big) \mathrm{y}_i &\sim \chi^2_{n-r_{\bm{a}}}(\phi^{n-\bm{a}}_i), \\
	\mathrm{y}_i^T P_{\bm{a}} \mathrm{y}_i &\sim \chi^2_{r_{\bm{a}}}(\phi^{\bm{a}}_i),
	\end{align*}
	where $\phi^{n-\bm{a}}_i = \frac{1}{2}\mathrm{v}_i^T(I_n - P_{\bm{a}})\mathrm{v}_i$, $\phi^{\bm{a}}_i = \frac{1}{2}\mathrm{v}_i^TP_{\bm{a}}\mathrm{v}_i$. By Condition \ref{cond2}, we have $\frac{1}{n}\phi^{n-\bm{a}}_i = \frac{1}{2n}\mathrm{v}_i^T(I_n - P_{\bm{a}})\mathrm{v}_i \leq \frac{1}{n}\mathrm{v}_i^T\mathrm{v}_i = \mathrm{b}_i^T\frac{\mathrm{U}_{\bm{t}}^T\mathrm{U}_{\bm{t}}}{n} \mathrm{b}_i   \leq \frac{\lambda_{max}}{n}\| \mathrm{b}_i \|^2_2 < b_Md_\mathrm{U}$, where $b_M = max\{  \| \mathrm{b}_i \|^2_2, i = 1, \dots, q  \} < \infty$. Similarly, $\phi^{\bm{a}}_i = \frac{1}{2}\mathrm{v}_i^TP_{\bm{a}}\mathrm{v}_i\leq b_Md_\mathrm{U}$. Next, 
	\begin{align*}
	\mathbb{E}\Big[\frac{1}{n}\mathrm{y}_i^T \Big(I_n - P_{\bm{a}} \Big) \mathrm{y}_i \Big] &= \frac{1}{n}(n-r_{\bm{a}}+\phi^{n-\bm{a}}_i) \leq 1 + \frac{1}{n}\phi^{n-\bm{a}}_i < 1 + b_Md_\mathrm{U}, \\
	Var\Big[\frac{1}{n}\mathrm{y}_i^T \Big(I_n - P_{\bm{a}} \Big) \mathrm{y}_i \Big] &= \frac{1}{n^2}(2n-2r_{\bm{a}}+4\phi^{n-\bm{a}}_i) \leq \frac{1}{n}\Big(2 + \frac{4}{n}\phi^{n-\bm{a}}_i\Big) < \frac{1}{n}\Big(2 + 4b_Md_\mathrm{U}\Big)\rightarrow 0.
	\end{align*}
	Analogously, by Condition \ref{cond4} and \ref{cond6},
	\begin{align*}
	\mathbb{E}\Big[\frac{1}{g+1}\frac{1}{n}\mathrm{y}_i^T P_{\bm{a}} \mathrm{y}_i \Big] &= \frac{1}{g+1}\frac{1}{n}(r_{\bm{a}}+\phi^{\bm{a}}_i) < \frac{1}{g+1}b_Md_\mathrm{U} \rightarrow 0, \\
	Var\Big[\frac{1}{g+1}\frac{1}{n}\mathrm{y}_i^T P_{\bm{a}} \mathrm{y}_i \Big] &= \frac{1}{(g+1)^2}\frac{1}{n^2}(2r_{\bm{a}}+4\phi^{\bm{a}}_i) < \frac{1}{(g+1)^2}\frac{1}{n}\Big(2 + 4b_Md_\mathrm{U}\Big) \rightarrow 0,
	\end{align*}
	So for any $\bar{\epsilon}>0$, we have $Pr\Big\{\frac{1}{n}\mathrm{y}_i^T \big(I_n - P_{\bm{a}} \big) \mathrm{y}_i > 1 + b_Md_\mathrm{U} + \bar{\epsilon} \Big\} \rightarrow 0$ and $Pr\Big\{\frac{1}{g+1}\frac{1}{n}\mathrm{y}_i^T P_{\bm{a}} \mathrm{y}_i > \bar{\epsilon} \Big\} \rightarrow 0$, $i = 1. \dots, q$, as $n\rightarrow 0$. By combining the two results together,
	\begin{align*}
	& Pr\Big\{\frac{1}{n}\mathrm{y}_i^T \Big(I_n - \frac{g}{g+1}P_{\bm{a}} \Big) \mathrm{y}_i > 1 + b_Md_\mathrm{U} + 2\bar{\epsilon} \Big\} \\
	= & Pr\Big\{ \frac{1}{n}\mathrm{y}_i^T \Big(I_n - P_{\bm{a}} \Big) \mathrm{y}_i + \frac{1}{g+1}\frac{1}{n}\mathrm{y}_i^T P_{\bm{a}} \mathrm{y}_i > 1 + b_Md_\mathrm{U} + \bar{\epsilon} + \bar{\epsilon} \Big\} \\
	\leq & Pr\Big\{\frac{1}{n}\mathrm{y}_i^T \big(I_n - P_{\bm{a}} \big) \mathrm{y}_i > 1 + b_Md_\mathrm{U} + \bar{\epsilon} \Big\}
	+ Pr\Big\{\frac{1}{g+1}\frac{1}{n}\mathrm{y}_i^T P_{\bm{a}} \mathrm{y}_i > \bar{\epsilon} \Big\} \rightarrow 0.
	\end{align*}
	Therefore,
	\begin{align*}
	& Pr\Big\{ \widetilde{\lambda}^{\bm{a}}_i > q(1 + b_Md_\mathrm{U} + 2\bar{\epsilon}) \Big\} \leq Pr\Big\{ \sum_{i=1}^{q}\widetilde{\lambda}^{\bm{a}}_i > q(1 + b_Md_\mathrm{U} + 2\bar{\epsilon})   \Big\} \\
	= & Pr\Big\{ tr\Big( \frac{S(\bm{a})}{n} \Big) > q(1 + b_Md_\mathrm{U} + 2\bar{\epsilon})   \Big\} \\
	= & Pr\Big\{ \sum_{i=1}^{q}\frac{1}{n}\mathrm{y}_i^T \Big(I_n - \frac{g}{g+1}P_{\bm{a}} \Big) \mathrm{y}_i > q(1 + b_Md_\mathrm{U} + 2\bar{\epsilon}) \Big\} \\
	\leq & Pr\Big\{ \cup_{i=1}^{q}\Big(\frac{1}{n}\mathrm{y}_i^T \Big(I_n - \frac{g}{g+1}P_{\bm{a}} \Big) \mathrm{y}_i > 1 + b_Md_\mathrm{U} + 2\bar{\epsilon}\Big) \Big\} \\
	\leq & \sum_{i=1}^{q} Pr\Big\{\frac{1}{n}\mathrm{y}_i^T \Big(I_n - \frac{g}{g+1}P_{\bm{a}} \Big) \mathrm{y}_i > 1 + b_Md_\mathrm{U} + 2\bar{\epsilon} \Big\} \rightarrow 0.
	\end{align*}
	Let $\bar{\epsilon} = 0.5$ and $\bar{C} = q(2 + b_Md_\mathrm{U})$, we have $Pr(\widetilde{\lambda}^{\bm{a}}_i > \bar{C}) \rightarrow 0$, $i = 1, \dots, q$, as $n\rightarrow 0$. Same as the proof above, we can show $Pr(\widetilde{\lambda}^{\bm{a}\cup\bm{t}}_i > \bar{C}) \rightarrow 0$, $i = 1, \dots, q$, as $n\rightarrow 0$.
\end{proof}

\begin{lemma}\label{lm8.4}
	Under Condition \ref{cond1}, \ref{cond2} and \ref{cond6}, when $\bm{a}\nsubseteq\bm{t}$,
\begin{enumerate}
	\itemsep0em
	\item If $p_{\bm{a}}$ is bounded, the largest eigenvalue of $\mathrm{Y}^T \big(P_{\bm{a}\cup\bm{t}} - P_{\bm{t}} \big) \mathrm{Y}$ is $O_p(1)$;
	\item If $p_{\bm{a}}$ is unbounded, the largest eigenvalue of $\mathrm{Y}^T \big(P_{\bm{a}\cup\bm{t}} - P_{\bm{t}} \big) \mathrm{Y}$ is at most $O_p(r_{\bm{a} \cup \bm{t}})$.
\end{enumerate}
\end{lemma}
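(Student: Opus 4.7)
The plan is to reduce the quadratic form $\mathrm{Y}^T(P_{\bm{a}\cup\bm{t}}-P_{\bm{t}})\mathrm{Y}$ to a standard Wishart object. First I would exploit the nesting: since $\mathrm{col}(\mathrm{U}_{\bm{t}})\subseteq \mathrm{col}(\mathrm{U}_{\bm{a}\cup\bm{t}})$, both $P_{\bm{a}\cup\bm{t}}$ and $P_{\bm{t}}$ fix $\mathrm{U}_{\bm{t}}$, so $(P_{\bm{a}\cup\bm{t}}-P_{\bm{t}})\mathrm{U}_{\bm{t}}=0$. Writing $\mathrm{Y}=\mathrm{U}_{\bm{t}}\mathrm{B}_{\bm{t},G^*}+\mathrm{E}$ with $\mathrm{E}\sim\mathrm{MN}_{n\times q}(0,I_n,\Sigma_{G^*})$, the mean drops out and
\[
\mathrm{Y}^T(P_{\bm{a}\cup\bm{t}}-P_{\bm{t}})\mathrm{Y}=\mathrm{E}^T(P_{\bm{a}\cup\bm{t}}-P_{\bm{t}})\mathrm{E}.
\]

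Next I would whiten the noise. Set $\mathrm{E}=Z\Sigma_{G^*}^{1/2}$ with $Z$ an $n\times q$ matrix of i.i.d.\ $N(0,1)$ entries. The matrix $M:=P_{\bm{a}\cup\bm{t}}-P_{\bm{t}}$ is a symmetric idempotent of rank $r=r_{\bm{a}\cup\bm{t}}-r_{\bm{t}}\leq r_{\bm{a}\cup\bm{t}}$, so $M=QQ^T$ for some $Q\in\mathbb{R}^{n\times r}$ with orthonormal columns. Then
\[
\mathrm{E}^T M\mathrm{E}=\Sigma_{G^*}^{1/2}(Q^T Z)^T(Q^T Z)\Sigma_{G^*}^{1/2},
\]
and because $Q^TQ=I_r$, the matrix $Q^TZ$ has i.i.d.\ $N(0,1)$ entries, so $(Q^TZ)^T(Q^TZ)\sim W_q(r,I_q)$.

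The largest eigenvalue of $\mathrm{E}^T M\mathrm{E}$ is then at most $\lambda_{\max}(\Sigma_{G^*})\cdot\lambda_{\max}\bigl((Q^TZ)^T(Q^TZ)\bigr)$. Condition~\ref{cond1} makes $q$ finite and $\Sigma_{G^*}$ is a fixed positive definite matrix, so $\lambda_{\max}(\Sigma_{G^*})$ is a finite constant. For the Wishart factor I would use the crude bound
\[
\lambda_{\max}\bigl((Q^TZ)^T(Q^TZ)\bigr)\leq \mathrm{tr}\bigl((Q^TZ)^T(Q^TZ)\bigr)=\|Q^TZ\|_F^2\sim \chi^2_{rq},
\]
which has mean $rq$ and variance $2rq$, hence equals $O_p(r)$ by Chebyshev (since $q$ is fixed).

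Finally I would branch on the two cases. When $p_{\bm{a}}$ is bounded, Condition~\ref{cond1} plus the finiteness of $p_{\bm{t}}$ and $k$ forces $r\leq p_{\bm{a}\cup\bm{t}}(k+1)$ to be bounded, yielding $O_p(1)$. When $p_{\bm{a}}$ is unbounded, we still have $r\leq r_{\bm{a}\cup\bm{t}}$, yielding $O_p(r_{\bm{a}\cup\bm{t}})$. The only delicate point is making sure the rank identity $r=r_{\bm{a}\cup\bm{t}}-r_{\bm{t}}$ is legitimate and that $\Sigma_{G^*}^{1/2}$ cannot inflate eigenvalues — both of these are guaranteed by Condition~\ref{cond2} (full column rank of the $\mathrm{U}$ blocks uniformly) and the fixed finite dimension $q$. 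I do not anticipate a hard obstacle; the argument is essentially a rank-$r$ Wishart trace bound once the mean is projected away.
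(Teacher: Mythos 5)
Your proof is correct and follows essentially the same route as the paper's: both exploit the fact that the difference of nested projections annihilates the mean $\mathrm{U}_{\bm{t}}\mathrm{B}_{\bm{t},G^*}$, bound the largest eigenvalue of the positive semidefinite quadratic form by its trace, and identify that trace as a chi-square--type quantity with $r_{\bm{a}\cup\bm{t}}-r_{\bm{t}}$ degrees of freedom per response column, yielding $O_p(1)$ when $p_{\bm{a}}$ is bounded and $O_p(r_{\bm{a}\cup\bm{t}})$ otherwise. Your explicit whitening by $\Sigma_{G^*}^{1/2}$ is in fact slightly more careful than the paper, which works column by column and writes $\mathrm{y}_i\sim N_n(\mathrm{v}_i,I_n)$ without tracking the column variances, but the substance of the argument is identical.
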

\begin{proof}
	As we know $P_{\bm{a} \cup \bm{t}}-P_{\bm{t}}$ is idempotent, then follow the same notations as in Lemma \ref{lm8.3}, we have $tr\{ \mathrm{Y}^T \big(P_{\bm{a}\cup\bm{t}} - P_{\bm{t}} \big) \mathrm{Y} \} = \sum_{i=1}^q \mathrm{y}_i^T(P_{\bm{a} \cup \bm{t}}-P_{\bm{t}})\mathrm{y}_i$, and $\mathrm{y}_i^T(P_{\bm{a} \cup \bm{t}}-P_{\bm{t}})\mathrm{y}_i \sim \chi^2_{r_{\bm{a} \cup \bm{t}}-r_{\bm{t}}}$. If $p_{\bm{a}}$ is bounded, then $r_{\bm{a} \cup \bm{t}}-r_{\bm{t}} = O(1)$. In this case, $tr\{ \mathrm{Y}^T \big(P_{\bm{a}\cup\bm{t}} - P_{\bm{t}} \big) \mathrm{Y} \} = O_p(1)$, which means the largest eigenvalue of $\mathrm{Y}^T \big(P_{\bm{a}\cup\bm{t}} - P_{\bm{t}} \big) \mathrm{Y}$ is $O_p(1)$. By Condition \ref{cond2}, $U_{\bm{a}\cup\bm{t}}$ has full column rank, then $r_{\bm{a} \cup \bm{t}}=p_{\bm{a} \cup \bm{t}}(k+1)$. If $p_{\bm{a}}$ is unbounded, then $r_{\bm{a} \cup \bm{t}} - r_{\bm{t}} \preceq O_p(r_{\bm{a} \cup \bm{t}})$. So $tr\{ \mathrm{Y}^T \big(P_{\bm{a}\cup\bm{t}} - P_{\bm{t}} \big) \mathrm{Y} \} \preceq O_p(r_{\bm{a} \cup \bm{t}})$, which means the largest eigenvalue of $\mathrm{Y}^T \big(P_{\bm{a}\cup\bm{t}} - P_{\bm{t}} \big) \mathrm{Y}$ is at most $O_p(r_{\bm{a} \cup \bm{t}})$. By Condition \ref{cond6} we know $\frac{r_{\bm{a} \cup \bm{t}}}{n}=o_p(n)$.
\end{proof}

\begin{lemma}\label{lm8.5}
	Let $\widetilde{\lambda}^{\bm{a}\cup\bm{t} - \bm{a}}_M$ denote the largest eigenvalue of $\frac{1}{n}\mathrm{Y}^T \big(P_{\bm{a}\cup\bm{t}} - P_{\bm{a}} \big) \mathrm{Y}$. Under Condition \ref{cond1}, \ref{cond3}, $Pr(\widetilde{\lambda}^{\bm{a}\cup\bm{t} - \bm{a}}_M > \bar{\bar{C}}) \rightarrow 1$, as $n\rightarrow\infty$, where $\bar{\bar{C}}$ is some fixed positive constant.
\end{lemma}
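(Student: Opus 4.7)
\textbf{Proof proposal for Lemma \ref{lm8.5}.} The key idea is to exploit the separation guaranteed by Condition \ref{cond3}: the projection $P_{\bm{a}\cup\bm{t}}-P_{\bm{a}}$ extracts the part of the true mean $\mathrm{E}_y$ that lies outside the span of $\mathrm{U}_{\bm{a}}$, which by Condition \ref{cond3} carries a signal of order $n$, whereas the corresponding noise piece has rank $o(n)$ and therefore only contributes a lower order term. Throughout, write $Q:=P_{\bm{a}\cup\bm{t}}-P_{\bm{a}}$, which is an orthogonal projection of rank $r_{\bm{a}\cup\bm{t}}-r_{\bm{a}}$, and decompose $\mathrm{Y}=\mathrm{E}_y+\mathrm{E}$.

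First, I would lower bound the largest eigenvalue by the averaged trace, $\widetilde{\lambda}^{\bm{a}\cup\bm{t}-\bm{a}}_M \ge \frac{1}{q}\cdot\frac{1}{n}\mathrm{tr}(\mathrm{Y}^T Q\mathrm{Y}) = \frac{1}{nq}\|Q\mathrm{Y}\|_F^2$. The next step is to observe that $\mathrm{E}_y=\mathrm{U}_{\bm t}\mathrm{B}_{\bm t,G}$ lies in the column space of $\mathrm{U}_{\bm{a}\cup\bm{t}}$, so $P_{\bm{a}\cup\bm{t}}\mathrm{E}_y=\mathrm{E}_y$ and hence $Q\mathrm{E}_y=(I_n-P_{\bm a})\mathrm{E}_y$. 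By Condition \ref{cond3} this gives
\[
\|Q\mathrm{E}_y\|_F^2 \;=\; \mathrm{tr}\bigl\{\mathrm{E}_y^T(I_n-P_{\bm a})\mathrm{E}_y\bigr\} \;>\; C_0\,n.
\]

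Next, I would control the noise piece $\|Q\mathrm{E}\|_F^2=\mathrm{tr}(\mathrm{E}^T Q\mathrm{E})$. Since $\mathrm{vec}(\mathrm{E})\sim N(0,\Sigma_{G^*}\otimes I_n)$, a standard Gaussian quadratic-form calculation gives the mean $\mathrm{tr}(Q)\,\mathrm{tr}(\Sigma_{G^*})$ and variance $2\,\mathrm{tr}(Q)\,\mathrm{tr}(\Sigma_{G^*}^2)$. Because $q$ is finite (Condition \ref{cond1}) and $\Sigma_{G^*}$ is a fixed positive-definite $q\times q$ matrix, $\mathrm{tr}(\Sigma_{G^*})$ and $\mathrm{tr}(\Sigma_{G^*}^2)$ are constants, while $\mathrm{tr}(Q)=r_{\bm{a}\cup\bm{t}}-r_{\bm{a}}\le p(k+1)=o(n)$ by Conditions \ref{cond1} and \ref{cond6}. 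Chebyshev's inequality therefore yields $\|Q\mathrm{E}\|_F^2 = O_p(\mathrm{tr}(Q)) = o_p(n)$.

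Finally, I would combine these via the reverse triangle inequality for the Frobenius norm: $\|Q\mathrm{Y}\|_F \ge \|Q\mathrm{E}_y\|_F - \|Q\mathrm{E}\|_F \ge \sqrt{C_0\,n}-o_p(\sqrt{n})$. Squaring gives $\|Q\mathrm{Y}\|_F^2 \ge n(\sqrt{C_0}-o_p(1))^2 \ge C_0 n/2$ with probability tending to one, so $\widetilde{\lambda}^{\bm{a}\cup\bm{t}-\bm{a}}_M \ge C_0/(2q)$ with probability tending to one, and taking $\bar{\bar{C}}=C_0/(4q)$ completes the argument. The only mildly delicate step is the noise bound in paragraph two: one must notice that $Q$ has rank $o(n)$, so even though individual diagonal entries of $\mathrm{E}^T Q\mathrm{E}$ could be of order $n$ if $Q$ had full rank, here the rank restriction forces $\|Q\mathrm{E}\|_F^2$ to be of strictly smaller order than $n$, which is what allows the signal term $C_0 n$ to dominate.
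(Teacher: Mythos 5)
Your proof is correct, and while it lands on the same skeleton as the paper's --- bound $\widetilde{\lambda}^{\bm{a}\cup\bm{t}-\bm{a}}_M$ below by $\frac{1}{q}\,\mathrm{tr}\bigl\{\frac{1}{n}\mathrm{Y}^T(P_{\bm{a}\cup\bm{t}}-P_{\bm{a}})\mathrm{Y}\bigr\}$, let Condition \ref{cond3} supply a signal of order $n$, kill the fluctuations with a second-moment bound, and take $\bar{\bar{C}}=C_0/(4q)$ --- the bookkeeping is genuinely different. The paper works column by column, representing each $\mathrm{y}_i^T(P_{\bm{a}\cup\bm{t}}-P_{\bm{a}})\mathrm{y}_i$ as a noncentral $\chi^2$ whose noncentrality $\tfrac{1}{2}\mathrm{v}_i^T(I_n-P_{\bm{a}})\mathrm{v}_i$ carries the signal; you split $\mathrm{Y}=\mathrm{E}_y+\mathrm{E}$ at the outset, handle the signal deterministically via the identity $(P_{\bm{a}\cup\bm{t}}-P_{\bm{a}})\mathrm{E}_y=(I_n-P_{\bm{a}})\mathrm{E}_y$, and reattach the noise with the reverse triangle inequality. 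Your version is slightly more careful on one point: your quadratic-form moments account for the column covariance $\Sigma_{G^*}$ exactly, whereas the paper's $\chi^2$ representation tacitly gives each column of the error identity covariance. Two small remarks. First, to bound the noise you invoke Condition \ref{cond6} to get $\mathrm{tr}(Q)=o(n)$, but Condition \ref{cond6} is not among the lemma's stated hypotheses; the sharper observation used in the paper, $\mathrm{tr}(Q)=r_{\bm{a}\cup\bm{t}}-r_{\bm{a}}\le r_{\bm{t}}=p_{\bm{t}}(k+1)<\infty$, makes the noise term $O_p(1)$ and avoids the extra assumption (the lemma's condition list is loose in any case, since the paper's own proof also leans on Condition \ref{cond2}). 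Second, your final squaring step should be read as: on the event $\|Q\mathrm{E}\|_F\le \tfrac{1}{2}\sqrt{C_0 n}$, whose probability tends to one, $\|Q\mathrm{Y}\|_F^2\ge C_0 n/4$, which still yields $\widetilde{\lambda}^{\bm{a}\cup\bm{t}-\bm{a}}_M\ge C_0/(4q)$ with probability tending to one; the constant $C_0/2$ you wrote before squaring is immaterial.
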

\begin{proof}
	Follow the same notations as in Lemma \ref{lm8.3}, $tr\big\{\frac{1}{n}\mathrm{Y}^T \big(P_{\bm{a} \cup \bm{t}} - P_{\bm{a}} \big) \mathrm{Y}\big\} = \sum_{i=1}^{q} \frac{1}{n}\mathrm{y}_i^T \big(P_{\bm{a} \cup \bm{t}} - P_{\bm{a}} \big) \mathrm{y}_i$.	Then,
	\begin{equation*}
	\mathrm{y}_i^T \big(P_{\bm{a} \cup \bm{t}} - P_{\bm{a}} \big) \mathrm{y}_i \sim \chi^2_{r_{\bm{a}\cup\bm{t}} - r_{\bm{a}}}(\phi^{\bm{a}\cup\bm{t} - \bm{a}}_i),
	\end{equation*}	
	where $\phi^{\bm{a}\cup\bm{t} - \bm{a}}_i = \frac{1}{2}\mathrm{v}_i^T(P_{\bm{a}\cup\bm{t}} - P_{\bm{a}})\mathrm{v}_i = \frac{1}{2}\mathrm{v}_i^T(I_n - P_{\bm{a}})P_{\bm{t}}\mathrm{v}_i = \frac{1}{2}\mathrm{v}_i^T(I_n - P_{\bm{a}})\mathrm{v}_i$ and $r_{\bm{a}\cup\bm{t}} - r_{\bm{a}} \leq r_{\bm{t}} < \infty$. As in Lemma \ref{lm8.3}, we know $\frac{1}{n}\phi^{\bm{a}\cup\bm{t} - \bm{a}}_i \leq \frac{1}{n}\mathrm{v}_i^T \mathrm{v}_i \leq b_M d_{\mathrm{U}}$.
	Next, by Condotion \ref{cond3},
	\begin{align*}
	\mathbb{E}\Big[tr\Big\{\frac{1}{n}\mathrm{Y}^T \big(P_{\bm{a} \cup \bm{t}} - P_{\bm{a}} \big) \mathrm{Y}\Big\}\Big] 
	&= \sum_{i=1}^{q} \mathbb{E}\Big[ \frac{1}{n}\mathrm{y}_i^T \big(P_{\bm{a} \cup \bm{t}} - P_{\bm{a}} \big) \mathrm{y}_i \Big]
	= \sum_{i=1}^{q} \frac{1}{n}(r_{\bm{a}\cup\bm{t}} - r_{\bm{a}} + \phi^{\bm{a}\cup\bm{t} - \bm{a}}_i) \\
	&\geq \frac{1}{2n}\sum_{i=1}^{q}\mathrm{v}_i^T(I_n - P_{\bm{a}})\mathrm{v}_i = \frac{1}{2n} tr\{\mathrm{E}_y^T(I_n-P_{\bm{a}})\mathrm{E}_y\} > C_0/2,
	\end{align*}
	\begin{align*}
	Var\Big[\frac{1}{n}\mathrm{y}_i^T \big(P_{\bm{a} \cup \bm{t}} - P_{\bm{a}} \big) \mathrm{y}_i\Big] 
	&= \frac{1}{n^2}(2r_{\bm{a}\cup\bm{t}} - 2r_{\bm{a}} + 4\phi^{\bm{a}\cup\bm{t} - \bm{a}}_i )
	\leq \frac{1}{n}\Big( \frac{1}{n}2r_{\bm{t}} + \frac{1}{n}4\phi^{\bm{a}\cup\bm{t} - \bm{a}}_i\Big) \\
	&\leq \frac{2r_{\bm{t}}}{n^2} + \frac{b_Md_{\mathrm{U}}}{n} \rightarrow 0, i = 1, \dots, q.
	\end{align*}
	Then, $Var\Big[tr\Big\{\frac{1}{n}\mathrm{Y}^T \big(P_{\bm{a} \cup \bm{t}} - P_{\bm{a}} \big) \mathrm{Y}\Big\}\Big] \leq \sum_{i=1}^{q}\sum_{j=1}^{q} \sqrt{Var\Big[\frac{1}{n}\mathrm{y}_i^T \big(P_{\bm{a} \cup \bm{t}} - P_{\bm{a}} \big) \mathrm{y}_i\Big] Var\Big[\frac{1}{n}\mathrm{y}_j^T \big(P_{\bm{a} \cup \bm{t}} - P_{\bm{a}} \big) \mathrm{y}_j\Big]} \rightarrow 0$,
	as $n\rightarrow\infty$. So $Pr\Big\{tr\Big\{\frac{1}{n}\mathrm{Y}^T \big(P_{\bm{a} \cup \bm{t}} - P_{\bm{a}} \big) \mathrm{Y}\Big\} > C_0/4 \Big\} \rightarrow 1$.
	Let $\widetilde{\lambda}^{\bm{a}\cup\bm{t} - \bm{a}}_i, i=1,\dots,q$ be the eigenvalues of $\frac{1}{n}\mathrm{Y}^T \big(P_{\bm{a}\cup\bm{t}} - P_{\bm{a}} \big) \mathrm{Y}$, therefore
	\begin{align*}
	Pr\Big(\widetilde{\lambda}^{\bm{a}\cup\bm{t} - \bm{a}}_M > \frac{C_0}{4q}\Big) \geq Pr(\sum_{i=1}^{q} \widetilde{\lambda}^{\bm{a}\cup\bm{t} - \bm{a}}_i > C_0/4) =
	Pr\Big\{tr\Big\{\frac{1}{n}\mathrm{Y}^T \big(P_{\bm{a} \cup \bm{t}} - P_{\bm{a}} \big) \mathrm{Y}\Big\} > C_0/4 \Big\} \rightarrow 1.
	\end{align*}
	Let $\bar{\bar C} = C_0/4q$, then we have $Pr(\widetilde{\lambda}^{\bm{a}\cup\bm{t} - \bm{a}}_M > \bar{\bar{C}}) \rightarrow 1$, as $n\rightarrow\infty$.
\end{proof}

\begin{lemma}\label{lm8.6}
	Under Condition \ref{cond1}, \ref{cond2}, \ref{cond4} and \ref{cond6}, $\mathrm{p}\lim_{n\rightarrow\infty}\mathrm{BF}(\bm{a}\cup\bm{t};\bm{t}|G)=0$, when $\bm{a}\nsubseteq\bm{t}$.
\end{lemma}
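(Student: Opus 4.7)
The plan is to exploit the factorization $\mathrm{BF}(\bm{a}\cup\bm{t};\bm{t}|G)=\mathrm{I}\cdot\mathrm{II}(\bm{a}\cup\bm{t},\bm{t})$ in equation (\ref{II}) and show that the dimension penalty dominates the data ratio whenever the alternative strictly nests the truth. Writing $\tilde{p}:=p_{\bm{a}\cup\bm{t}}-p_{\bm{t}}\geq 1$, Condition~\ref{cond4} forces $\log(g+1)\to\infty$ at rate $\log n$, so
\[
\log\mathrm{I}=-\tilde{p}(k+1)(q/2)\log(g+1)
\]
tends to $-\infty$ at rate $\tilde{p}(k+1)\log n$. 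All the work is therefore to show $\log\mathrm{II}=O_{p}(\tilde{p}(k+1))$, i.e.\ of strictly smaller order.

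For $\mathrm{II}$ I would apply the identity of Lemma~\ref{lm8} clique-by-clique and separator-by-separator: for each clique $C$,
\[
\Delta_C(\bm{a}\cup\bm{t},\bm{t})=\det\bigl(I_{|C|}-M_n^{C}\bigr),\qquad
M_n^{C}=\frac{1}{d}\frac{g}{g+1}\Bigl(\tfrac{A_{\bm{t}}^{C}}{n}\Bigr)^{-1/2}\Bigl\{\tfrac{1}{n}\mathrm{Y}_C^{T}(P_{\bm{a}\cup\bm{t}}-P_{\bm{t}})\mathrm{Y}_C\Bigr\}\Bigl(\tfrac{A_{\bm{t}}^{C}}{n}\Bigr)^{-1/2},
\]
and analogously for each separator. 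Applied to every $|C|\times|C|$ principal submatrix, Lemma~\ref{lm8.2} makes the outer factors $O_{p}(1)$ in operator norm (the relevant limit is the $|C|\times|C|$ principal submatrix of $\Sigma_{G^*}$, which is positive definite). By Lemma~\ref{lm8.4}, the numerator trace $\mathrm{tr}\bigl(\mathrm{Y}^{T}(P_{\bm{a}\cup\bm{t}}-P_{\bm{t}})\mathrm{Y}\bigr)$ is at most $O_{p}(r_{\bm{a}\cup\bm{t}}-r_{\bm{t}})=O_{p}(\tilde{p}(k+1))$, while Condition~\ref{cond6} forces its largest eigenvalue divided by $n$ to be $o_{p}(1)$. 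Together these give $\mathrm{tr}(M_n^{C})=O_{p}(\tilde{p}(k+1)/n)$ and, on an event of probability tending to one, every eigenvalue $\mu_i^{C}$ of $M_n^{C}$ lies in $(0,1/2)$.

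On that event the elementary bound $-\log(1-\mu)\leq 2\mu$ on $[0,1/2]$ gives
\[
\Delta_C^{-(b+n+|C|-1)/2}\leq \exp\bigl\{(b+n+|C|-1)\,\mathrm{tr}(M_n^{C})\bigr\}\leq \exp\{O_{p}(\tilde{p}(k+1))\}.
\]
Since $\bm{t}\subset\bm{a}\cup\bm{t}$ implies $S_S(\bm{a}\cup\bm{t})\preceq S_S(\bm{t})$ and hence $\Delta_S(\bm{a}\cup\bm{t},\bm{t})\leq 1$, the separator factors $\Delta_S^{-(b+n+|S|-1)/2}\geq 1$ sit in the denominator of $\mathrm{II}$ and can only sharpen the upper bound. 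Finiteness of $\mathscr{C}$ and $\mathscr{S}$ (Condition~\ref{cond1}) then delivers $\log\mathrm{II}\leq O_{p}(\tilde{p}(k+1))$, so
\[
\log\mathrm{BF}(\bm{a}\cup\bm{t};\bm{t}|G)\leq \tilde{p}(k+1)\Bigl[O_{p}(1)-\tfrac{q}{2}\log(g+1)\Bigr]\longrightarrow -\infty
\]
in probability, which is the desired conclusion.

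The main obstacle is uniform control of $M_n^{C}$ as $p_{\bm{a}}$ is allowed to diverge with $n$. Specifically, the linearization $-\log(1-\mu)\leq 2\mu$ must remain legitimate along the sequence, which is precisely what Condition~\ref{cond6} buys by forcing the largest eigenvalue $O_{p}(r_{\bm{a}\cup\bm{t}}/n)$ to be genuinely $o_{p}(1)$; and Lemma~\ref{lm8.2}'s convergence must be promoted from the full $q\times q$ matrix to every clique- and separator-restricted submatrix, which is straightforward since principal submatrices inherit positive-definiteness of the limit. Once these two uniformity points are verified, the rest is the determinant bound above.
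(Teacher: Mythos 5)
Your proof is correct and rests on the same skeleton as the paper's: the factorization $\mathrm{BF}(\bm{a}\cup\bm{t};\bm{t}|G)=\mathrm{I}\cdot\mathrm{II}$ from (\ref{II}), the determinant identity of Lemma \ref{lm8} applied clique- and separator-wise, Lemma \ref{lm8.2} for the outer factors $(\tfrac{1}{n}A_{\bm{t}})^{-1/2}$, and Lemma \ref{lm8.4} for the middle term. The execution differs in a way worth noting. The paper splits into two cases according to whether $p_{\bm{a}}$ is bounded, bounds $\Delta_C(\bm{a}\cup\bm{t},\bm{t})$ from below by $\{1-O_p(\cdot)\}^{h}$ via the largest eigenvalue, and then argues $\mathrm{II}=\exp\{O_p(r_{\bm{a}\cup\bm{t}})\}$ using $\log(1+x)/x\to1$, treating the separator terms symmetrically with the clique terms. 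You avoid the case split entirely by working with $\mathrm{tr}(M_n^C)=O_p(\tilde p(k+1)/n)$ and the elementary bound $-\log(1-\mu)\le 2\mu$ on $[0,1/2]$ (legitimate because Condition \ref{cond6} makes every eigenvalue of $M_n^C$ vanish in probability), which yields $\log\mathrm{II}=O_p(\tilde p(k+1))$ in one stroke for both the bounded and unbounded regimes; and you dispose of the separators by the monotonicity observation $S_S(\bm{a}\cup\bm{t})\preceq S_S(\bm{t})$, hence $\Delta_S(\bm{a}\cup\bm{t},\bm{t})\le1$, which is cleaner and more explicit than the paper's treatment. Both arguments then conclude identically, with $\log\mathrm{BF}(\bm{a}\cup\bm{t};\bm{t}|G)\le\tilde p(k+1)\{O_p(1)-\tfrac{q}{2}\log(g+1)\}\to-\infty$. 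The only caveats---shared with the paper, so not gaps in your argument---are that Condition \ref{cond4}'s statement $g=O(n)$ must in fact be read as $g\to\infty$ (e.g.\ $g\asymp n$) for $\mathrm{I}$ to vanish, and that the clique- and separator-restricted versions of Lemma \ref{lm8.2} are being invoked through the remark following Lemma \ref{lm8}, exactly as the paper does.
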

\begin{proof}
	Case 1: If $p_{\bm{a}}$ is bounded, by Lemma \ref{lm8.2}, we know $\frac{1}{n} A_{\bm{t}}$ converges in probability to a positive definite constant matrix. So for large $n$, all eigenvalues of $\big(\frac{1}{n} A_{\bm{t}}\big)^{-\frac{1}{2}}$ are positive and $O_p(1)$. By Lemma \ref{lm8.4}, the largest eigenvalue of $\mathrm{Y}^T \big(P_{\bm{a}\cup\bm{t}} - P_{\bm{t}} \big) \mathrm{Y}$ is positive and $O_p(1)$. Since $g=O(n)$ and $d=O(1)$, we have the largest eigenvalue of $\frac{1}{d}\frac{g}{g+1} \big(\frac{1}{n} A_{\bm{t}}\big)^{-\frac{1}{2}} \big\{\frac{1}{n}\mathrm{Y}^T \big(P_{\bm{a} \cup \bm{t}} - P_{\bm{t}} \big) \mathrm{Y} \big\} \big(\frac{1}{n} A_{\bm{t}}\big)^{-\frac{1}{2}}$ is positive $O_p(\frac{1}{n})$. Therefore,
	\begin{equation*}
	\Delta(\bm{a}\cup\bm{t}, \bm{t})
	= \bigg|I_q +\frac{1}{d}\frac{g}{g+1} \bigg(\frac{1}{n} A_{\bm{t}}\bigg)^{-\frac{1}{2}} \bigg\{\frac{1}{n}\mathrm{Y}^T \big(P_{\bm{t}} - P_{\bm{a} \cup \bm{t}} \big) \mathrm{Y}\bigg\} \bigg(\frac{1}{n} A_{\bm{t}}\bigg)^{-\frac{1}{2}}\bigg| \succeq \bigg\{1-O_p\bigg(\frac{1}{n}\bigg)\bigg\}^h,
	\end{equation*}
	where $h$ is the number of nonzero eigenvalues of matrix $\frac{1}{d}\frac{g}{g+1}\Big(\frac{1}{n} A_{\bm{t}}\Big)^{-\frac{1}{2}} \Big\{\frac{1}{n}\mathrm{Y}^T \big(P_{\bm{t}} - P_{\bm{a} \cup \bm{t}} \big) \mathrm{Y}\Big\} \Big(\frac{1}{n} A_{\bm{t}}\Big)^{-\frac{1}{2}}$. Since the result also holds for every clique and separator, we have
	\begin{equation*}
	\RN{2}(\bm{a}\cup\bm{t}, \bm{t})
	=\frac
	{\prod_{C\in\mathscr{C}} {\big\{ \Delta_C (\bm{a}\cup\bm{t}, \bm{t}) \big\}}^{-\frac{b+n+|C|-1}{2}} }
	{\prod_{S\in\mathscr{S}} {\big\{ \Delta_S (\bm{a}\cup\bm{t}, \bm{t}) \big\}}^{-\frac{b+n+|S|-1}{2}} }
	\preceq \frac{\prod_{C\in\mathscr{C}}  \big\{1-O_p\big(\frac{1}{n}\big)\big\}^{-O(n)}}{\big\{1-O_p\big(\frac{1}{n}\big)\big\}^{O(n)}}
	= \bigg\{1-O_p\bigg(\frac{1}{n}\bigg)\bigg\}^{-O(n)},
	\end{equation*}
	where $\theta$ is a constant. So $\RN{2}(\bm{a}\cup\bm{t}, \bm{t}) \rightarrow$ some constant, as $n\rightarrow\infty$. Then
	\begin{equation*}
	\mathrm{BF}(\bm{a}\cup\bm{t},\bm{t}|G) = \RN{1} \times \RN{2}(\bm{a}\cup\bm{t}, \bm{t}) = \{O(n)+1\}^{-\frac{(p_{\bm{a}\cup \bm{t}}-p_{\bm{t}})(k+1)q}{2}} \times \text{constant} \rightarrow 0.
	\end{equation*}

	Case 2: If $p_{\bm{a}}$ is unbounded, similarly, we have
	\begin{equation*}
	\Delta(\bm{a}\cup\bm{t}, \bm{t})
	= \bigg|I_q +\frac{1}{d}\frac{g}{g+1} \bigg(\frac{1}{n} A_{\bm{t}}\bigg)^{-\frac{1}{2}} \bigg\{\frac{1}{n}\mathrm{Y}^T \big(P_{\bm{t}} - P_{\bm{a} \cup \bm{t}} \big) \mathrm{Y}\bigg\} \bigg(\frac{1}{n} A_{\bm{t}}\bigg)^{-\frac{1}{2}}\bigg| \succeq \bigg\{1-O_p\bigg(\frac{r_{\bm{a} \cup \bm{t}}}{n}\bigg)\bigg\}^h
	\end{equation*}
	and
	\begin{equation*}
	\RN{2}(\bm{a}\cup\bm{t}, \bm{t})
	=\frac
	{\prod_{C\in\mathscr{C}} {\big\{ \Delta_C (\bm{a}\cup\bm{t}, \bm{t}) \big\}}^{-\frac{b+n+|C|-1}{2}} }
	{\prod_{S\in\mathscr{S}} {\big\{ \Delta_S (\bm{a}\cup\bm{t}, \bm{t}) \big\}}^{-\frac{b+n+|S|-1}{2}} }
	\preceq \frac{\prod_{C\in\mathscr{C}}  \big\{1-O_p\big(\frac{r_{\bm{a} \cup \bm{t}}}{n}\big)\big\}^{-O(n)}}{\big\{1-O_p\big(\frac{r_{\bm{a} \cup \bm{t}}}{n}\big)\big\}^{O(n)}}
	= \bigg\{1-O_p\bigg(\frac{r_{\bm{a} \cup \bm{t}}}{n}\bigg)\bigg\}^{-O(n)}.
	\end{equation*}
	Then,
	\begin{align*}
	log\{\mathrm{BF}(\bm{a}\cup\bm{t},\bm{t}|G)\} 
	& \preceq -\frac{(p_{\bm{a}\cup \bm{t}}-p_{\bm{t}})(k+1)q}{2} log\{O(n)+1\} - O(n) log \bigg\{1-O_p\bigg(\frac{r_{\bm{a} \cup \bm{t}}}{n}\bigg)\bigg\}\\
	& \preceq -O_p(r_{\bm{a} \cup \bm{t}}) log\{O(n)+1\} - O_p(r_{\bm{a} \cup \bm{t}}) O_p\bigg(\frac{n}{r_{\bm{a} \cup \bm{t}}}\bigg)log \bigg\{1-O_p\bigg(\frac{r_{\bm{a} \cup \bm{t}}}{n}\bigg)\bigg\}\\
	& = -O_p(r_{\bm{a} \cup \bm{t}}) log\{O(n)+1\} \rightarrow -\infty (\text {as } log(1+x)/x \rightarrow 1, \text { as } x \rightarrow 0).
	\end{align*}
	Therefore, $\mathrm{BF}(\bm{a}\cup\bm{t};\bm{t}|G) \rightarrow 0$, as $n\rightarrow\infty$.
\end{proof}

\begin{lemma}\label{lm8.7}
Under Condition \ref{cond1}-\ref{cond6}, $\mathrm{p}\lim_{n\rightarrow\infty} \mathrm{BF}(\bm{a};\bm{a}\cup\bm{t}|G) = 0$, when $\bm{t}\nsubseteq\bm{a}$.
\end{lemma}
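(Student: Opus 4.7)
The plan follows the same architecture as Lemma \ref{lm8.6}, with the roles of $\bm{t}$ and $\bm{a}\cup\bm{t}$ swapped. By (\ref{II}), write $\mathrm{BF}(\bm{a};\bm{a}\cup\bm{t}|G)=\mathrm{I}\cdot\mathrm{II}(\bm{a},\bm{a}\cup\bm{t})$. Since $\bm{a}\subseteq\bm{a}\cup\bm{t}$, the factor $\mathrm{I}=(g+1)^{(p_{\bm{a}\cup\bm{t}}-p_{\bm{a}})(k+1)q/2}$ is at least one, but under Conditions \ref{cond4} and \ref{cond6} it satisfies $\log\mathrm{I}=O((p_{\bm{a}\cup\bm{t}}-p_{\bm{a}})\log n)=o(n)$. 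Hence it suffices to show $\log\mathrm{II}(\bm{a},\bm{a}\cup\bm{t})\le -cn$ for some fixed $c>0$ with probability tending to one.

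The driving structural fact is that when $\bm{t}\nsubseteq\bm{a}$, the increment $S(\bm{a})-S(\bm{a}\cup\bm{t})=\tfrac{g}{g+1}\mathrm{Y}^T(P_{\bm{a}\cup\bm{t}}-P_{\bm{a}})\mathrm{Y}$ is PSD and of order $n$ in trace. Extending Lemma \ref{lm8.2} to the richer set $\bm{a}\cup\bm{t}$ (which contains the true covariates, so $P_{\bm{a}\cup\bm{t}}$ annihilates the mean), one obtains $(dI+S(\bm{a}\cup\bm{t}))/n\xrightarrow{p}\Sigma_{G^*}$, while the signal-plus-noise decomposition of $\mathrm{Y}$ combined with a Lemma \ref{lm8.1}-type computation gives $(dI+S(\bm{a}))/n\xrightarrow{p}\Sigma_{G^*}+R^0$ with $R^0:=\mathrm{p}\lim n^{-1}\mathrm{E}_y^T(I_n-P_{\bm{a}})\mathrm{E}_y$ PSD and $tr(R^0)\ge C_0>0$ by Condition \ref{cond3}. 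The same convergences hold restricted to every clique $C$ and separator $S$ of $G$, so by the continuous mapping theorem
\begin{equation*}
\sum_{C}\log\Delta_C-\sum_{S}\log\Delta_S \xrightarrow{p} \Bigl[\sum_{C}\log|(\Sigma_{G^*}+R^0)_C|-\sum_{S}\log|(\Sigma_{G^*}+R^0)_S|\Bigr]-\Bigl[\sum_{C}\log|\Sigma_{G^*,C}|-\sum_{S}\log|\Sigma_{G^*,S}|\Bigr].
\end{equation*}

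The remaining task is to show this deterministic limit is strictly positive. Using the decomposable-graph identity $\sum_C\log|M_C|-\sum_S\log|M_S|=\log|[M]_G|$, where $[M]_G$ denotes the unique PD matrix whose inverse has the graph structure of $G$ and whose clique marginals agree with those of $M$, together with the fact from Condition \ref{cond5} and the explicit MLE formula given there that $[M^0_{\bm\gamma}]_G=\Sigma^0_{\bm\gamma,G}$, the displayed limit equals $\log|\Sigma^0_{\bm{a},G}|-\log|\Sigma^0_{\bm{a}\cup\bm{t},G}|$. By the variational characterization of the HIW MLE, $\Sigma^0_{\bm\gamma,G}$ minimizes $-\log|\Sigma^{-1}|+tr(\Sigma^{-1}M^0_{\bm\gamma})$ over positive-definite $\Sigma$ with $\Sigma^{-1}$ in the graph model of $G$, attaining minimum value $\log|\Sigma^0_{\bm\gamma,G}|+q$ (using the junction-tree identity $\sum_C|C|-\sum_S|S|=q$). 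Because $M^0_{\bm{a}}-M^0_{\bm{a}\cup\bm{t}}=R^0$ is PSD and nonzero, $tr(\Sigma^{-1}R^0)>0$ strictly for every feasible PD $\Sigma^{-1}$, so the minimum under $\bm{a}$ strictly exceeds that under $\bm{a}\cup\bm{t}$, giving $c_0:=\log|\Sigma^0_{\bm{a},G}|-\log|\Sigma^0_{\bm{a}\cup\bm{t},G}|>0$. Multiplying by $(b+n+|C|-1)/2\asymp n/2$ yields $\log\mathrm{II}\le-(c_0/4)n+O(\log n)$, and combined with $\log\mathrm{I}=o(n)$ we conclude $\mathrm{BF}\to 0$ in probability.

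The main obstacle is the clique-level tightness: one needs to verify that the convergence results of Lemmas \ref{lm8.2}--\ref{lm8.3}, stated for the full graph, carry over to every clique $C$ and separator $S$ in the HIW factorization, and that the $O(\log n)$ lower-order corrections from the $|C|-1$ and $|S|-1$ factors do not upset the conclusion. These follow by essentially the same chi-square and projection arguments applied to the subvectors $\mathrm{Y}_C$ and $\mathrm{Y}_S$; once they are in place, the strict-positivity step is a clean consequence of the variational characterization of HIW MLEs.
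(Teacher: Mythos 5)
Your argument is correct in outline but follows a genuinely different route from the paper's. The paper never computes the limit of $n^{-1}\log\mathrm{BF}(\bm{a};\bm{a}\cup\bm{t}|G)$; instead it divides numerator and denominator by the likelihood maximized over $\Sigma_G$ under $\bm{a}$, bounds the numerator integrand by one, and lower-bounds the denominator by restricting the integral to a small prior-positive neighborhood $Nb(\epsilon')$ of ${\Sigma_G^0}^{-1}$ (Condition \ref{cond5}), on which Lemma \ref{lm8.5} forces the exponent $-\frac{1}{2}tr\{S(\bm{a}\cup\bm{t})\Sigma_G^{-1}-S(\bm{a})\hat\Sigma_G^{-1}\}$ to exceed $\widetilde{C}n$ with probability tending to one; choosing $\epsilon$ with $\log(1+\epsilon)<\widetilde{C}$ finishes. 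Your route instead evaluates the closed-form ratio $\mathrm{II}(\bm{a},\bm{a}\cup\bm{t})$ directly, identifies $\mathrm{p}\lim\, n^{-1}\log\mathrm{II}$ as a difference of log-determinants of graph-completions, and certifies strict negativity by the variational characterization of the decomposable-model MLE; this is cleaner and quantitatively sharper (it exhibits the exponential rate $c_0$ explicitly), and both proofs ultimately draw their strength from the same source, namely Condition \ref{cond3} via the order-$n$ positive-semidefinite increment $S(\bm{a})-S(\bm{a}\cup\bm{t})$. Three small repairs to your version: (i) $\log\mathrm{I}=o(n)$ follows because $p_{\bm{a}\cup\bm{t}}-p_{\bm{a}}\le p_{\bm{t}}$ is bounded, not from Condition \ref{cond6} alone, since $p=o(n)$ does not imply $p\log n=o(n)$; (ii) Condition \ref{cond3} gives only a trace lower bound, not existence of the full-matrix limit $R^0$, so you need Condition \ref{cond5} (as you note) or, more robustly, the bound $tr(\Sigma^{-1}R^0_n)\ge\lambda_{\min}(\Sigma^{-1})\,tr(R^0_n)\ge\lambda_{\min}(\Sigma^{-1})\,C_0$, which sidesteps taking a limit of $R^0_n$ altogether; (iii) when $p_{\bm{a}}$ grows with $n$ the ``fixed'' quantities $\Sigma^0_{\bm{a},G}$ and $c_0$ are really $n$-dependent, so the uniform positivity of $c_0$ should be traced back to the uniform infimum in Condition \ref{cond3}, which is exactly what the paper's Lemma \ref{lm8.5} packages.
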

\begin{proof}
Let $\hat{\Sigma}_G^{-1}$ be the MLE of $\Sigma_G^{-1}$ under model $\bm{a}$, then $f(\mathrm{Y}|\bm{a}, \Sigma_G)\leq f(\mathrm{Y}|\bm{a}, \hat{\Sigma}_G)$ for any positive definite matrix $\Sigma_G$ under the given graph $G$. The explicit calculation of the MLE can be done by calculating the MLEs of cliques, separators and combining them. We assume that the MLE converges to a positive definite matrix ${\Sigma_G^0}^{-1}$. For the true graph $G^*$, this statement holds trivially. Under supremum norm for each clique and separator, given $0<\epsilon<1$, we have a $\epsilon'$-neighborhood $Nb(\epsilon')$ of ${\Sigma_G^0}^{-1}$, where $0<\epsilon'<\epsilon$, which satisfies $Nb(\epsilon') = \{ \Sigma_G^{-1} : \|\Sigma_G^{-1}-{\Sigma_G^0}^{-1}\|_{\infty}<\epsilon'\}$ and $Pr\{Nb(\epsilon')\}>\delta'>0$ under HIW  prior, such that $|{\Sigma}_G^{-1}\Sigma_G^0|<1+\epsilon$, $|\Sigma_G{\Sigma_G^0}^{-1}|<1+\epsilon$. For large $n$, we also have $\hat{\Sigma}_G^{-1}\in Nb(\epsilon')$. So $|\hat{\Sigma}_G^{-1}\Sigma_G^0|<1+\epsilon$, $|\hat{\Sigma}_G{\Sigma_G^0}^{-1}|<1+\epsilon$ and $\|\hat{\Sigma}_G^{-1}-{\Sigma_G^0}^{-1}\|_{\infty}<\epsilon'$. 

Now dividing numerator and denominator of $\mathrm{BF}(\bm{a};\bm{a}\cup\bm{t}|G)$ by $f(\mathrm{Y}|\bm{a}, \hat{\Sigma}_G)$, the likelihood at MLE under model $\bm{a}$,
\begin{align*}
\mathrm{BF}(\bm{a};\bm{a}\cup\bm{t}|G)
& = \frac{\int f(\mathrm{Y}|\bm{a}, \Sigma_G)f(\Sigma_G|G)d\Sigma_G}{\int f(\mathrm{Y}|\bm{a}\cup\bm{t}, \Sigma_G)f(\Sigma_G|G)d\Sigma_G} \\
& = \frac{\int\frac{f(\mathrm{Y}|\bm{a}, \Sigma_G)}{f(\mathrm{Y}|\bm{a}, \hat{\Sigma}_G)} f(\Sigma_G|G)d\Sigma_G}{\int\frac{f(\mathrm{Y}|\bm{a}\cup\bm{t}, \Sigma_G)}{f(\mathrm{Y}|\bm{a}, \hat{\Sigma}_G)} f(\Sigma_G|G)d\Sigma_G} \\
& < \frac{\int f(\Sigma_G|G)d\Sigma_G}{\int_{Nb(\epsilon')}\frac{f(\mathrm{Y}|\bm{a}\cup\bm{t}, \Sigma_G)}{f(\mathrm{Y}|\bm{a}, \hat{\Sigma}_G)} f(\Sigma_G|G)d\Sigma_G} \\
& = \frac{(g+1)^{\frac{(p_{\bm{a}\cup\bm{t}}-p_{\bm{a}})(k+1)q}{2}}}
{\int_{Nb(\epsilon')}|\Sigma_G{\hat{\Sigma}_G}^{-1}|^{-\frac{n}{2}}
exp\big[-\frac{1}{2}tr\{S(\bm{a}\cup\bm{t})\Sigma_G^{-1}-S(\bm{a})\hat{\Sigma}_G^{-1} \}\big]
 f(\Sigma_G|G)d\Sigma_G} \\ 
& = \frac{(g+1)^{\frac{(p_{\bm{a}\cup\bm{t}}-p_{\bm{a}})(k+1)q}{2}}}
{\int_{Nb(\epsilon')}|\Sigma_G{\Sigma_G^0}^{-1}|^{-\frac{n}{2}} |\Sigma_G^0{\hat{\Sigma}_G}^{-1}|^{-\frac{n}{2}}
exp\big[-\frac{1}{2}tr\{S(\bm{a}\cup\bm{t})\Sigma_G^{-1}-S(\bm{a})\hat{\Sigma}_G^{-1} \}\big]
 f(\Sigma_G|G)d\Sigma_G} \\
& = \frac{(g+1)^{\frac{(p_{\bm{a}\cup\bm{t}}-p_{\bm{a}})(k+1)q}{2}} (1+\epsilon)^n}
{\int_{Nb(\epsilon')}
exp\big[-\frac{1}{2}tr\{S(\bm{a}\cup\bm{t})\Sigma_G^{-1}-S(\bm{a})\hat{\Sigma}_G^{-1} \}\big]
f(\Sigma_G|G)d\Sigma_G}.
\end{align*}

Next, let $\alpha$ be a $q \times 1$ vector, where $\alpha\in\mathbb{R}^q$, such that $\alpha^T \frac{S(\bm{a})-S(\bm{a}\cup\bm{t})}{n} \alpha = \widetilde{\lambda}^{\bm{a}\cup\bm{t} - \bm{a}}_M$. Let $\beta = {\Sigma_G^0}^{-1/2}\alpha$ and $b_{\beta} = \| \beta \|_2^2 < \infty$. Denote $\lambda_M'$ be the largest eigenvalue of ${\Sigma_G^0}^{-1/2}\frac{S(\bm{a})-S(\bm{a}\cup\bm{t})}{n}{\Sigma_G^0}^{-1/2}$, then $\beta^T{\Sigma_G^0}^{-1/2}\frac{S(\bm{a})-S(\bm{a}\cup\bm{t})}{n}{\Sigma_G^0}^{-1/2}\beta = \widetilde{\lambda}^{\bm{a}\cup\bm{t} - \bm{a}}_M \leq \lambda_M'\| \beta \|_2^2$. By Lemma \ref{lm8.5}, $Pr(\lambda_M' > \bar{\bar{C}}/b_{\beta}) \rightarrow 1$, as $n\rightarrow\infty$.


By Lemma \ref{lm8.3} and Lemma \ref{lm8.5}, we have
\begin{equation*}
Pr\Big(\lambda_M'
-\Big|tr\Big\{\frac{S(\bm{a}\cup\bm{t})}{n}(\hat{\Sigma}_G^{-1}-{\Sigma_G^0}^{-1})\Big\}\Big|
-\Big|tr\Big\{ \frac{S(\bm{a})}{n}({\Sigma_G^{0}}^{-1}-\hat{\Sigma}_G^{-1}) \Big\}\Big| > \bar{\bar{C}}/b_{\beta} - 2q\epsilon\bar{C} \Big) \rightarrow 1.
\end{equation*}
Then, by choosing $\epsilon < \frac{\bar{\bar{C}}}{2qb_{\beta}\bar{C}}$, we know $\lambda_M'
-\Big|tr\Big\{\frac{S(\bm{a}\cup\bm{t})}{n}(\hat{\Sigma}_G^{-1}-{\Sigma_G^0}^{-1})\Big\}\Big|
-\Big|tr\Big\{ \frac{S(\bm{a})}{n}({\Sigma_G^{0}}^{-1}-\hat{\Sigma}_G^{-1}) \Big\}\Big| > \bar{\bar{C}}/(2b_{\beta})$ in probability.

So, we have
\begin{align*}
& -\frac{1}{2}tr\{S(\bm{a}\cup\bm{t})\Sigma_G^{-1}-S(\bm{a})\hat{\Sigma}_G^{-1} \}\\
& = \frac{n}{2}\bigg[tr\Big\{{\Sigma_G^0}^{-1}\frac{S(\bm{a})-S(\bm{a}\cup\bm{t})}{n}\Big\}
-tr\Big\{\frac{S(\bm{a}\cup\bm{t})}{n}(\hat{\Sigma}_G^{-1}-{\Sigma_G^0}^{-1})\Big\}
-tr\Big\{ \frac{S(\bm{a})}{n}({\Sigma_G^{0}}^{-1}-\hat{\Sigma}_G^{-1}) \Big\}\bigg]\\
& \geq \frac{n}{2}\bigg[tr\Big\{{\Sigma_G^0}^{-1/2}\frac{S(\bm{a})-S(\bm{a}\cup\bm{t})}{n}{\Sigma_G^0}^{-1/2}\Big\}
-\Big|tr\Big\{\frac{S(\bm{a}\cup\bm{t})}{n}(\hat{\Sigma}_G^{-1}-{\Sigma_G^0}^{-1})\Big\}\Big|
-\Big|tr\Big\{ \frac{S(\bm{a})}{n}({\Sigma_G^{0}}^{-1}-\hat{\Sigma}_G^{-1}) \Big\}\Big|\bigg]\\
& \geq \frac{n}{2}\bigg[\lambda_M'
-\Big|tr\Big\{\frac{S(\bm{a}\cup\bm{t})}{n}(\hat{\Sigma}_G^{-1}-{\Sigma_G^0}^{-1})\Big\}\Big|
-\Big|tr\Big\{ \frac{S(\bm{a})}{n}({\Sigma_G^{0}}^{-1}-\hat{\Sigma}_G^{-1}) \Big\}\Big|\bigg].
\end{align*}
Then $Pr\Big\{ -\frac{1}{2}tr\{S(\bm{a}\cup\bm{t})\Sigma_G^{-1}-S(\bm{a})\hat{\Sigma}_G^{-1} \} > \widetilde{C}n \Big\} \rightarrow 1$, as $n\rightarrow\infty$, where $\widetilde{C}$ is some fixed constant.

Since $BF_{\bm{a}, \bm{a}\cup\bm{t}}< \frac{(g+1)^{\frac{(p_{\bm{a}\cup\bm{t}}-p_{\bm{a}})(k+1)q}{2}} (1+\epsilon)^n}{\int_{Nb(\epsilon')}	exp\big[-\frac{1}{2}tr\{S(\bm{a}\cup\bm{t})\Sigma_G^{-1}-S(\bm{a})\hat{\Sigma}_G^{-1} \}\big]f(\Sigma_G|G)d\Sigma_G}$, then $Pr\big\{ BF_{\bm{a}, \bm{a}\cup\bm{t}} < (g+1)^{\frac{p_{\bm{t}}(k+1)q}{2}} (1+\epsilon)^n e^{-\widetilde{C}n}\big\} \rightarrow 1$. Therefore, $\mathrm{p}\lim_{n\rightarrow\infty} \mathrm{BF}(\bm{a};\bm{a}\cup\bm{t}|G) = 0$.
\end{proof}

By combining the results from Lemma \ref{lm8.6} and \ref{lm8.7}, we have 
\begin{equation*}
\mathrm{p}\lim_{n\rightarrow\infty} \mathrm{BF}(\bm{a};\bm{t}|G) = \mathrm{p}\lim_{n\rightarrow\infty} \mathrm{BF}(\bm{a};\bm{a}\cup\bm{t}|G) \cdot \mathrm{p}\lim_{n\rightarrow\infty} \mathrm{BF}(\bm{a}\cup\bm{t};\bm{t}|G) = 0,
\end{equation*}
for any model $\bm{a}\neq\bm{t}$.

\section*{Appendix 3.}

\begin{figure}[H] \label{sApop}
	\centering
	\includegraphics[width=1\textwidth]{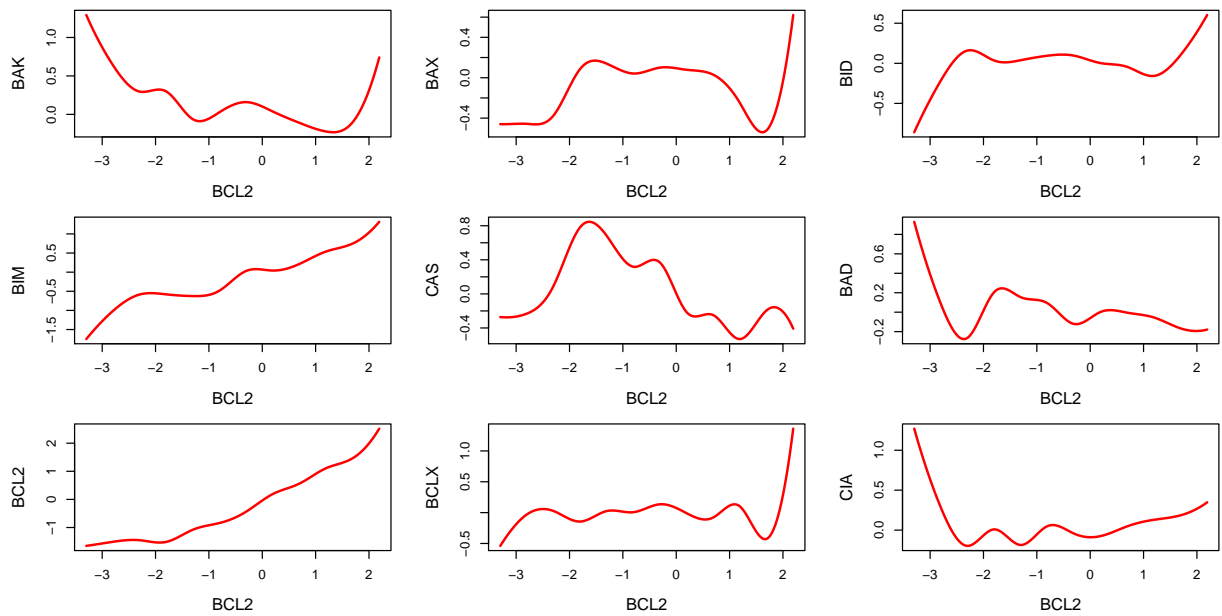}
	\caption{posterior mean of the nonlinear functions for proteins in apoptosis pathway, mRNA selected is BCL2.}
\end{figure}

\begin{figure}[H] \label{sCell}
	\centering
	\includegraphics[width=1\textwidth]{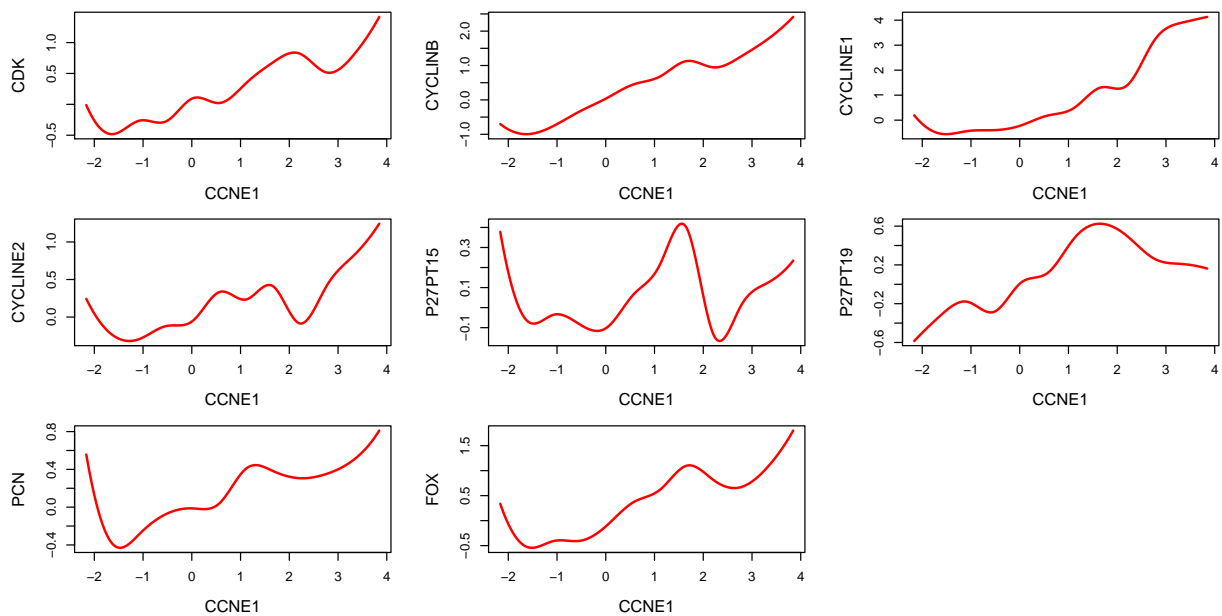}
	\caption{posterior mean of the nonlinear functions for proteins in cell cycle pathway, mRNA selected is CCNE1.}
\end{figure}

\begin{figure}[H] \label{sCore}
	\centering
	\includegraphics[width=1\textwidth]{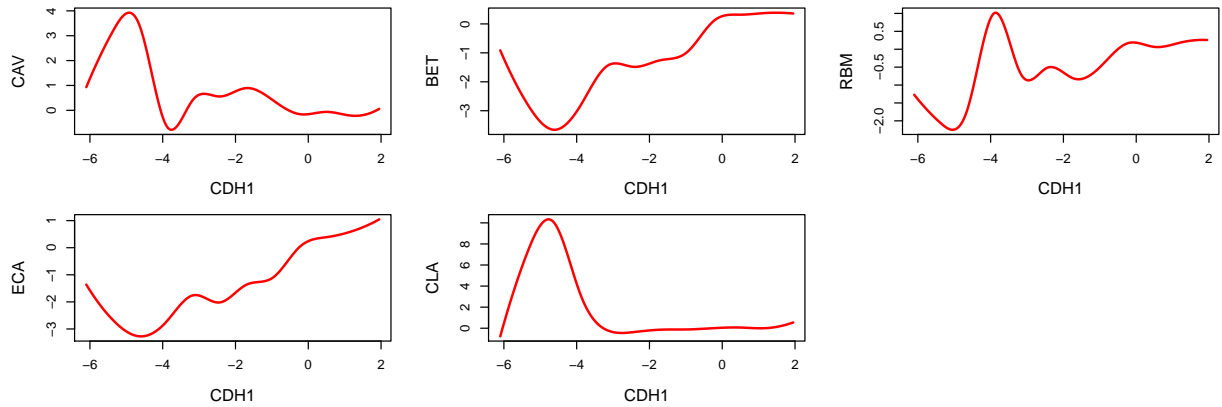}
	\caption{posterior mean of the nonlinear functions for proteins in core reactive pathway, mRNA selected is CDH1.}
\end{figure}

\begin{figure}[H] \label{sEMT}
	\centering
	\includegraphics[width=1\textwidth]{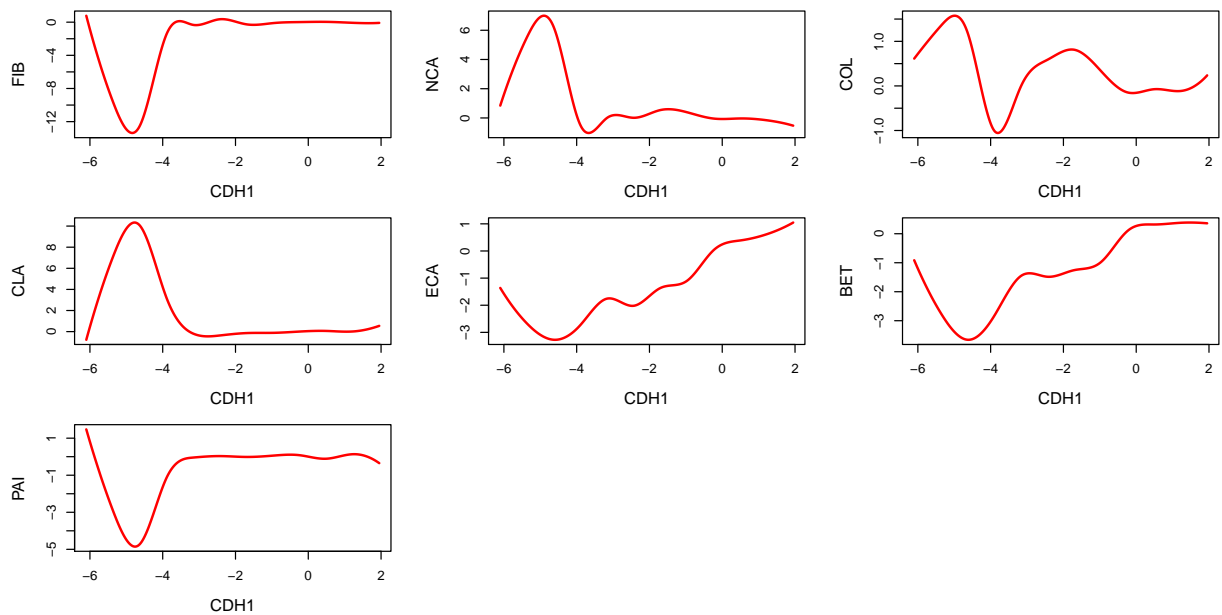}
	\caption{posterior mean of the nonlinear functions for proteins in EMT pathway, mRNA selected is CDH1.}
\end{figure}

\begin{figure}[H] \label{sPI3K}
	\centering
	\includegraphics[width=1\textwidth]{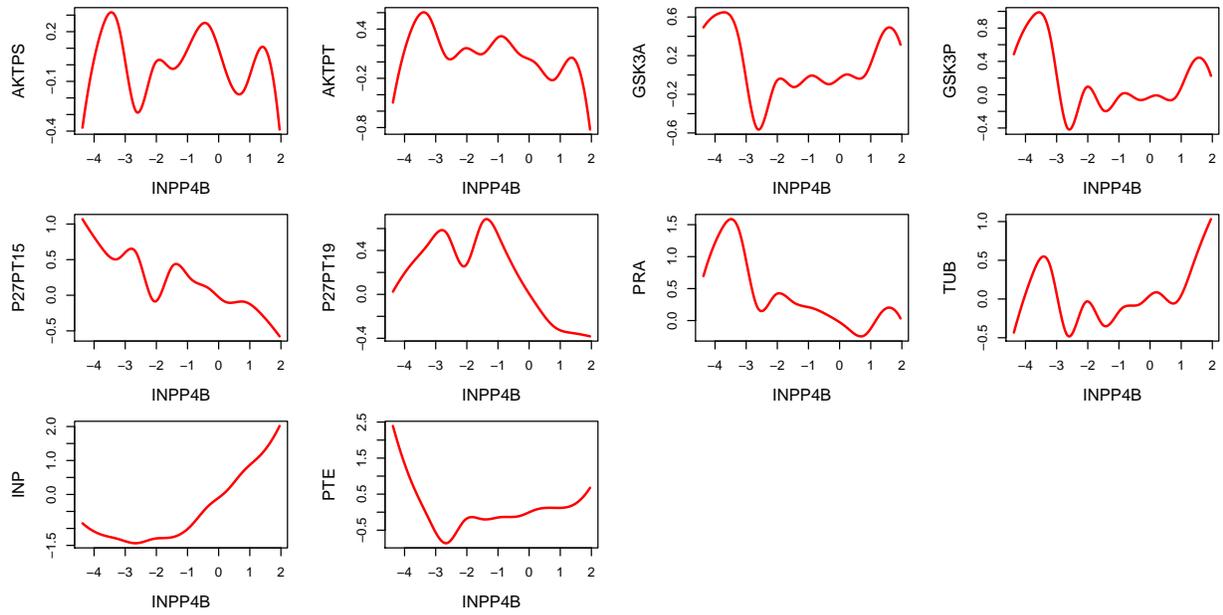}
	\caption{posterior mean of the nonlinear functions for proteins in PI3K/AKT pathway, mRNA selected is INPP4B.}
\end{figure}

\begin{figure}[H] \label{sRTK}
	\centering
	\includegraphics[width=1\textwidth]{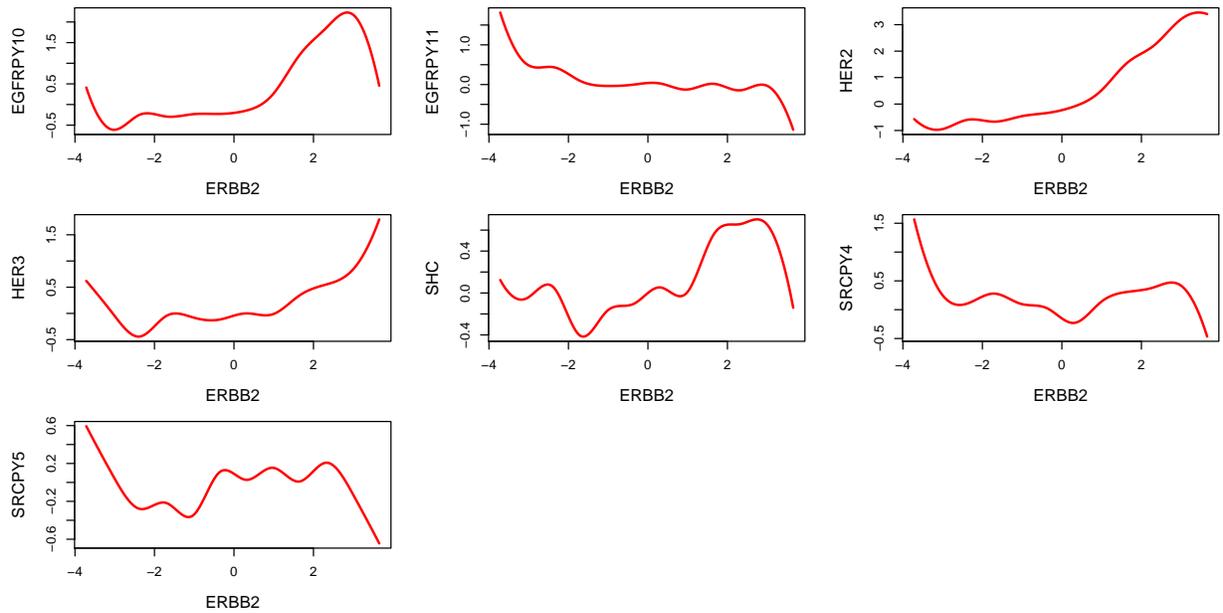}
	\caption{posterior mean of the nonlinear functions for proteins in RTK pathway, mRNA selected is ERBB2.}
\end{figure}

\begin{figure}[H] \label{sHorm1}
	\centering
	\includegraphics[width=1\textwidth]{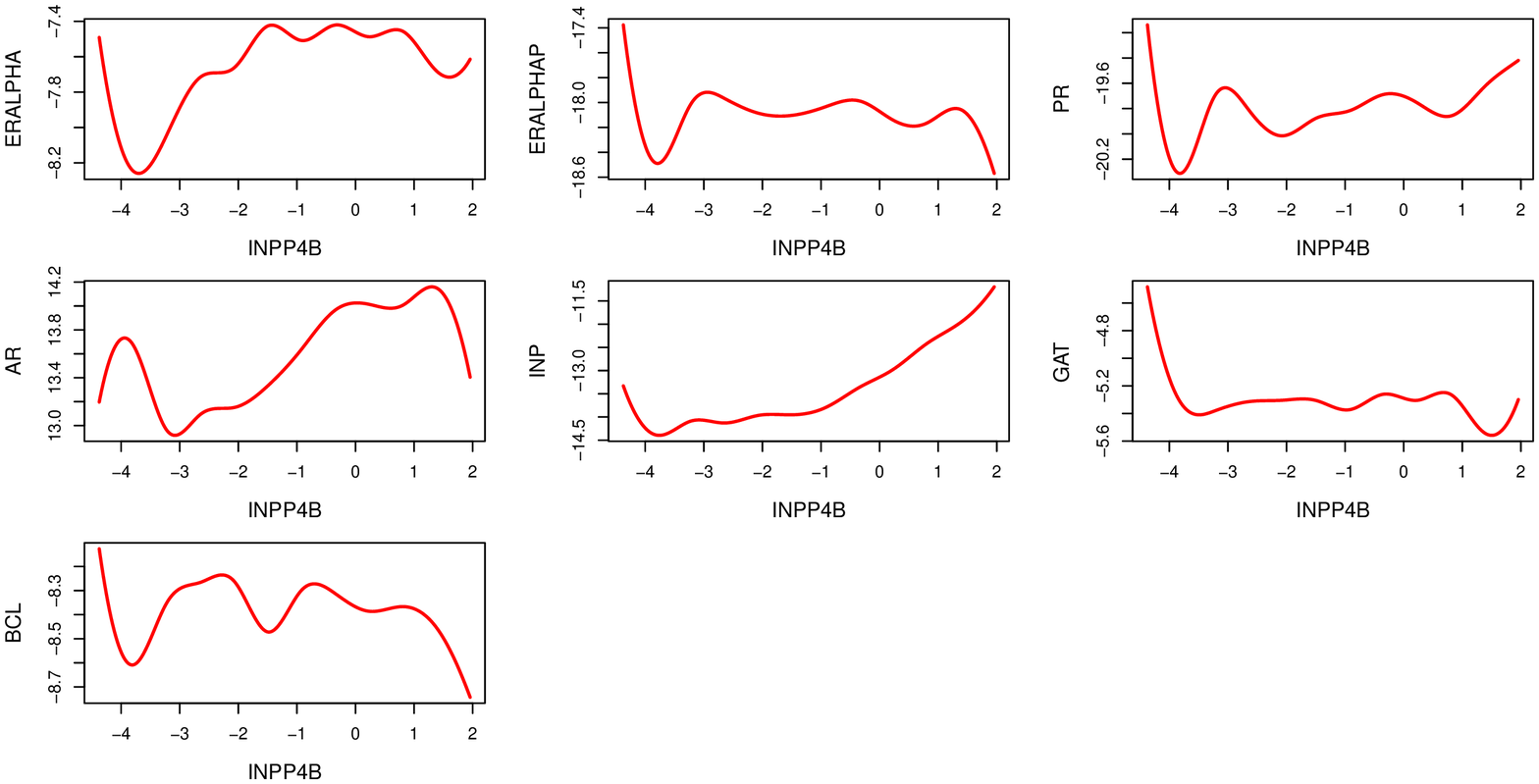}
	\caption{posterior mean of the nonlinear functions for proteins in hormone receptor\&signaling pathway, mRNA selected is INPP4B.}
\end{figure}

\begin{figure}[H] \label{sHorm2}
	\centering
	\includegraphics[width=1\textwidth]{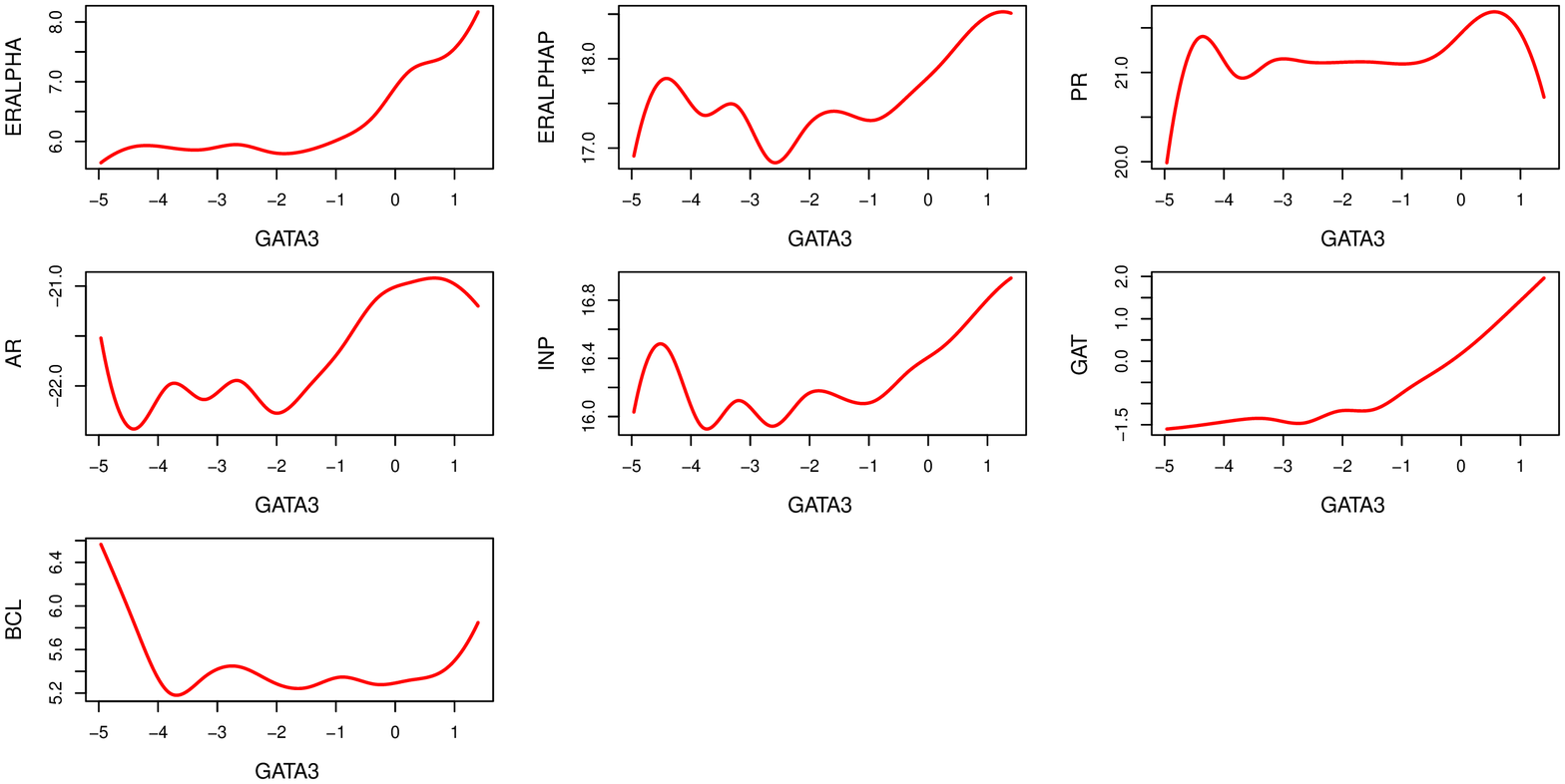}
	\caption{posterior mean of the nonlinear functions for proteins in hormone receptor\&signaling pathway, mRNA selected is GATA3.}
\end{figure}

\begin{figure}[H] \label{sHorm3}
	\centering
	\includegraphics[width=1\textwidth]{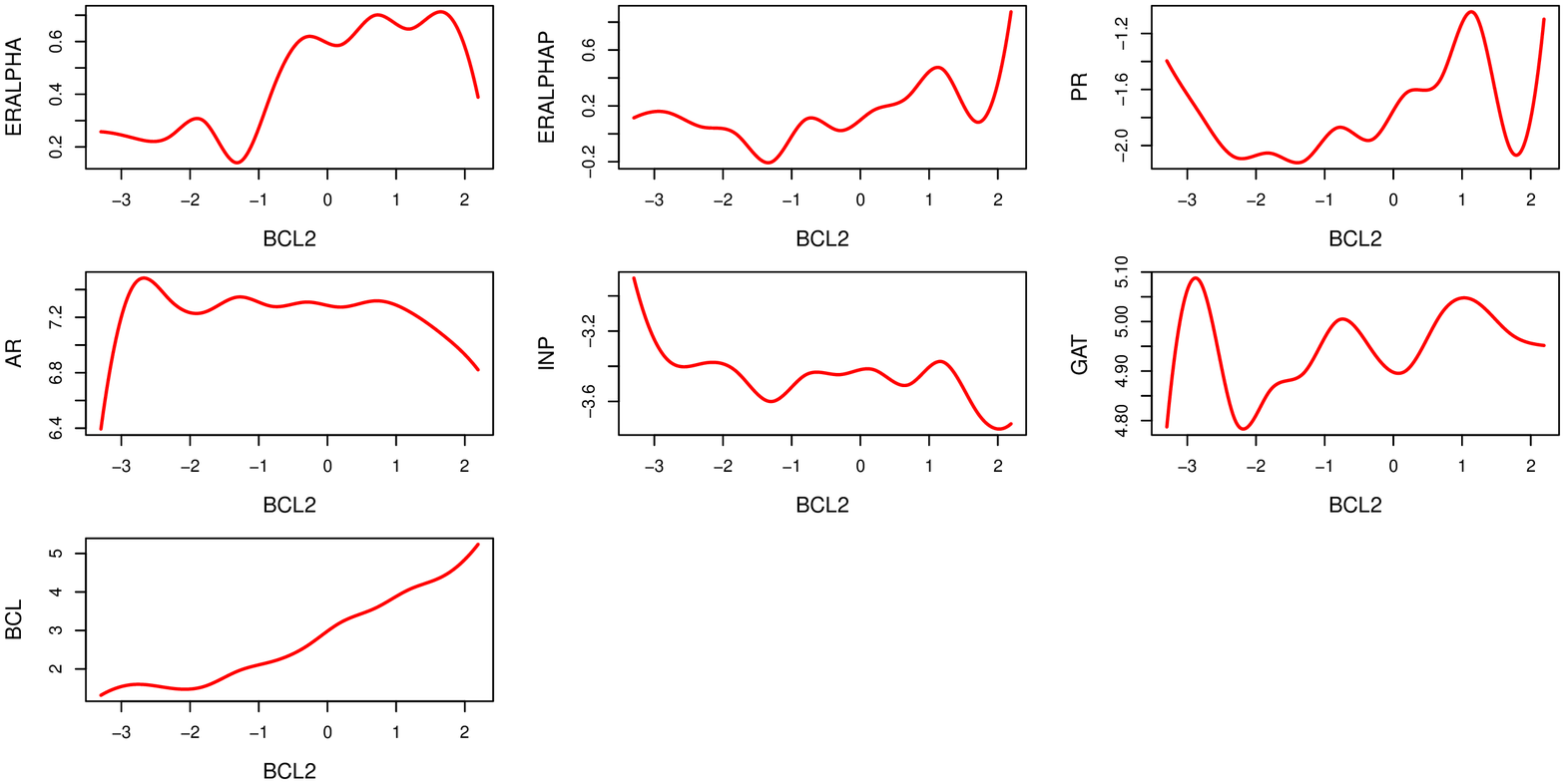}
	\caption{posterior mean of the nonlinear functions for proteins in hormone receptor\&signaling pathway, mRNA selected is BCL2.}
\end{figure}

\end{document}